\newcommand{\REMARK}[2]{}
\newcommand{\daniel}[1]{\REMARK{Daniel}{#1}}
\newcommand{\yael}[1]{\REMARK{Yael}{#1}}
\newcommand{\val}[1]{\REMARK{Val}{#1}}
\newcommand{\finish}[1]{\REMARK{finish}{#1}}
\newcommand{\eat}[1]{}
\newcommand{\bbB}{\ensuremath{\mathbb{B}}}
\newcommand{\bbS}{\ensuremath{\mathbb{S}}}
\newcommand{\bbD}{\ensuremath{\mathbb{D}}}
\newcommand{\bbN}{\ensuremath{\mathbb{N}}}
\newcommand{\bbZ}{\ensuremath{\mathbb{Z}}}
\newcommand{\bbR}{\ensuremath{\mathbb{R}}}
\newcommand{\NX}{\ensuremath{\bbN[X]}}
\newcommand{\ZX}{\ensuremath{\bbZ[X]}}
\newcommand{\Nat}{\mbox{$\mathbb{N}$}}
\newcommand{\ssum}{\mbox{$\sum\,$}}
\newcommand{\notimes}{\mbox{$\otimes$}}
\newcommand{\sM}{\!_M}
\newcommand{\sK}{\!_K}
\newcommand{\sKprime}{\!_{K'}}
\newcommand{\sbbN}{\!_\bbN}
\newcommand{\sbbB}{\!_\bbB}
\newcommand{\sbbZ}{\!_\bbZ}
\newcommand{\sW}{\!_W}
\newcommand{\sKM}{\!_{K\!\otimes\!M}}
\newcommand{\sSS}{\!_{\bbS}}
\newcommand{\supp}{\mbox{supp}}
\newcommand{\Tuples}[2]{#2^{#1}}
\newcommand{\BoolExp}{\mbox{BoolExp}}
\newcommand{\Rel}[1]{#1\mbox{-Rel}}
\newcommand{\hRel}[1]{{#1}_{Rel}}
\newcommand{\sMaxAgg}{\!_{\mbox{\tiny MAX}}}
\newcommand{\SumAgg}{\mbox{SUM}}
\newcommand{\SumAggWithSpace}{\mbox{SUM} }
\newcommand{\ProdAgg}{\mbox{PROD}}
\newcommand{\MaxAgg}{\mbox{MAX}}
\newcommand{\MinAgg}{\mbox{MIN}}
\newcommand{\Agg}{\mbox{AGG}}
\newcommand{\SPJUlastA}{\mbox{SPJU-A} }
\newcommand{\Set}[1]{#1\mbox{-Set}}
\newcommand{\SPJUlastAGB}{\mbox{SPJU-AGB} }
\newcommand{\SetAgg}{\mbox{SetAgg}}
\newtheorem{theorem}{Theorem}[section]
\newtheorem{lemma}[theorem]{Lemma}
\newtheorem{proposition}[theorem]{Proposition}
\newtheorem{definition}[theorem]{Definition}
\newtheorem{corollary}[theorem]{Corollary}
\newtheorem{example}[theorem]{Example}
\newcommand{\bdefine}{\begin{Definition}}
\newcommand{\edefine}{\unitend\end{Definition}}
\newtheorem{Definition}{Definition}[section]
\newcommand{\unitend}{\hfill$\boxtimes$}
\newcommand{\NXD}{\ensuremath{\bbN[X,\delta]}}
\newfont{\affil}{phvr8t at 8pt}
\title{Provenance for Aggregate Queries}
\author{ Yael Amsterdamer \\[1mm] {\affil University of Pennsylvania} \\ {\affil and Tel Aviv University}  \and Daniel Deutch \\[1mm] {\affil University of Pennsylvania} \\ {\affil and Ben Gurion University} \and Val Tannen \\[1mm] {\affil University of Pennsylvania} }
\begin{document}
\maketitle

\begin{abstract}
We study in this paper provenance information for queries with {\em aggregation}. Provenance information was studied in the context of various query languages that do not allow for aggregation, and recent work has suggested to capture provenance by annotating the different database tuples with elements of a {\em commutative semiring} and propagating the annotations through query evaluation. We show that aggregate queries pose novel challenges rendering this approach inapplicable. Consequently, we propose a new approach, where we annotate with provenance information not just tuples but also the {\em individual values} within tuples, using provenance to describe the values computation.  We realize this approach in a concrete construction, first for ``simple" queries where the aggregation operator is the last one applied, and then for arbitrary (positive) relational algebra queries with aggregation; the latter queries are shown to be more challenging in this context. Finally, we use aggregation to encode queries with {\em difference}, and study the semantics obtained for such queries on provenance annotated databases.
\end{abstract}

\section{Introduction}


The annotation of the results of database transformations with
provenance information has quite a few applications \cite{Userssemiring1,Trio,whynot,CheneyTan,Userssemiring3,CheneyProvenance,Userssemiring4,Userssemiring5,Userssemiring6,Sigmod10Boon,Soul,FA,Archer}.
Recent work~\cite{GKT-PODS07,FGT-PODS08,G-ICDT09} has proposed
a framework of {\em semiring annotations} that allows us to state
formally what is expected of such provenance information. These papers
have developed the framework for the positive fragment of the relational
algebra (as well as for Datalog, the positive Nested Relational Calculus,
and some query languages on trees/XML). The main goal of this paper is to
extend the framework to {\em aggregate operations}.

In the perspective promoted by these papers, {\em provenance} is a
general form of annotation information that can be specialized for
different purposes, such as multiplicity, trust, cost, security, or
identification of ``possible worlds'' which in turn applies to
incomplete databases, deletion propagation, and probabilistic
databases. In fact, the introduction of the framework in~\cite{GKT-PODS07}
was motivated by the need to track trust and deletion propagation in the
Orchestra system~\cite{GKIT-VLDB07}.
What makes such a diversity of applications possible
is that each is captured by a different semiring, while provenance
is represented by elements of a semiring of {\em polynomials}.
One then relies on the property that any semiring-annotation
semantics {\em factors} through the provenance polynomials semantics.
This means that storing provenance polynomials
allows for many other practical applications. For example, to capture access control, where the access to different tuples require different
security credentials, we can simply evaluate the polynomials in the {\em security semiring}, and propagate the security annotations through query evaluation (see Section~\ref{semiringsubsec}), assigning security levels to query results.

\eat{
In designing the extension of the framework we must ask what
properties are expected of the formalism. One such criterion, and in
our view a basic one, has to do with the {\em propagation of
  deletion} which becomes particularly simple in the semiring framework
because it conflates the absence of a tuple
with the {\em zero} annotation.
}
\begin{figure}
$$
\begin{array}{|ccc||l|}
\multicolumn{4}{c}{R} \\
\hline
\mathit{EmpId} & \mathit{Dept} & \mathit{Sal} & \\
\hline
\hline
1 & \mathsf{d_1} & 20 & p_1 \\
2 & \mathsf{d_1} & 10 & p_2 \\
3 & \mathsf{d_1} & 15 & p_3 \\
4 & \mathsf{d_2} & 10 & r_1 \\
5 & \mathsf{d_2} & 15 & r_2 \\
\hline
\multicolumn{4}{c}{} \\
\multicolumn{4}{c}{(a)}
\end{array}
~~~~
\begin{array}{|c||l|}
\hline
\mathit{Dept} & \\
\hline
\hline
\mathsf{d_1} & p_1+p_2+p_3 \\
\mathsf{d_2} & r_1+r_2 \\
\hline
              \multicolumn{2}{c}{} \\
\multicolumn{2}{c}{(b)}
\end{array}
$$
\vspace{-6mm}
\caption{Projection on annotated relations}
\label{AnnotProj}
\vspace{-6mm}
\end{figure}
Let us briefly illustrate {\em deletion propagation} as an application of
provenance. Consider a simple example of an employee/department/salary
relation $R$ shown in Figure~\ref{AnnotProj}(a).

\eat{(This is a rank/salary structure at the Free University of Fredonia
and the salaries are expressed in the local currency, kilo-shyssels.)}
The variables $p_1,p_2,p_3,r_1,r_2$ can be thought of as tuple identifiers
and in the framework of {\em provenance polynomials}~\cite{GKT-PODS07}
they are the ``provenance tokens'' or ``indeterminates'' out of which
provenance is built.
We denote by $\NX$ the set of provenance polynomials
(here $X=\{p_1,p_2,p_3,r_1,r_2\}$). $R$ can be seen as an $\NX$-annotated
relation; as defined in \cite{GKT-PODS07} the evaluation of query, for example $\Pi_{\mathit{Dept}}R$, produces another
$\NX$-annotated relation, in this example the one shown in
Figure~\ref{AnnotProj}(b). Intuitively, in this simple example, the summation in the annotation of every result tuple
is over the identifiers of its alternative origins \footnote{we explain how the annotations of query results are computed in Section \ref{semiringsubsec}}.

Now, the result of propagating the deletions of tuples with {\it EmpId}
3 and 5 in $R$ is obtained by simply setting $p_3=r_2=0$ in the
answer. We get the same two tuples in the query answer but their provenances change to
$p_1+p_2$ and $r_1$, respectively.  If the tuple with {\it EmpId} 4 is also
deleted from $R$ then we also set $r_1=0$, and the second tuple in the
answer is deleted because its provenance has now become 0. This
algebraic treatment of deletions is related to the counting
algorithm for view maintenance~\cite{GMS-SIGMOD93}, but is more
general as it incrementally maintains not just the data but also
the provenance.

An intuitive way of understanding what happens is that
provenance-aware evaluation of queries conveniently ``commutes'' with
deletions. In fact, in~\cite{GKT-PODS07,FGT-PODS08} this intuition is captured
formally by theorems that state that query evaluation commutes with semiring
homomorphisms. The factorization through provenance relies on this
and on the fact that the polynomial provenance semiring is ``freely
generated''. All applications of provenance polynomials we have listed,
for trust, security, etc., are based on these theorems.

Thus, commutation with homomorphisms is an essential criterion
for our proposed framework extension to aggregate operations.
However, in Section \ref{FirstAttempt} we prove that
the framework of semiring-annotated relations introduced in~\cite{GKT-PODS07}
cannot be extended to handle aggregation while both satisfying commutation
with homomorphisms and working as usual on set or bag relations.

\eat{Intuitively, this is because this framework can capture only the results of monotone queries.}

\eat{
We use the
particular case of deletion propagation to illuminate the basic
difficulties raised by this extension.
}

If the semiring operations are not enough then perhaps we can
add others? This is a natural idea so we illustrate it on the same $R$
in Figure~\ref{AnnotProj}(a) and we use again the necessity to
support deletion propagation to guide the approach.
Consider the query that groups by {\it Dept} and sums {\it Sal}.
The result of the summation depends on which tuples participate in it.
To provide enough information to obtain all the
possible summation results for all possible sets of deletions, we could
use the representation in Figure~\ref{ExpSize}(a)
\begin{figure}
\vspace{-5mm}
$$
\begin{array}{|cc||l|}
\hline
\mathit{Dept} & \mathit{SalMass} & \\
\hline
\hline
\mathsf{d_1} & 45 & p_1p_2p_3 \\
\mathsf{d_1} & 30 & p_1p_2\widehat{p_3} \\
\mathsf{d_1} & 35 & p_1\widehat{p_2}p_3 \\
\mathsf{d_1} & 25 & \widehat{p_1}p_2p_3\\
\mathsf{d_1} & 20 & p_1\widehat{p_2}\widehat{p_3} \\
\mathsf{d_1} & 10 & \widehat{p_1}p_2\widehat{p_3} \\
\mathsf{d_1} & 15 & \widehat{p_1}\widehat{p_2}p_3 \\
    \cdots & \cdots & \cdots \\
\hline
\multicolumn{3}{c}{} \\
\multicolumn{3}{c}{(a)}
\end{array}
~~~~
\begin{array}{|cc||l|}
\hline
\mathit{Dept} & \mathit{SalMass} & \\
\hline
\hline
\mathsf{d_1} & 30 & p_1p_2 \\
\mathsf{d_1} & 20 & p_1\widehat{p_2} \\
\mathsf{d_1} & 10 & \widehat{p_1}p_2 \\
    \cdots & \cdots & \cdots \\
\hline
\multicolumn{3}{c}{} \\
\multicolumn{3}{c}{(b)}
\end{array}
$$
\vspace{-5mm}
\caption{A naive approach to aggregation}
\vspace{-5mm}
\label{ExpSize}
\end{figure}
where we add to the semiring operations an unary
operation~~$\widehat{}$~~
with the property that $\widehat{p} = 1$ whenever $p=0$.
This will indeed satisfy the deletion criterion. For example
when the tuple with {\it Id} 3 is deleted we get the relation in
Figure~\ref{ExpSize}(b). In fact, there exist semirings with the
additional structure needed to define~~$\widehat{}$~. For example
in the semiring of polynomials with {\em integer} coefficients,
$\ZX$, we can take $\widehat{p} = 1-p$
while in the semiring
of boolean expressions with variables from $X$, $\BoolExp(X)$,
we can take $\widehat{p} = \neg p$
\footnote{Both of these also give natural semantics to relational
{\em difference}. $\ZX$ is used in~\cite{G-Thesis} following the
use of $\bbZ$ in~\cite{GIT-ICDT09}, while $\BoolExp(X)$ is used
in the seminal paper~\cite{IL84}.}.
The latter is essentially the approach taken
in~\cite{DBLP:journals/jiis/LechtenborgerSV02}.
However, whether we use $\ZX$ or $\BoolExp(X)$, we still
have, in the worst case, {\em exponentially many different results}
to account for, at least in the case of summation
(a lower bound recognized also
in~\cite{DBLP:journals/jiis/LechtenborgerSV02}).
It follows that summation in particular (and therefore
any uniform approach to aggregation) cannot be represented with
a feasible amount of annotation as long as the annotation
stays at the tuple level.

Instead, we will present a provenance representation for
aggregation results that leads only to a {\em poly-size
  increase}, one that we believe is tractably implementable using methods
similar to the ones used in Orchestra~\cite{GKIT-VLDB07}.
We achieve this via a more radical approach:
we annotate with provenance information not
just the tuples of the answer but also {\em the manner in which the
  values in those tuples are computed}.

%

We can gain intuition towards our representation from the
particular case of bags,
which are in fact $\bbN$-relations, i.e., relations whose tuples are
annotated with elements of the semiring $(\bbN,+,\cdot,0,1)$. Assume
that $R$ in Figure~\ref{AnnotProj}(a) is such a relation, i.e.,
$p_1,\ldots,r_2\in\bbN$ are tuple multiplicities. Then, after
sum-aggregation the value of the attribute {\it SalMass} in the
tuple with {\it Dept} {\sf d}$_1$ is computed by
$p_1\times 20 + p_2 \times 10 +
p_3\times 15$. Now, if the multiplicities are, for example
$p_1=2,p_2=3,p_3=1$ then the aggregate value is 85. But what if $R$ is
a relation annotated with provenance polynomials rather than multiplicities?
Then, the aggregate value does not correspond to any number.

We will make $p_1\times 20$ into an {\em abstract} construction, $p_1\otimes 20$
and the aggregate value will be the formal expression
$p_1\otimes 20 + p_2\otimes 10 + p_3\otimes 15$.

Intuitively, we are embedding the domain of sum-aggregates, i.e., the reals
$\bbR$, into a larger domain of formal expressions that capture how the
sum-aggregates are computed from values annotated with provenance. We do
the same for other kinds of aggregation, for instance min-aggregation
gives $p_1\otimes 20 \min p_2\otimes 10 \min p_3\otimes 15$.
We call these {\em annotated aggregate} expressions.

\eat{
The min-aggregation example and the claim that relations
themselves are in fact $\cup$-aggregations begs the question of what
is an aggregation in general.
}

In this paper we consider only aggregations
defined by commutative-associative operations
\footnote{As shown in~\cite{DBLP:conf/ctcs/LellahiT97}, for list
collections it also makes sense to consider non-commutative aggregations.}.
Specifically, our framework can accommodate
aggregation based on any {\em commutative monoid}. For example the commutative
monoid for summation is $(\bbR,+,0)$ while the one for min is
$(\bbR^\infty,\min,\infty)$\footnote{$K$-annotated relations with
{\em union} (see Section \ref{semiringsubsec}) also form such a structure.}.

To combine an aggregation monoid $M$ with an annotation commutative semiring
$K$, in a way capturing
aggregates over $K$-relations, we propose the use of the algebraic structure of $K${\em -semimodules} (see Section~\ref{Semimodules}).
Semimodules are a way of generalizing (a lot!) the operations considered
in linear algebra. Its ``vectors'' form only a commutative monoid (rather
than an abelian group) and its ``scalars'' are the elements of $K$ which
is only a commutative semiring (rather than a field).

In general, a commutative monoid $M$ does not have an obvious
structure of $K$-semimodule. To make it such we may need to add new
elements corresponding to the scalar multiplication of elements of $M$
with elements from $K$, thus ending up with the formal expressions
that represent aggregate computations, motivated above, as elements of
a {\em tensor product} construction $K\otimes M$.  We show that the
use of tensor product expressions as a formal representation of
aggregation result is effective in managing the
provenance of ``simple aggregate" queries, namely queries where the
aggregation operators are the last ones applied.


We show that certain semirings are ``compatible" with certain monoids, in the
sense that the results of computation done in $K \otimes M$ may be mapped to $M$, faithfully representing the aggregation
results. Interestingly, compatibility is aligned with common wisdom: it is known that some (idempotent) aggregation functions such as MIN and MAX
work well for set relations, while SUM and PROD require the treatment of relations as bags. We show that
non-idempotent monoids are compatible only with ``bag" semirings, i.e. semirings from which there exists an homomorphism to $\bbN$.


In general, aggregation results may be used by the query as the input to further
operators, such as value-based joins or selections. Here the formal representation
of values leads to a difficulty: the truth values of comparison
operators on such formal expressions is undetermined! Consequently, we
extend our framework and construct semirings in which formal
comparison expressions between elements of the corresponding
semimodule are elements of the semiring.  This means that an
expression like $[p_1\otimes 20 = p_2\otimes 10 + p_3\otimes 15]$ may
be used as part of the provenance of a tuples in the join result.
This
expression is simply treated as a new provenance token
(with constraints), until $p_1,p_2,p_3$ are
assigned e.g. values from $\bbB$ or $\bbN$, in which case we can interpret both sides of the equality as elements of the monoid
and determine the truth value of the equality (see Section \ref{GeneralAggSec}). We show in Section~\ref{GeneralAggSec} that this
construction allows us to manage provenance information for arbitrary
queries with aggregation, while {\em keeping the representation size
  polynomial in the size of the input database}.  Our construction is
robust: if further queries are applied, the token $[p_1\otimes
  20 = p_2\otimes 10 + p_3\otimes 15]$ can be used as part of a more complex expression, just as any
other provenance token.

The main result of this paper is providing, for the first time, a semantics
for aggregation (including group by) on semiring-annotated
relations that:
\vspace{-1mm}
\begin{itemize}
\item Coincides with the usual semantics on set/bag relations for min/max/sum/prod.
\vspace{-1mm}
\item Commutes with homomorphisms of semirings (hence all the ensuing applications)
\vspace{-1mm}
\item Is representable with only poly-size overhead.
\end{itemize}
\vspace{-1mm}
\eat{
This fills one of the \val{gaps??} left open by~\cite{GKT-PODS07} and
the follow-up papers.
Another \val{such gap} is the treatment of query operations that
involve forms of negation.
}
A second result of this paper is a new semantics for {\em difference}
on relations annotated with elements from {\em any}
commutative semiring. This is done via an encoding of relational difference
using nested aggregation. The fact that such an encoding can be done is known
(see e.g.~\cite{K-PODS10,BNTW95}), but
combined with our provenance framework, the encoding gives a semantics for
``provenance-aware" difference. Our new semantics for $R-S$ is a hybrid of
bag-style and set-style semantics, in the sense that tuples
of $R$ that appear in $S$ do not appear in $R-S$ (i.e. a boolean
negative condition is used), while those that do not appear in $S$
appear in $R-S$ with the same annotation (multiplicity, if $K= \bbN$
is used) that they had in $R$. This makes the semantics different from
the bag-monus semantics and its generalization to ``monus-semirings"
in ~\cite{Userssemiring1} as well as from the
``negative multiplicities'' semantics in~\cite{GIT-ICDT09}
(more discussion in Section~\ref{related}). We examine the equational
laws entailed by this new
semantics, in contrast to those of previously proposed
semantics for difference. In our opinion, this semantics is probably not
the last word on difference of annotated relations, but we hope that
it will help inform and calibrate future work on the topic.


\paragraph*{Paper Organization} The rest of the paper is organized as follows. Section \ref{prelim} describes and exemplifies the main mathematical ingredients used throughout the paper. Section \ref{SimpleAggSection} describes our proposed framework for ``simple" aggregation queries, and this framework is extended in Section \ref{GeneralAggSec} to nested aggregation queries. We consider difference queries in Section \ref{DifferenceSection}. Related Work is discussed in Section \ref{related}, and we conclude in Section \ref{ConcSection}.

%
%
%

\vspace{-1mm}
\section{Preliminaries}
\label{prelim}


We provide in this section the algebraic foundations that will be used throughout the paper. We start by recalling the notion of semiring and its use in ~\cite{GKT-PODS07} to capture provenance for the SPJU algebra queries. We then consider aggregates, and show the new algebraic construction that is required to accurately support it.
\subsection{Semirings and SPJU}
\label{semiringsubsec}
We briefly review the basic framework introduced in~\cite{GKT-PODS07}.
A {\em commutative monoid} is an algebraic structures
$(M,+_{\sM},0_{\sM})$ where $+_{\sM}$ is an associative
and commutative binary operation and $0_{\sM}$ is an identity for $+_{\sM}$. A monoid homomorphism is a mapping
$h:M \rightarrow M'$ where $M,M'$ are monoids, and $h(0_{\sM})=0_{\!_{M'}}$,$h(a+b)=h(a)+h(b)$.
We will consider database operations on relations whose tuples
are annotated with elements from {\em commutative semirings}. These
are structures $(K,+_{\sK} ,\cdot_{\sK},0_{\sK},1_{\sK})$ where
$(K,+_{\sK} ,0_{\sK})$ and $(K,\cdot_{\sK},1_{\sK})$ are commutative monoids,
$\cdot_{\sK}$ is distributive over $+_{\sK} $, and
$a\cdot_{\sK}0_{\sK} = 0\cdot_{\sK} a = 0_{\sK}$.
A semiring homomorphism is a mapping
$h:K \rightarrow K'$ where $K,K'$ are semirings, and $h(0_{\sK})=0_{\!_{K'}},h(1_{\sK})=1_{\!_{K'}}$,$h(a+b)=h(a)+h(b)$, $h(a \cdot b)=h(a) \cdot h(b)$.
Examples of commutative semirings are any commutative ring (of course) but also
any distributive lattice, hence any boolean algebra.
Examples of particular interest to us include
the boolean semiring $(\bbB,\vee,\wedge,\bot,\top)$
(for usual set semantics),
the natural numbers semiring $(\bbN,+,\cdot,0,1)$
(its elements are multiplicities, i.e., annotations that give bag semantics),
and the {\em security}
semiring $(\bbS,\min,\max,0_{\sSS},1_{\sSS})$ where $\bbS$ is the
ordered set, $1_{\sSS}<\mathsf{C}<\mathsf{S}<\mathsf{T}<0_{\sSS}$
whose elements have the following meaning
when used as annotations: $1_{\sSS}$ : public (``always available"),
$\mathsf{C}$ : confidential, $\mathsf{S}$ : secret,
$\mathsf{T}$ : top secret, and $0_{\sSS}$ means ``never available".


Certain semirings play an essential role in capturing provenance
information.  Given a set $X$ of {\em provenance tokens} which
correspond to ``atomic'' provenance information, e.g., tuple identifiers,
the semiring of {\em polynomials} $(\NX,+,\cdot,0,1)$ was shown
in~\cite{GKT-PODS07} to adequately, and most generally,
capture provenance for positive relational queries.
The {\em provenance interpretation} of the semiring structure
is the following. The $+$ operation on annotations corresponds
to {\em alternative use}
of data, the $\cdot$ operation to {\em joint use} of data, $1$ annotates
data that is always and unrestrictedly available, and $0$ annotates absent data. The definition
of the $K$-relational algebra (see bellow for union, projection and join)
fits indeed this interpretation.
Algebraically, $\NX$ is the commutative semiring freely generated by $X$,
i.e., for any other commutative semiring $K$, any valuation of the
provenance tokens $X\rightarrow K$ extends uniquely to a semiring
homomorphism $\NX\rightarrow K$ (an evaluation in $K$ of the polynomials).
We say that any semiring annotation semantics {\em factors} through
the provenance polynomials semantics, which means that for practical
purposes storing provenance information suffices for many other applications
too. Other semirings can also be used to capture certain forms of provenance,
albeit less generally than $\NX$~\cite{GKT-PODS07,G-ICDT09}. For example,
boolean expressions capture enough provenance to serve in the
intensional semantics of queries on
incomplete~\cite{IL84} and probabilistic data~\cite{FR97,Z97}.

To define annotated relations we use here the named perspective of the
relational model~\cite{AHV}. Fix a countably infinite
\emph{domain} $\bbD$ of values (constants).
For any finite set $U$ of attributes a tuple is a function
$t : U \rightarrow \bbD$ and we denote the set of all such possible tuples
by $\Tuples{U}{\bbD}$. Given a commutative semiring $K$, a
{\em $K$-relation} (with schema $U$) is a function
$R: \Tuples{U}{\bbD}\rightarrow K$ whose
\emph{support}, $\supp(R) = \{t \mid R(t) \neq 0_{\sK}\}$ is
\emph{finite}. For a fixed set of attributes $U$ we denote by $\Rel{K}$
(when $U$ is clear from the context) the set of $K$-relations
with schema $U$. We also define a $K${\em -set} to be
a function $S: \bbD\rightarrow K$ again of finite support. We then define:
\vspace{-1mm}
\begin{description}
\item[Union] If $R_{i}:\Tuples{U}{\bbD} \rightarrow K,~i=1,2$
then $R_1\cup_{\sK} R_2:\Tuples{U}{\bbD}\rightarrow K$
is defined by $(R_1\cup_{\sK}  R_2)(t) = R_1(t)+_{\sK}R_2(t)$.
The definition of union of $K$-sets follows similarly.
\end{description}
\vspace{-1mm}
We also define the {\em empty} $K$-relation ($K$-set)
by $\emptyset_{\sK} (t)=0_{\sK} $.
It is easy to see that
$(\Rel{K},\cup_{\sK} ,\emptyset_{\sK})$ is a commutative monoid
\footnote{In fact, it also has a semiring structure.}.
Similarly, we get the commutative monoid of $K$-sets
$(\Set{K},\cup_{\sK} ,\emptyset_{\sK})$.

Given a named relational schema, $K$-databases
are defined from $K$-relations just as relational databases are defined
from usual relations, and in fact the usual (set semantics) databases
correspond to the particular case $K=\bbB$. The (positive) {\em $K$-relational
algebra} defined in~\cite{GKT-PODS07} corresponds to a semantics on
$K$-databases for the usual operations of the relational algebra. We
have already defined the semantics of union above
and we give here just two other cases leaving the rest for Appendix
\ref{ProvenanceSemirings} (for a tuple $t$ and an attributes set $U'$, $t|_{U'}$ is the restriction of $t$ to $U'$):
\begin{description}
\item[Projection] If $R:\Tuples{U}{\bbD} \rightarrow K$
and $U' \subseteq U$ then
$\Pi_{U'}R:\Tuples{U'}{\bbD} \rightarrow K$ is defined
by  $(\Pi_{U'}R)(t) = \sum\sK  R(t')$ where the $+_{\sK}$ sum is over all
$t'\in\supp(R)$ such that $t'|_{U'}=t$.
\item[Natural Join] If $R_{i}:\Tuples{U_i}{\bbD} \rightarrow K,~i=1,2$
then $R_1\bowtie R2:\Tuples{U_1\cup U_2}{\bbD}\rightarrow K$ is defined by
$(R_1 \bowtie R_2)(t) = R_1(t_1) \cdot_{\sK}  R_2(t_2)$ where $t_i=t|_{U_i},~i=1,2$.
\end{description}

\eat{
\val{DO WE STILL NEED THIS:
A \emph{query} is then a mapping between $K$-databases, and
finally an \emph{algebra} for queries in a given class $C$ assigns
such a mapping to every syntactic representation of a query in
$C$.}\yael{can we embed into the definition of a $K$-database algebra the
  connection to the standard query algebra?}
}

\subsection{Semimodules and aggregates}
\label{Semimodules}
We will consider aggregates defined by commutative monoids.  Some
examples are $\SumAgg=(\bbR,+,0)$ for summation
\footnote{COUNT is particular case of summation and AVG is obtained
from summation and COUNT.},
$\MinAgg=(\bbR^{\pm\infty},\min,+\infty)$ for min,
$\MaxAgg=(\bbR^{\pm\infty},\max,-\infty)$ for max,
and $\ProdAgg=(\bbR,\times,1)$ for product.

In dealing with aggregates we have to extend the operation of a
commutative monoid to operations on relations annotated with elements
of semirings. This interaction will be captured by {\em semimodules}.
\begin{definition}
\label{SemiModuleDef}
Given a commutative semiring $K$,
a structure $(W,+_{\sW},0_{\sW},*_{\sW})$ is a $K$-semimodule if
$(W,+_{\sW},0_{\sW})$ is a commutative monoid and $*_{\sW}$ is a
binary operation $K\times W\rightarrow W$ such that
(for all $k,k_1,k_2 \in K$ and $ w,w_1,w_2 \in W$):
\begin{eqnarray}
    k*_{\sW}(w_1+_{\sW}w_2) & = & k*_{\sW}w_1 +_{\sW} k*_{\sW}w_2 \\
           k*_{\sW}0_{\sW} & = & 0_{\sW} \\
    (k_1+_{\sK}  k_2)*_{\sW}w & = & k_1*_{\sW}w +_{\sW} k_2*_{\sW}w \\
           0_{\sK} *_{\sW}w & = & 0_{\sW} \\
(k_1\cdot_{\sK} k_2)*_{\sW}w & = & k_1*_{\sW}(k_2*_{\sW}w) \\
           1_{\sK} *_{\sW}w & = & w
\end{eqnarray}

\end{definition}
In any (commutative) monoid $(M,+_{\sM},0_{\sM})$
define for any $n\in\bbN$ and $x\in M$
$$
nx = x+_{\sM} \cdots +_{\sM}x~~(n ~~ \mbox{times})
$$
in particular $0x=0_{\sM}$. Thus $M$ has a canonical
structure of $\bbN$-semimodule. Moreover, it is easy to check that
a commutative monoid $M$ is a $\bbB$-semimodule if and only if its
operation is {\em idempotent}: $x+_{\sM}x = x$. 
The $K$-relations themselves form a $K$-semimodule
$(\Rel{K},\cup_{\sK} ,\emptyset_{\sK} ,*_{\sK})$ where
$(k*\sK R)(t) = k\cdot_{\sK} R(t)$
\footnote{In fact, it is the $K$ semimodule freely generated
by $\Tuples{U}{\bbD}$.}.

We now show, for any $K$-semimodule $W$, how to define {\em
  $W$-aggregation} of a $K$-set of elements from $W$.  We assume that
$W\subseteq\bbD$ and that we have just one attribute, whose values are
all from $W$. Consider the $K$-set $S$ such that
$\supp(S)=\{w_1,\ldots,w_n\}$ and $S(w_i)=k_i\in K, i=1,\ldots, n$
(i.e., each $w_i$ is annotated with $k_i$). Then, the result of
$W$-aggregating $S$ is defined as
$$
\SetAgg_W(S) ~=~ k_1*_{\sW}w_1 +_{\sW}\cdots +_{\sW} k_n*_{\sW}w_n ~~\in W
$$
For the empty $K$-set we define $\SetAgg_W(\emptyset_{\sK}) = 0_{\sW}$.
Clearly, $\SetAgg_W$ is a {\em semimodule homomorphism}
\footnote{In fact, it is the free homomorphism
determined by the identity function on $W$.}.
Since all commutative monoids are $\bbN$-semimodules
this gives the usual sum, prod, min, and max aggregations on bags.
Since $\MinAgg$ and $\MaxAgg$ are $\bbB$-semimodules this gives
the usual min and max aggregation on sets
\footnote{The fact that the right algebraic structure to use
for aggregates is that of {\em semimodules} can be justified in the
same way in which using semirings was justified in~\cite{GKT-PODS07}:
by showing how the laws of semimodules follow from
desired equivalences between aggregation queries.}.

Note that $\SetAgg$ is an operation on sets, not an operation on relations.
In the sequel we show how to extend it to one.

\subsection{A tensor product construction}
\label{Tensor}

More generally, we want to investigate $M$-aggregation on
$K$-relations where $M$ is a commutative monoid and $K$ is some
commutative semiring. Since $M$ may not have enough elements to
represent $K$-annotated aggregations we construct a $K$-semimodule in which $M$ can be embedded, by
transferring to semimodules the basic idea behind a standard algebraic
construction, as follows.

Let $K$ be any commutative semiring and $M$ be any commutative monoid.
We start with $K\times M$, denote its elements
$k\notimes m$ instead of $\langle k,m\rangle$ and call them ``simple tensors''.
Next we consider (finite) bags of such simple tensors, which, with bag
union and the empty bag, form a commutative monoid.
It will be convenient to
denote bag union by $+_{\sKM}$, the empty bag by $0_{\sKM}$ and to
abuse notation denoting singleton bags by the unique element they contain.
Then, every non-empty bag of simple tensors can be written
(repeating summands by multiplicity)
$k_1\notimes m_1+_{\sKM}\cdots +_{\sKM}k_n\notimes m_n$. Now we define
$$
k *_{\sKM} \ssum k_i\notimes m_i  = \ssum (k\cdot_{\sK} k_i)\notimes m_i
$$
Let $\sim$ be the smallest congruence w.r.t. $+_{\sKM}$
and $*_{\sKM}$ that satisfies (for all $k,k',m,m'$):
\begin{eqnarray*}
   (k+_{\sK}  k')\notimes m & \sim & k\notimes m  +_{\sKM} k'\notimes m \\
        0_{\sK} \notimes m & \sim & 0_{\sKM}  \\
k \notimes (m +_{\sM} m') & \sim & k\notimes m  +_{\sKM} k\notimes m' \\
       k \notimes 0_{\sM} & \sim & 0_{\sKM}
\end{eqnarray*}
We denote by $K\otimes M$ the set of {\em tensors}
i.e., equivalence classes of bags of simple tensors modulo
$\sim$. We show in Appendix \ref{AppendixTensor} that $K\otimes M$ forms
a $K$-semimodule. \val{When I agreed that you take this out of this
section I assumed you will put in the Appendix, including the
universality property. Please do so.}

\eat{
Define $\iota:M\rightarrow K\notimes M$ such that
$\iota(m)=1_{\sK} \notimes m$. Every tensor is a linear
combination of simple tensors from $\iota(M)$. More precisely,
\begin{proposition}
\label{universalityTensor}
$K\otimes M$ is a $K$-semimodule that satisfies the
  ``universality'' property that
for any $K$-semimodule $W$ and any homomorphism of monoids
$f:M\rightarrow W$ there exists a unique homomorphism of
$K$-semimodules $f^*:K\notimes M \rightarrow W$ such that $f^*\circ\iota = f$.
\end{proposition}
\finish{proof in Appendix}
We also say that $K\otimes M$ is the $K$-semimodule {\em freely generated}
by $M$ (hence the ``most economical'' appellation).

An immediate consequence of Proposition~\ref{universalityTensor},
is the existence (and uniqueness) of
the ``lifting'' of a homomorphism of semirings
$h: K\rightarrow K'$ to a homomorphism of monoids
$h^M: K\otimes M \rightarrow K'\otimes M$.
Indeed $K'\otimes M$ becomes a $K$-semimodule via $h$
so we can define $h^M$ as the the unique homomorphism
of $K$-semimodules that by the proposition
extends $\iota':M\rightarrow K'\notimes M$. Note that
$$
h^M(\ssum k_i\notimes m_i) = \ssum h(k_i)\notimes m_i
$$
}

\paragraph*{Lifting homomorphisms} Given a homomorphism of semirings
$h: K\rightarrow K'$, and some commutative monoid $M$,  we can ``lift" $h$ to a homomorphism of monoids in a natural way. The lifted homomorphism is denoted
$h^M: K\otimes M \rightarrow K'\otimes M$ and defined by:
$$
h^M(\ssum k_i\notimes m_i) = \ssum h(k_i)\notimes m_i
$$

\section{Simple aggregation queries}
\label{SimpleAggSection}

In this section we begin our study of the ``provenance-aware"
evaluation of aggregate queries, focusing on ``simple'' such
queries, that is, queries in which aggregations are done last; for example, un-nested SELECT FROM WHERE GROUP BY queries.
This avoids the need to compare values which are the result
of annotated aggregations and simplifies the treatment.
The restriction is relaxed in the more general framework
presented in Section \ref{GeneralAggSec}.

The section is organized as follows. We list the desired features of a
provenance-aware semantics for aggregation, and first try to design a semantics with these features,
{\em without} using the tensor product construction, i.e. by simply working with $K$-relations as done in \cite{GKT-PODS07}. We show that
this is impossible. Consequently, we turn to semantics that are based on combining aggregation with values via the tensor product construction. We
propose such semantics that do satisfy the desired features, first for relational algebra with an additional AGG operator on relations (that
allows aggregation of all values in a chosen attributes, but no grouping); and then for GROUP BY queries.

\eat{
We first recall the definition of K-relations from \cite{ValIcdt08},
and observe the difficulties in applying this definition, or its simple extensions, in presence of aggregate queries. We then proceed to formally define
relations in which provenance annotations are combined with values, using the K-semimodule construction (def. \ref{}), show an algebra for queries with aggregates evaluated on such relations, and analyze its properties.
\val{rewrite this paragraph after section gets final shape}
}

\eat{
\begin{example}
\label{KMRelation}

Consider the following $(M, \NX)$ relation $R$:

$$
\begin{array}{|ccc||l|}
\hline
\mathit{Dept.} & \mathit{Sal.} & \\
\hline
\hline
\mathsf{d_1}  & 20 & r_1 \\
\mathsf{d_1}  & 10 & r_2 \\
\mathsf{d_2}  & 10 & r_3 \\
\hline
\end{array}
$$

The annotations of tuples are polynomials in $\NX$ on a set $X$ of ``basic annotations" \daniel{This may need further explanations?}. The set of attributes is $U = \{\mathit{Dept.}, \mathit{Sal.}\}$.
The aggregatable attribute is \textit{Sal.}, and its values are taken from a Monoid $(\bbR, +_{\sM},0_{\sM})$. Note that $+_{\sM}, 0_{\sM}$
are symbols that will be given semantics based on the aggregation function used in a query.

\end{example}
}

\subsection{Semantic desiderata and first attempts}
\label{FirstAttempt}

We next explain the desired features of a
provenance-aware semantics for aggregation. To illustrate the difficulties and the need for a more complex construction,
we will first attempt to define a semantics on $K$-relations, without using the tensor product construction of Section \ref{Tensor}.

We consider a commutative semiring $K$ (e.g., $\bbB, \bbN, \NX, \bbS$,
etc.)  for tuple annotations and a commutative monoid $M$ (e.g.,
$\SumAgg=(\bbR,+,0), \ProdAgg=(\bbR,\times,1)$, $\MaxAgg=$ \\$(\bbR^{-\infty},\max,-\infty)$, $\MinAgg=(\bbR^{\infty},\min,\infty)$ etc.)  for aggregation. We will
assume that the elements of $M$ already belong to the database domain,
$M\subseteq\bbD$.

We have recalled the semantics of SPJU queries
in Section \ref{semiringsubsec}. Now we wish to add an $M$-aggregation
operation $\Agg$ on relations. We then denote by $\SPJUlastA$ the restricted class of queries consisting of
any SPJU-expression followed possibly by just one application of
$\Agg$. This corresponds to SELECT AGG(*) FROM WHERE queries
(no grouping).

For the moment, we do not give a concrete semantics to $\Agg_M(R)$,
allowing any possible semantics where the result of $\Agg_M(R)$ is a $K$-relation. We note that $\Agg_M(R)$ should be defined iff
$R$ is a $K$-relation with one attribute whose values are in $M$.

What properties do we expect of a reasonable semantics for
$\SPJUlastA$ (including, of course, a semantics for $\Agg_M(R)$)? A basic sanity check is
\begin{description}
\item[Set/Bag Compatibility]
The semantics coincides with the usual one when $K=\bbB$ (sets)
and $M=\MaxAgg$ or $\MinAgg$, and when $K=\bbN$ (bags) and $M=\SumAgg$ or $\ProdAgg$.
\end{description}
Note that we associate min and max with sets and sum and product
with bags. Min and max work fine with bags too, but we get the same
result if we convert a bag to a set (eliminate duplicates) and then
apply them. Sum and product (in the context of other operations
such as projection) require us to use bags semantics in order to
work properly. This is well-known, but our general approach sheds
further light on the issue by discussing such ``compatibility'' for arbitrary semirings and monoids
in Section~\ref{Compat}.

As discussed in the introduction, a fundamental desideratum
with many applications is {\em commutation with homomorphisms}.
Note that a semiring homomorphism $h:K \rightarrow K'$ naturally
extends to a mapping $\hRel{h}:\Rel{K}\rightarrow\Rel{K'}$
via $\hRel{h}(R)= h \circ R$ (i.e. the homomorphism is applied on the
annotation of every tuple), which then further extends to
   $K$-databases. With this, the second desired property is
\begin{description}
\item[Commutation with Homomorphisms]
Given any two commutative semirings $K,K'$
and any homomorphism $h:K \rightarrow K'$, for any query
$Q$, its semantics on $K$-databases and on $K'$-databases
satisfy ~$\hRel{h}(Q(D))=Q(\hRel{h}(D))$
for any $K$-database $D$.
\end{description}
It turns out that this property determines quite precisely the way
in which tuple annotations are defined. We say that the semantics of an
operation $\Omega$ on $K$-databases is {\em algebraically uniform}
with respect to the class of commutative semirings if
the annotations of the output $\Omega(D)$ are defined by
the same (for all $K$) $\{+_{\sK},\cdot_{\sK},0_{\sK},1_{\sK}\}$-expressions, where the elements in the expressions are
the annotations of the input $D$. One can see that the definition of the
SPJU-algebra is indeed algebraically uniform and was shown
in~\cite{GKT-PODS07,FGT-PODS08} to commute with homomorphisms.
The connection between the two properties is general (proof deferred to the Appendix):
\begin{proposition}
\label{uniform}
A semantics commutes with homomorphisms iff it is algebraically uniform.
\end{proposition}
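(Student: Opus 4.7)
The proposition is a biconditional, and the two directions have very different character. The easy direction (uniformity $\Rightarrow$ commutation) is a routine induction: if each output tuple $t$ receives an annotation given by a fixed $\{+,\cdot,0,1\}$-expression $E_t$ in the input annotations, independent of the semiring $K$, then any semiring homomorphism $h$ --- which preserves exactly those four operations --- commutes with the evaluation of $E_t$, so $\hRel{h}(Q(D)) = Q(\hRel{h}(D))$ follows by structural induction on $E_t$.

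For the harder direction (commutation $\Rightarrow$ uniformity) the plan is to exploit the universal property of the polynomial semiring $\NX$ recalled in Section~\ref{semiringsubsec}. Fix a ``shape'' of input database, meaning the set of underlying input tuples and the positions they occupy, but not the concrete annotations. To this shape associate a ``generic'' $\NX$-database $D_X$, obtained by assigning a distinct fresh indeterminate from $X$ as the annotation of each input position. Evaluate $Q$ on $D_X$: the result is an $\NX$-relation, so each output tuple $t$ receives an annotation $p_t \in \NX$ which, by definition of $\NX$, \emph{is} an $\{+,\cdot,0,1\}$-expression over the indeterminates in $X$. These polynomials are the candidate uniform expressions $E_t$.

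To verify they work for an arbitrary semiring $K$, take any $K$-database $D$ with that same shape. The valuation sending each indeterminate of $X$ to the corresponding annotation of $D$ extends uniquely, by the freeness of $\NX$, to a semiring homomorphism $h:\NX \to K$ satisfying $\hRel{h}(D_X) = D$. Commutation of the semantics with $h$ then gives $Q(D) = \hRel{h}(Q(D_X))$, so the annotation of each output tuple of $Q(D)$ is obtained by evaluating $p_t$ in $K$ via $h$ --- which is exactly ``the same expression $E_t$, applied to the annotations of $D$.'' The main subtlety I anticipate is making ``shape'' precise: one must allow several input positions to carry the same tuple content (handled by keeping the corresponding indeterminates distinct), and must observe that padding the input with tuples annotated by $0_{\sK}$ does not change the output, so that the common set of positions can always be enlarged enough to encompass any concrete $D$ and $D_X$ one wishes to compare. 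With this bookkeeping settled, the freeness of $\NX$ is doing all the real work.
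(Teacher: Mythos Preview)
Your proposal is correct and follows essentially the same route as the paper: the easy direction is dispatched by noting that homomorphisms preserve $\{+,\cdot,0,1\}$-expressions, and the hard direction proceeds by building, for each input support, a ``generic'' $\NX$-database with distinct indeterminates (the paper calls this an \emph{abstractly tagged} database), then invoking freeness of $\NX$ and commutation with the induced homomorphism $\NX\to K$ to show that the resulting polynomials serve as the uniform annotation expressions. Your bookkeeping worry about ``several input positions carrying the same tuple content'' is in fact a non-issue in this model, since a $K$-relation is a function $\Tuples{U}{\bbD}\to K$ and hence assigns each tuple exactly one annotation; the paper accordingly just takes $D^a$ to be determined by $\supp(D)$ without further fuss.
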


\eat{
We say that an algebra for which the above three fundamental properties hold is {\em proper}. Indeed, in  \cite{ValPods07} and follow-up works, the authors showed algebras that bare these properties, for positive relational algebra queries ($RA^{+}$) as well as for other expressive classes of queries (Datalog, the positive Nested Relational Calculus,
and some query languages on XML).
}

%

After stating two of the desired properties, namely set/bag
compatibility and commutation with homomorphisms we can already show
that it is not possible to give a satisfactory semantics to the
SPJU-A algebra within the framework used in \cite{GKT-PODS07} for the SPJU-algebra.

\begin{proposition}
\label{noMK}
There is no $K$-relation semantics for $\MaxAgg$-(or $\MinAgg$-)aggregation
that is both set-compatible and commutes with homomorphisms. Similarly,
there is no $K$-relation semantics for $\SumAgg$-aggregation
that is both bag-compatible and commutes with homomorphisms.
\end{proposition}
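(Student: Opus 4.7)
The plan is to assume both properties and derive a contradiction via Proposition~\ref{uniform}: commutation with homomorphisms forces algebraic uniformity, and algebraic uniformity interacts badly with the monotonicity inherent in positive polynomial expressions over $\bbN$ and in positive boolean expressions over $\bbB$. Concretely, for every fixed input shape (schema and multiset of \emph{values}, leaving annotations symbolic) and every candidate output value $v$, algebraic uniformity states that the annotation of the output tuple carrying $v$ is a fixed $\{+,\cdot,0,1\}$-expression $p_v$ in the input annotations. Evaluated in $\bbN$, such an expression denotes a monotone (non-decreasing) function $\bbN^n\to\bbN$; evaluated in $\bbB$, it becomes a $\{\vee,\wedge,\bot,\top\}$-expression and hence a monotone boolean function. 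The contradiction comes from showing that set/bag compatibility forces $p_v$ to be non-monotone.

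For the $\SumAgg$/bag case I would take a one-attribute input $R$ with two tuples carrying values $v_1=0$ and $v_2=1$, annotated by $a_1$ and $a_2$ respectively. By algebraic uniformity there is, for every $v\in\bbR$, a fixed polynomial $p_v(a_1,a_2)$ over $\bbN$ giving the annotation of the output tuple carrying $v$. Specializing $K=\bbN$ and writing $a_1=n_1,\ a_2=n_2$, the bag aggregate is $n_1\cdot 0+n_2\cdot 1=n_2$, so bag-compatibility forces
$$
p_v(n_1,n_2)\;=\;\begin{cases}1 & \text{if } v=n_2,\\ 0 & \text{otherwise.}\end{cases}
$$
This indicator is non-monotone in $n_2$ for every $v\ge 1$ (e.g.\ $p_1(0,0)=0,\ p_1(0,1)=1,\ p_1(0,2)=0$), while every polynomial in $\bbN[a_1,a_2]$ is monotone in each argument---contradiction.

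The $\MaxAgg$/set case is analogous. Take the same $v_1=0,\ v_2=1$ with boolean annotations $a_1,a_2\in\bbB$. Set-compatibility forces the output tuple carrying value $0$ to be present exactly when $a_1=\top$ and $a_2=\bot$, so $p_0(a_1,a_2)$ must compute the boolean function $a_1\wedge\neg a_2$; but every $\{\vee,\wedge,\bot,\top\}$-expression in $a_1,a_2$ is monotone in each argument, while $a_1\wedge\neg a_2$ is not monotone in $a_2$. The $\MinAgg$ case is symmetric, obtained by swapping the roles of $v_1$ and $v_2$. I expect the main technical point to be applying Proposition~\ref{uniform} precisely enough to guarantee that $p_v$ depends only on the input shape and on $v$, not on the specific values of the annotations; once this is pinned down, the monotonicity obstruction above finishes the argument cleanly in both cases.
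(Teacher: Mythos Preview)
Your proposal is correct and follows essentially the same line as the paper: fix a two-tuple input, observe that the annotation of a particular candidate output value must be a $\{+,\cdot,0,1\}$-expression in the two input annotations (the paper derives this directly by working over an abstractly tagged $\NX$-relation and applying commutation with two specific homomorphisms into $\bbB$, rather than first citing Proposition~\ref{uniform}, but the content is identical), and then exhibit two instantiations that force this expression to violate monotonicity. The paper only writes out the $\MaxAgg$/set case explicitly; your version additionally spells out the $\SumAgg$/bag case using monotonicity of $\bbN$-polynomials, which is the natural companion argument.
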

\begin{proof}
Assume by contradiction the existence of such semantics.
Consider the $\NX$-relation $R$ with one attribute and two tuples with values $10$ and $20$,
with the corresponding tuple annotations being $x,y \in X$.
Let $R'$ be $\Agg_{\sMaxAgg}(R)$ according to the assumed semantics; $R'$ is also an $\NX$-relation. Because a tuple $t$ with a value 10 is
a possible answer to the \MaxAgg-aggregation (when we set
$y=0$) it must occur in $\supp(R')$. Let $p\in\NX$ be the annotation
of the tuple $t$ (having value $10$) in $R'$. By algebraic uniformity
the only variables that can occur
in $p$ are $x$ and $y$, and we consequently denote it $p(x,y)$.
Consider two homomorphisms $h',h'':\NX\rightarrow \bbB$
defined by $h'(x)=h'(y)=\top$ and
by $h''(x)=\top, h''(y)=\bot$.
Applying $\Agg_{\sMaxAgg}$ to $\hRel{h'}(R)$ and $\hRel{h''}(R)$
should, by set-compatibility, work as usual. Hence,
by commutation with homomorphisms
$h'(p)=\bot$ and $h''(p)=\top$.
Functions on $\bbB$ defined by polynomials in $\NX$ are monotone in each variable. But
$\bot = h'(p(x,y)) = p(h'(x),h'(y)) = p(\top,\top)$
and
$\top = h''(p(x,y)) = p(h''(x),h''(y)) = p(\top,\bot)$, in contradiction to the monotonicity.

\end{proof}

\eat{
Consider the $\NX$-relation $R$ with one attribute and two tuples with values $10$ and $20$, with the corresponding tuple annotations being $x,y \in X$.
Let $T$ be $\Agg_{\sMaxAgg}(R)$, also an $\NX$-relation. Because a tuple $t$ with a value 10 is
a possible answer to the \MaxAgg-aggregation (for example when we set
$y=0$) it must occur in $\supp(T)$. Let $p\in\NX$ be the annotation
of the tuple $t$ (having value $10$) in $T$. By algebraic uniformity (see Proposition~\ref{uniform})
$p$ is an $\{+,\cdot,0,1\}$-expression in the
the annotations of the input $R$. Hence, the only variables that can occur
in $p$ are $x$ and $y$, and we consequently denote it $p(x,y)$.

Now we apply $\Agg_{\sMaxAgg}$ to two $\bbB$-relations (i.e. ``usual"
set relations) with a single attribute, $R'$ and $R''$ where $R'$ has two tuples with values $10,20$ and $R''$ has a single tuple with value 10.
By the set-compatibility property, aggregation here should
work as usual. Thus, the annotation of 10 in $\Agg_{\sMaxAgg}(R')$
is $\bot$ but the annotation of 10 in $\Agg_{\sMaxAgg}(R'')$ is $\top$.

Consider the two homomorphisms $h',h'':\NX\rightarrow \bbB$
uniquely defined by $h'(x)=h'(y)=\top$ and
by $h''(x)=\top, h''(y)=\bot$. By the commutation with homomorphisms
the annotations of the answers must correspond,
therefore $h'(p)=\bot$ and $h''(p)=\top$. Since
$h',h''$ are homomorphisms
$\bot = h'(p(x,y)) = p(h'(x),h'(y)) = p(\top,\top)$
$\top = h''(p(x,y)) = p(h''(x),h''(y)) = p(\top,\bot)$.
It is easy to check that functions on $\bbB$ defined by
polynomials in $\NX$ are monotone in each variable hence
$p(\top,\bot)=\top, p(\top,\top)=\bot$ gives a contradiction.
}
\eat{
Reconsider the $R$ from Example \ref{KMRelation}, and a simple
aggregation query $Q$ that groups by department and sums up the
salaries.  Let $h:\NX \rightarrow \bbB$ be a homomorphism such that
$h(r_1)=1$ and $h(r_2)=0$. in order to comply with the third
requirement of a proper algebra, $h$ must also set to 1 the provenance
annotation $p_t$ of some tuple $t\in Q(R)$, which contains 20 as the
aggregation result value. Now, let $h':\NX \rightarrow \bbB$ be a
homomorphism such that $h(r_1)=h(r_2)=1$. $h(p_t)$ \textit{must} be 0,
since $t$ is in the output relation (instead, there should be a tuple
with 30 as the aggregation result). Note that a homomorphism from
$\NX$ to $\bbB$ corresponds to a valuation of the variables in $X$
with values from $\bbB$. Now we can look at $h, h'$ as two valuations
of $p_t$ (assume that both of them define $r_3$, say, to be 1). By the
monotonicity property of $\NX$ polynomials, given that $r_1, r_3$ are
fixed, if $r_2 \leq r'_2$ then $p_t(r_1,r_2,r_3)\leq
p_t(r_1,r'_2,r_3)$ and in our case $p_t(1,0,1)\leq p_t(1,1,1)$. But
then either $p_t(1,0,1) = 0$, which contradicts our assumption that
$h(p_t)=1$, or $p_t(1,1,1) =1$, which contradicts $h'(p_t)=0$. Thus,
there is no proper algebra for $RA^{+,aggr}$ on $(M,K)$-relations.
}

\eat{
Interestingly, this is true even for a relaxed ``properness" criterion
that assumes that the use of $\NX$ for provenance management. Namely,
the definition of $\NX$-properness is obtained from the definition of
properness by relaxing conditions (1) and (1) above and require that
they hold only for $\NX$-databases (but the algebra should of course
be still defined for arbitrary $K$-relations, allowing to apply a
homomorphism and map the polynomial annotations to concrete values in
e.g. $\bbB$ or $\bbN$). \val{this needs to be said more clealry}}


Alternatively, one may consider going beyond semirings, to algebraic
structures with additional operations. We have briefly
explored the use of ``negative'' information in the introduction.
As we show there, one could use the ring structure on $\ZX$ (the
additional subtraction operation) or the boolean algebra structure
on $\BoolExp(X)$ (the additional complement operation) but the use
of negative operation does not avoid the need to enumerate in
separate tuples of the answer all the possible aggregation results
given by subsets of the input. In the case of summation, at least,
there are exponentially many such tuples. We reject such an approach
and we state as an additional desideratum:
\begin{description}
\item[Poly-Size Overhead]
For any query $Q$ and database $D$, the size of $Q(D)$, including annotations,
should be only polynomial in the size of $D$.
\end{description}

\eat{
\begin{theorem}
There is no $\ZX$ ($Bool$)-proper algebra on $(M,K)$-relations.
\end{theorem}

\begin{proof}
The intuition for the correctness of this proof can be found in the ``naive'' aggregation representation depicted in Figure \ref{ExpSize}. This example shows that the number of possible aggregation values may be of exponential size.
We prove it here formally: assume that we are aggregating over a group of tuples where the aggregated value of the first tuple is $1$, the aggregated value of the second tuple is $2^1$, and so on until $2^{n}$. Then for each different subset of tuples (except the empty set) we get a different aggregation result, i.e. there are $2^n-1$ possible aggregation results. In order to be set-compatible, any algebra on $(M,K)$-relations would have to generate a tuple for each possible aggregation result; thus its output would be exponential in the database size, and violate property (1) of proper algebras.
\end{proof}
}

We shall next show a semantics
to the \SPJUlastA-algebra that satisfies all three properties
we have listed.

\subsection{Annotations $\otimes$ values and \SPJUlastA}
\label{FirstAlgebra}
Let us fix a commutative monoid $M$ (for aggregation) and a commutative
semiring $K$ (for annotation). The {\em inputs} of our queries are as before: $K$-databases whose domain $\bbD$ includes the values $M$ over
which we aggregate. However, the outputs are more
complicated. The basic idea for the semantics of aggregation was
already shown in Section~\ref{Semimodules} where it is assumed that the
domain of aggregation has a $K$-semimodule structure. 
As we have shown in Section~\ref{Tensor}, we can give a
tensor product construction that embeds $M$ in the $K$-semimodule
$K\otimes M$ (note that this embedding is not always {\em faithful}, as discussed in Section \ref{Compat}).

For the output relations of our algebra queries, we thus need results of aggregation (i.e., the elements of $K\otimes M$)
to also be part of the domain out of which the tuples are
constructed. Thus for the output domain we will assume that
$K\otimes M\subseteq\bbD$, i.e. the result ``combines annotations with values''.
The elements of $M$ (e.g., real numbers for sum or max aggregation)
are still present, but only via the embedding  $\iota:M\rightarrow K\otimes M$ defined by $\iota(m)=1_{\sK} \notimes m$.

Having annotations from $K$ appear in the values will change the way in which we apply homomorphisms
to query results, so to emphasize the change we will call
$(M,K)${\em -relations} the $K$-annotated relations over such that the data domain $\bbD$
that includes $K\otimes M$. To summarize, the semantics of the
\SPJUlastA-algebra will map databases of $K$-relations
(with $M\subseteq \bbD$) to $(M,K)$-relations
(with $K\otimes M\subseteq \bbD$).

As we define the semantics of the \SPJUlastA-algebra, we first note
that for selection, projection, join and union the definition is the
same as for the SPJU-algebra on $K$-databases. The last step of
the query is aggregation, denoted $\Agg_M(R)$, and is well-defined iff $R$ is a $K$-relation with one attribute
whose values are in the $M$ subset of $\bbD$. To apply the
definition that uses the semimodule structure
(shown in Section~\ref{Semimodules}), we convert $R$ to an
$(M,K)$-relation $\iota(R)$ by replacing each $m\in M$ with
$\iota(m)=1_{\sK} \notimes m\in K\otimes M$. Then, if
$\supp(R)=\{m_1,\ldots,m_n\}$ and $R(m_i)=k_i\in K, i=1,\ldots, n$
(i.e., each $m_i$ is annotated with $k_i$) we define
$\Agg_M(R)$  as a one-attribute relation with one
tuple annotation is $1_{\sK}$ and whose content is $\SetAgg_{K\otimes M}(\iota(R))$, which is equal to
\begin{eqnarray*}
k_1*_{\sKM}\iota(m_1) +_{\sKM}\cdots +_{\sKM} k_n*_{\sKM}\iota(m_n) \\
= k_1\notimes m_1+_{\sKM}\cdots +_{\sKM}k_n\notimes m_n
\end{eqnarray*}
We define the annotation of the only tuple in the
output of $\Agg_M$ to be $1_{\sK}$, since this tuple is always available. However, the {\em content} of this tuple does depend on $R$.
For example, even when $R$ is empty the output is not empty:
by the semimodule laws, its content
is $0_{\sKM}=\iota(0_{\sM})$.

\paragraph*{Commutation with Homomorphisms} We have explained in Section \ref{Tensor} how to lift a homomorphism
$h:K \rightarrow K'$ to a homomorphism $h^{M} : K \otimes M \rightarrow K' \otimes M$. Via this we can lift $h$ to a homomorphism $\hRel{h}$
on $(M,K)${\em -relations}: let $R$ be such a relation and recall that some values in $R$ are elements of $K \otimes M$, and the annotations of these tuples are elements of $K$.
Then $\hRel{h}(R)$ denotes the relation obtained from $R$ by replacing every $k \in K$ with $h(k)$,
and additionally replacing every $k \otimes M \in K \otimes M$ with $h^{M}(k \otimes m)$. All other values in $R$ stay intact. Applying $\hRel{h}$ on
a $(M,K)$-database $D$ amounts to applying $\hRel{h}$ on each $(M,K)$-relation in $D$.

We can now state the main result for our \SPJUlastA-algebra:

\begin{theorem}
\label{HomoThm}
Let $K,K'$ be semirings, $h : K \rightarrow K'$, $Q$ an \SPJUlastA query and let $M$ be a commutative monoid.
 For every $(M,K)$-database $D$, $Q(\hRel{h}(D)) = \hRel{h}(Q(D))$
if and only if $h$ is a semiring homomorphism.
\end{theorem}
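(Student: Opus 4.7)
The plan is to prove the two directions separately: the \textbf{if} direction by structural induction on the \SPJUlastA{} query $Q$, and the \textbf{only if} direction by exhibiting, for each semiring axiom, a witnessing query/database pair on which commutation would fail if $h$ did not respect that axiom.

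For the \textbf{if} direction, I assume that $h : K \rightarrow K'$ is a semiring homomorphism. Since $Q \in \SPJUlastA$, it is an SPJU-expression possibly followed by a single application of $\Agg_M$. For the SPJU cases (selection, projection, union, join) the arguments are those already established in~\cite{GKT-PODS07}: the annotation of each output tuple is a $\{+_{\sK},\cdot_{\sK},0_{\sK},1_{\sK}\}$-expression over annotations of input tuples, and $h$ commutes with each of these operations by hypothesis; no tensor values can appear before aggregation, so $h^M$ is not yet involved. The genuinely new case is $\Agg_M$. Suppose the intermediate $K$-relation $R$ has $\supp(R)=\{m_1,\ldots,m_n\}$ with $R(m_i)=k_i$. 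By definition $\Agg_M(R)$ is a single-tuple $(M,K)$-relation with annotation $1_{\sK}$ and content $k_1\notimes m_1 +_{\sKM}\cdots+_{\sKM}k_n\notimes m_n$. Applying $\hRel{h}$ replaces the annotation by $h(1_{\sK})=1_{\sKprime}$ and, using the lifted homomorphism on the content, yields $h(k_1)\notimes m_1 +\cdots+ h(k_n)\notimes m_n$ in $K'\otimes M$. Conversely, $\hRel{h}(R)$ is the $K'$-relation on the same support with annotations $h(k_i)$, and $\Agg_M(\hRel{h}(R))$ produces the very same tuple. The key algebraic fact is the explicit formula $h^M(\sum k_i \notimes m_i)=\sum h(k_i)\notimes m_i$ from Section~\ref{Tensor}, which guarantees that the tensor-side transformation performed by $\hRel{h}$ coincides with the construction performed inside $\Agg_M$ after applying $h$ to annotations.

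For the \textbf{only if} direction, I assume that $Q(\hRel{h}(D)) = \hRel{h}(Q(D))$ holds for every $Q \in \SPJUlastA$ and every $(M,K)$-database $D$, and deduce each semiring-homomorphism axiom by constructing a witnessing instance. The strategy parallels Proposition~\ref{uniform}: since the \SPJUlastA-semantics is algebraically uniform by construction, commutation on all queries forces $h$ to respect each operator that can appear in annotation expressions. Explicitly, I would:
(i) force $h(0_{\sK})=0_{\sKprime}$ from the requirement that $\hRel{h}(R)$ have finite support for any finite-support $R$, already using the identity query;
(ii) force $h(1_{\sK})=1_{\sKprime}$ by applying $\Agg_M$ to a one-tuple relation with annotation $1_{\sK}$ and comparing the annotation on both sides;
(iii) force $h(k_1 +_{\sK} k_2)=h(k_1)+_{\sKprime} h(k_2)$ using a projection collapsing two tuples annotated with $k_1$ and $k_2$; and
(iv) force $h(k_1 \cdot_{\sK} k_2)=h(k_1)\cdot_{\sKprime} h(k_2)$ using a natural join of two one-tuple relations whose annotations are $k_1$ and $k_2$. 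Each construction isolates the corresponding operation in the annotation of some output tuple, so commutation on that single query pins down the axiom.

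The main obstacle will be the aggregation case in the forward direction: one must carefully separate the annotation-level action of $\hRel{h}$ (via $h$) from its value-level action on tuples whose contents live in $K\otimes M$ (via $h^M$), and then verify that these two actions together reproduce what $\Agg_M$ does on $\hRel{h}(R)$. Well-definedness of $h^M$ on the equivalence classes modulo $\sim$ is what makes the two computations agree; once that is in hand the rest reduces to routine unfolding and the SPJU case handled by the prior work.
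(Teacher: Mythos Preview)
Your proposal is correct and follows essentially the same approach as the paper: structural induction on $Q$, deferring the SPJU cases to~\cite{GKT-PODS07} and handling $\Agg_M$ by direct unfolding of the definition together with the formula $h^M(\sum k_i\notimes m_i)=\sum h(k_i)\notimes m_i$. The paper's own proof is only a brief sketch along exactly these lines; your write-up simply fills in the details (including the only-if direction witnesses) that the paper leaves implicit.
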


The proof is by induction on the query structure, and is
straightforward given that for the constructs of SPJU queries
homomorphism commutation was shown in \cite{GKT-PODS07}, while
commutation for the new $\Agg_M$ construct follows directly from the
definition.
\begin{example}
\label{QueryResult}

Consider the following $\NX$-relation $R$: 

$$
\begin{array}{|c||l|}
\hline
 \mathit{Sal} & \\
\hline
\hline
20 & r_1 \\
10 & r_2 \\
30 & r_3 \\
\hline
\end{array}
$$
%

Let $M$ be some commutative monoid, then $\Agg_M(R)$ consist of a single tuple with value $r_1\notimes 20+_{\sKM}r_2\notimes 10 +_{\sKM}r_3\notimes 30$. The intuition is that this
value captures multiple possible aggregation values, each of which may be
obtained by mapping the $r_i$ annotations to $\Nat$, standing for the multiplicity of the corresponding tuple. The commutation with homomorphism allows us to
first evaluate the query and only then map the $r_i$'s, changing directly the expression in the query result. For example, if $M=SUM$ and we map $r_1$ to 1,$r_2$ to $0$,$r_3$ to $2$, we obtain $1 \notimes 20 +_{\sKM} 2 \notimes 30= 1 \notimes 20 +_{\sKM} 1 \notimes 30+_{\sKM} 1 \notimes 30= 1\otimes 80$ (which
corresponds to the $M$ element $80$). As another example, the commutation with homomorphisms
allows us to propagate the deletion of the first tuple in $R$, by simply setting in the aggregation result $r_1=0$ (keeping the other annotations intact) and obtaining $2 \otimes 30= (1+1) \otimes 30$ =
$1 \otimes 30+ 1 \otimes30 = 1\otimes (30+30) = 1 \otimes 60$.
\end{example}

We further demonstrate an application for {\em security}.

\begin{example}
Consider the following relation $R$, annotated by elements from the security semiring $\bbS$.
$$
\begin{array}{|c||l|}
\hline
 \mathit{Sal} & \\
\hline
\hline
20 & \mathsf{S} \\
10 & 1_{\sSS} \\
30 & \mathsf{S} \\
\hline
\end{array}
$$

Recall (from Section \ref{semiringsubsec}) the order relation $1_{\sSS}<\mathsf{C}<\mathsf{S}<\mathsf{T}<0_{\sSS}$;
a user with credentials $\mathsf{cred}$ can only view tuples annotated with security level equal or less than $\mathsf{cred}$. 
%
Now let $M = \MaxAgg$ and we obtain:
$\Agg_{\sMaxAgg}(R)= \mathsf{S}\notimes 20+_{\sKM}1_{\sSS}\notimes 10 +_{\sKM}\mathsf{S}\notimes 30 = \mathsf{S} \notimes (20 +_{MAX} 30) +
1_{\sSS} \notimes 10$ and we get $\mathsf{S} \notimes 30 + 1_{\sSS} \notimes 10$.

Assume now that we wish to compute the query results as viewed by a user with security credentials $\mathsf{cred}$. A naive computation would
delete from $R$ all tuples that require higher credentials, and re-evaluate the query (which in general may be complex). But observe that
the deletion of tuples is equivalent to applying to $R$ a homomorphism that
maps every annotation $t>\mathsf{cred}$ to $0$, and $t\leq \mathsf{cred}$ to $1$. Using homomorphism commutation we can do better
by applying this homomorphism only on the result representation (namely $\mathsf{S} \notimes 30 + 1_{\sSS} \notimes 10$).
For example, for a user with credentials $\mathsf{C}$, we map $\mathsf{S}$ to $0$  and $1_{\sSS}$ to $1$, and obtain
$0 \notimes 30 + 1 \notimes 10 = 1 \otimes 10$; similarly for a user with credentials $\mathsf{S}$ we get $1 \notimes 30 + 1 \notimes 10
= 1 \notimes (30+_{\sMaxAgg} 10)= 1 \otimes 30$.

\end{example}


From the above definition of the semantics for aggregation, it is obvious
that the poly-size overhead property is fulfilled. Indeed, consider the
case of provenance for summation as in Example \ref{QueryResult}, and compare it to the naive representation provided in the Introduction. Instead of
having to list all (exponentially many) options for the sum of salaries,
we used an expression in $K \otimes \SumAgg$ that is of linear
size with respect to the input to the aggregation. As exemplified, the possible aggregate answers
now correspond to different valuations for the provenance tokens,
applied to this expression.

\subsection{Group By}

So far we have considered aggregation in a limited context, where the input relation contains a single attribute. In common cases, however, aggregation is used on arbitrary relations and
in conjunction with grouping, so we next extend the algebra to handle such an operation. The idea behind the construction is quite simple: we separately group the tuples according to
the values of their ``group-by" attributes, and the aggregated values for each such group are
computed similarly to the computation for the $AGG$ operator. When considering the {\em annotation} of the aggregated tuple, we encounter a technical difficulty: we want this annotation to be equal
$1_{\sK}$ if the input relation includes at least one tuple in the corresponding group, and $0_{\sK}$ otherwise (for intuition, consider the case of bag relations, in which the aggregated result can have at most
multiplicity 1); we consequently enrich our structure to include an additional construct $\delta$ that will capture that,
as follows:

%

\begin{definition}
A (commutative) $\delta${-semiring} is an algebraic structure
$(K,+_{\sK},\cdot_{\sK} ,0_{\sK} ,1_{\sK},\delta_{\sK})$ where
$(K,+_{\sK},\cdot_{\sK} ,0_{\sK},1_{\sK})$ is a commutative semiring
and $\delta_{\sK} :K \rightarrow K$ is a unary operation satisfying
the ``$\delta$-laws''
$\delta\sK (0_{\sK}) = 0_{\sK}$ and
$\delta_{\sK} (n1_{\sK}) = 1_{\sK} $ for all $n\geq 1$.
If $K$ and $K'$ are $\delta$-semirings then a {\em homomorphism}
between them is a semiring homomorphism $h:K\rightarrow K'$,
for which we have in addition $h(\delta_{\sK}(k)) = \delta_{\sKprime}(h(k))$.
\end{definition}

The $\delta$-laws completely determine $\delta_{\bbB}$ and
$\delta_{\bbN}$. But they leave a lot of freedom for the
definition of $\delta_{\sK}$ in other semirings;
in particular for the security semiring,
a reasonable choice for $\delta_{\bbS}$ is the identity function.
%

As with any equational axiomatization, we can construct the
commutative $\delta$-semiring {\em freely generated}
by a set $X$, denoted $\NXD$, by taking the quotient of the set of \\
$\{+,\cdot ,0 ,1 ,\delta \}$-algebraic expressions
by the congruence generated by the equational laws of commutative semirings
and the $\delta$-laws. For example, if $e$ and $e'$ are elements of
$\NXD$ (i.e., congruence classes of expressions given by some representatives)
then $e+_{\NXD}e'$ is the congruence class of the expression $e+e'$.
The elements of $\NXD$ are not standard polynomials
but certain subexpressions can be put in polynomial form,
for example $\delta(2+3xy^2)$ or $3+2\delta(x^2+2y)z^2$.

%
%

We are now ready to define the group by (denoted $GB$) operation; subsequently we exemplify its use, including in particular the role of $\delta$:

\begin{definition}
\label{DefAgg}
Let $R$ be a $K$-relation on set of attributes $U$, let $U' \subseteq U$ be a subset of attributes that will be grouped
and $U'' \in U$ be the subset of attributes with values in $M$ (to be aggregated). We assume that $U'\cap U''=\emptyset$. For a tuple $t$, we define $T = \{t' \in supp(R) \mid \forall u \in U'~~
t'(u)=t(u)\}$.

We then define the aggregation result $R'=GB_{U',U''}(R)$ as follows:

\begin{equation*}
R'(t)=
\begin{cases}
\text{$\delta_{\sK}(\Sigma_{t' \in T}R(t'))$} & \text{$T \neq \phi$, and} \\
\text{} & \text{$\forall u \in U''~~ t(u)=\Sigma_{t' \in T} R(t') \otimes t'(u)$}
\\
0 &\text{Otherwise.}
\end{cases}
\end{equation*}

\end{definition}



\begin{example}
\label{AlgebraExample}

Consider the relation $R$:
$$
\begin{array}{|cc||l|}
\hline
\mathit{Dept} & \mathit{Sal} & \\
\hline
\hline
\mathsf{d_1} &  20 & r_1 \\
\mathsf{d_1} & 10 & r_2 \\
\mathsf{d_2} & 10 & r_3 \\
\hline
\end{array}
$$
and a query $GB_{\{Dept\},Sal}R$, where the monoid used is SUM. The result (denoted R') is:
$$
\begin{array}{|cc||l|}
\hline
\mathit{Dept} & \mathit{Sal} & \\
\hline
\hline
\mathsf{d_1} & r_1 \notimes 20 +_{\sK \notimes SUM} r_2 \notimes 10 & \delta\sK (r_1+_{\sK}  r_2) \\
\mathsf{d_2} & r_3 \notimes 10 & \delta\sK (r_3) \\
\hline
\end{array}
$$
Each aggregated value (for each department) is computed very similarly to the computation in Example \ref{QueryResult}. Consider the provenance annotation of the first tuple: intuitively,
we expect it to be $1_{\sK}$ if at least one of the first two tuples of $R$ exists, i.e. if at least one out of $r_1$ or $r_2$ is non-zero. Indeed the expression is $\delta(r_1+_{\sK}  r_2)$
and if we map $r_1,r_2$ to e.g. $2$ and $1$ respectively, we obtain $\delta_{\Nat}(3) =1$. 
\end{example}

We use \SPJUlastAGB as the name for relational algebra with the two new operators AGG and GB. We note that the poly-size overhead property is still
fulfilled for queries in \SPJUlastAGB; commutation with homomorphism also extends to \SPJUlastAGB (see proof in the Appendix).

Recall that an additional desideratum from the semantics was bag / set compatibility. Recall that sets and bags are modeled by $K=\bbN$ and $K=\bbB$ respectively. We next study compatibility in a more general way,
for arbitrary $K$ and $M$. 

\subsection{Annotation-aggregation compatibility}
\label{Compat}
The first desideratum we listed was an obvious sanity check: whatever
semantics we define, when specialized to the familiar aggregates of
max, min and summation, it should produce familiar results. Since we
had to take an excursion through the tensor product $K\otimes M$, this
familiarity is not immediate. However, the following proposition holds (its correctness will follow from theorems \ref{compatibilityIdempotence} and
\ref{compatibilityHomomorphism}).
\begin{proposition}
\label{familiarity}
In the following constructions: $\bbB\otimes\MaxAgg$,
$\bbB\otimes\MinAgg$, and $\bbN\otimes\SumAgg$,
$\iota : M\rightarrow K\otimes M$ where
$\iota(m) = 1_{\sK} \otimes m$ is a monoid {\em isomorphism}.
\end{proposition}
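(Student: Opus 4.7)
The plan is to verify the three requirements for a monoid isomorphism: that $\iota$ is a well-defined monoid homomorphism, injective, and surjective. The homomorphism property is immediate from two of the tensor congruence laws, since $\iota(0_{\sM}) = 1_{\sK}\otimes 0_{\sM} = 0_{\sKM}$ and $\iota(m +_{\sM} m') = 1_{\sK}\otimes(m +_{\sM} m') = 1_{\sK}\otimes m +_{\sKM} 1_{\sK}\otimes m' = \iota(m) +_{\sKM} \iota(m')$. The interesting content is therefore injectivity and surjectivity, which I would handle by separate general arguments that exploit the specific structure of $K$ and $M$ in each of the three cases.

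For injectivity, I would invoke the universal property of $K\otimes M$ (stated in Section~\ref{Tensor}): every monoid homomorphism $f : M \to W$ into a $K$-semimodule $W$ extends uniquely to a $K$-semimodule homomorphism $f^{*} : K\otimes M \to W$ with $f^{*}\circ\iota = f$. In each of the three cases the target monoid $M$ already carries a natural $K$-semimodule structure: $\SumAgg$ is an $\bbN$-semimodule because \emph{every} commutative monoid is, while $\MaxAgg$ and $\MinAgg$ are $\bbB$-semimodules precisely because $\max$ and $\min$ are idempotent (recall from Section~\ref{Semimodules} that a commutative monoid admits a $\bbB$-semimodule structure iff its operation is idempotent). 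Applying the universal property to $f = \mathrm{id}_M$ then produces a retraction $\varphi : K\otimes M \to M$ with $\varphi\circ\iota = \mathrm{id}_M$, which forces $\iota$ to be injective.

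For surjectivity I would reduce an arbitrary element $\sum_i k_i \otimes m_i$ of $K\otimes M$ to the form $1_{\sK}\otimes m$ using only the tensor laws, treating the three cases separately. For $\bbN\otimes\SumAgg$, writing $k_i = 1+\cdots+1$ ($k_i$ summands) in $\bbN$ and applying both distributivity laws of the tensor rewrites each $k_i\otimes m_i$ as $1\otimes(k_i\cdot m_i)$, where $k_i\cdot m_i$ denotes the $k_i$-fold sum of $m_i$ in $\SumAgg$; summing over $i$ then yields $1\otimes\bigl(\sum_i k_i\cdot m_i\bigr)$. For $\bbB\otimes\MaxAgg$ and $\bbB\otimes\MinAgg$, I would first discard all summands with $k_i=\bot$ (each equals $0_{\sKM}$ by the law $0_{\sK}\otimes m = 0_{\sKM}$); the remaining scalars are all $1_{\sK}=\top$, and iterated application of $1_{\sK}\otimes m +_{\sKM} 1_{\sK}\otimes m' = 1_{\sK}\otimes(m +_{\sM} m')$ merges them into a single $1_{\sK}\otimes(m_{i_1} +_{\sM} \cdots +_{\sM} m_{i_j})$; an empty sum collapses to $0_{\sKM}=\iota(0_{\sM})$. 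The main bookkeeping step to verify is that the reduction is genuinely exhaustive in each case, which ultimately hinges on the matching between the additive behavior of $K$ and the behavior of $M$ (repeated addition absorbable into the monoid in the $\bbN/\SumAgg$ case, idempotence of $\top$ matching idempotence of $\max$/$\min$ in the $\bbB$ cases); the same matching is what will underlie the general compatibility theorems referenced in the statement.
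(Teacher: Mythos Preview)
Your proof is correct and in fact more complete than what the paper offers. The paper does not give a standalone argument for this proposition; it simply says that correctness follows from Theorems~\ref{compatibilityIdempotence} and~\ref{compatibilityHomomorphism}. Those theorems establish \emph{compatibility}, which by definition is only injectivity of $\iota$: Theorem~\ref{compatibilityIdempotence} builds the retraction $h(\sum_{i\in I} k_i\otimes m_i)=\sum_{j:k_j\neq 0} m_j$ for idempotent $M$ and positive $K$ (covering the $\bbB/\MaxAgg$ and $\bbB/\MinAgg$ cases), and Theorem~\ref{compatibilityHomomorphism} builds $h(\sum k_i\otimes m_i)=\sum h'(k_i)m_i$ from a semiring homomorphism $h':K\to\bbN$ (covering $\bbN/\SumAgg$ with $h'=\mathrm{id}$). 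Your injectivity argument is the same idea phrased via the universal property of $K\otimes M$ rather than by writing down the retraction and checking well-definedness by hand; the two are equivalent, and your formulation is arguably cleaner.

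Where you go beyond the paper is surjectivity: the paper never addresses it explicitly, presumably regarding it as evident. Your explicit reductions fill that gap, and they amount to showing that in these three specific cases the retraction $h$ is actually a two-sided inverse (i.e., $\iota\circ h=\mathrm{id}_{K\otimes M}$ as well). So your decomposition into homomorphism/injective/surjective is the honest one; the paper's deferral to the compatibility theorems technically leaves the surjectivity half unstated.
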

and this means our semantics satisfies the set/bag compatibility property
because in these cases computing in $K\otimes M$ exactly mirrors
computing in $M$.

But of course, we are also interested in working with other semirings,
in particular the provenance semiring, for which $\NX \otimes M$ and $M$ are in general not isomorphic (in particular, $\iota$ is not
surjective and thus not an isomorphism). In fact, the whole point of working in $\NX\otimes\MaxAgg$,
for example, is to add annotated aggregate
computations to the domain of values. Most of these do not correspond to actual real numbers
as e.g. $\iota(\MaxAgg)$ is a strict subset of $\NX\otimes\MaxAgg$
(and similarly $\iota(\SumAgg)$ is a strict subset of $\NX\otimes\SumAgg$ etc.). However, when provenance tokens are
valuated to obtain set (or bag) instances, we can go back into
$\iota(\MaxAgg)$ (or $\iota(\SumAgg)$ etc.), and then we should obtain
familiar results by ``stripping off'' the $\iota$. It turns out that
this works correctly with $\NX$ but not necessarily with arbitrary
commutative semirings $K$. The reason is that not only that $\iota$ is not an isomorphism, but in general it may be
be {\em unfaithful} (not injective).
Indeed
$\iota :\SumAgg \rightarrow \bbB\otimes \SumAgg$ is not injective:
$$
\iota(4) = \iota(2+2) = \iota(2) +_{\sKM} \iota(2) =
\top \notimes 2 +_{\sKM} \top \notimes 2 =
$$
$$
= (\top \vee \top) \notimes 2 = \top \notimes 2 = \iota(2)
$$
This is not surprising, as it is related to the well-known
difficulty of making summation work properly with set semantics. In general, we thus define compatibility as follows:

\begin{definition}
We say that a  commutative semiring $K$ and a commutative monoid $M$ are
{\em compatible} if $\iota$ is injective. 
\end{definition}

The point of the definition is that when there is compatibility,
we can work in $K\otimes M$ and whenever the results
belong to $\iota(M)$, we can {\em safely}
read them as familiar answers from $M$. We give three theorems that capture some general conditions for compatibility.


First, we note that if we work with a semiring in which $+_{\sK}$ is idempotent, such as $\bbB$ or $\bbS$, a compatible monoid must also be idempotent
(e.g. \MinAgg\ or \MaxAgg\ but not \SumAgg):
\begin{proposition}
Let $K$ be some commutative semiring such that $+_{\sK}$ is idempotent, and let $M$ be some commutative monoid. If $M$ is compatible with $K$, then $+\sM$ is idempotent.
\end{proposition}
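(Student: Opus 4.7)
The plan is to show $m +_M m = m$ for every $m \in M$ by passing through $\iota$ and using the tensor congruence laws together with the hypothesis that $\iota$ is injective.

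First I would compute $\iota(m +_M m)$ using the defining laws of the tensor $K\otimes M$ given in Section~\ref{Tensor}. By the congruence law $k\notimes(m +_{\sM} m') \sim k\notimes m +_{\sKM} k\notimes m'$ applied with $k=1_{\sK}$ and $m'=m$, we get
\[
\iota(m +_M m) \;=\; 1_{\sK}\notimes(m +_{\sM} m) \;=\; 1_{\sK}\notimes m +_{\sKM} 1_{\sK}\notimes m.
\]
Next I would collapse the right-hand side using the ``scalar'' congruence law $(k +_{\sK} k')\notimes m \sim k\notimes m +_{\sKM} k'\notimes m$ in reverse, yielding
\[
1_{\sK}\notimes m +_{\sKM} 1_{\sK}\notimes m \;=\; (1_{\sK} +_{\sK} 1_{\sK})\notimes m.
\]

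Now the assumption that $+_{\sK}$ is idempotent gives $1_{\sK} +_{\sK} 1_{\sK} = 1_{\sK}$, so the previous expression reduces to $1_{\sK}\notimes m = \iota(m)$. Chaining the equalities, $\iota(m +_M m) = \iota(m)$. Finally, since $K$ and $M$ are compatible, $\iota$ is injective, and therefore $m +_M m = m$, i.e., $+_{\sM}$ is idempotent.

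There is no real obstacle here; the only thing to be careful about is to apply the congruence laws in the correct direction and to make clear that the idempotence of $+_{\sK}$ is used specifically at $1_{\sK}$, which is the element sitting on the left of the tensor through the embedding $\iota$. The argument works for an arbitrary $m$, so idempotence of $+_{\sM}$ follows uniformly.
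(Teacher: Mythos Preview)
Your proof is correct and is essentially the same as the paper's: both chain the tensor congruence laws $(k+_{\sK}k')\notimes m \sim k\notimes m +_{\sKM} k'\notimes m$ and $k\notimes(m+_{\sM}m') \sim k\notimes m +_{\sKM} k\notimes m'$ together with $1_{\sK}+_{\sK}1_{\sK}=1_{\sK}$ to obtain $\iota(m)=\iota(m+_{\sM}m)$, and then invoke injectivity of $\iota$. The only difference is the order in which the equalities are written, which is immaterial.
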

\begin{proof}
$\iota(m) = 1_{\sK}\otimes m = (1_{\sK}+_{\sK}1_{\sK})\otimes m
= 1_{\sK}\otimes m +_{\sKM} 1_{\sK}\otimes m =
1_{\sK}\otimes (m +_{\sM} m) = \iota(m +_{\sM} m)$
\end{proof}

Nicely enough, idempotent aggregations are compatible with {\em every}
annotation semiring $K$ that is {\em positive with respect to $+_{\sK}$}. $K$ is said to be positive with respect to $+_{\sK}$ if $k+_{\sK}k'=0_{\sK} \Rightarrow k=k'=0_{\sK}$. For instance, $\bbB$, $\bbN$, $\bbS$ and $\NX$ are such semirings (but not $(\bbZ,+,\cdot,0,1)$). The following theorem holds:

\begin{theorem}
\label{compatibilityIdempotence}
If $M$ is a commutative monoid such that $+_{\sM}$ is idempotent, then $M$ is compatible with any commutative semiring $K$ which is positive with respect to $+_{\sK}$.
\end{theorem}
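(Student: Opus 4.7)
My plan is to construct a monoid homomorphism $\bar\phi : K \otimes M \to M$ that splits $\iota_K$; once we have such a $\bar\phi$, injectivity of $\iota_K$ (and hence compatibility) is immediate. I would first define $\phi$ on raw bags of simple tensors by
$$
\phi\!\left(\ssum_i k_i \notimes m_i\right) \;=\; \sum_{i \,:\, k_i \neq 0_{\sK}} m_i,
$$
where the sum on the right is taken in $M$, with the empty sum interpreted as $0_{\sM}$. This $\phi$ is manifestly additive in bag union, and $\phi(1_{\sK} \notimes m) = m$ (since $1_{\sK} \neq 0_{\sK}$ in any non-trivial semiring), so the induced $\bar\phi$ will indeed be a left inverse of $\iota_K$.

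\textbf{Where the hypotheses enter.} The main calculation will be verifying that $\phi$ sends the two sides of each of the four defining relations of $\sim$ to the same element of $M$. Axioms (2) and (4) are immediate (both sides collapse to $0_{\sM}$), and axiom (3) splits cleanly on whether $k = 0_{\sK}$. The crux is axiom (1), $(k +_{\sK} k') \notimes m \sim k \notimes m +_{\sKM} k' \notimes m$: in the subcase $k, k' \neq 0_{\sK}$ I plan to collapse $m +_{\sM} m$ to $m$ using idempotence of $+_{\sM}$, while in the subcases where at least one of $k, k'$ is $0_{\sK}$ I invoke positivity of $+_{\sK}$ to conclude $k +_{\sK} k' \neq 0_{\sK}$ iff at least one of them is nonzero. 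This is exactly where both hypotheses of the theorem enter.

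\textbf{Descent, and the main obstacle.} A priori, since $\sim$ is the smallest congruence with respect to both $+_{\sKM}$ and $*_{\sKM}$, descent of $\phi$ would seem to require $\phi$ to be $*_{\sKM}$-compatible as well. This compatibility in fact \emph{fails} whenever $K$ has zero divisors, and the naive alternative plan of endowing $M$ with a $K$-action $k *' m := m$ for $k \neq 0_{\sK}$ and $0_{\sK} *' m := 0_{\sM}$ then appealing to the universal property runs into the same wall: semimodule axiom (5), $(k_1 \cdot_{\sK} k_2) *' m = k_1 *' (k_2 *' m)$, breaks down without a no-zero-divisor assumption on $K$. The way I plan to dodge this is the observation that $*_{\sKM}$-compatibility of $\sim$ is automatic given the axioms and $+_{\sKM}$-compatibility: the map $\mu_k : B \mapsto k *_{\sKM} B$ is a $+_{\sKM}$-monoid endomorphism of bags that sends each of the four axiom instances to another instance of the same axiom, so any derivation of $B_1 \sim B_2$ from the axioms and $+_{\sKM}$-compatibility can be pushed through $\mu_k$ to a derivation of $k *_{\sKM} B_1 \sim k *_{\sKM} B_2$. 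Additivity plus the four axiom checks therefore suffice, and $\phi$ descends to the desired monoid homomorphism $\bar\phi$ (which, importantly, need not itself be a $K$-semimodule homomorphism---and this is exactly why the proof goes through for $K$ with zero divisors).
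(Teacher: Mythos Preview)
Your proof is correct and follows the same route as the paper: define the retraction $h\bigl(\sum_i k_i\notimes m_i\bigr)=\sum_{i:\,k_i\neq 0_{\sK}} m_i$, verify it respects the four generating relations of $\sim$ (idempotence of $+_{\sM}$ handles the first axiom when both coefficients are nonzero, positivity of $+_{\sK}$ makes the zero/nonzero case split exhaustive), and conclude that $h\circ\iota=\mathrm{id}_M$ forces $\iota$ injective.

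Where you go \emph{beyond} the paper is the descent discussion. The paper's appendix proof checks only the four generating relations and declares $h$ well-defined, never addressing why closure of $\sim$ under $*_{\sKM}$ creates no further obstruction---and as you correctly note, the equivalence relation $\phi(B_1)=\phi(B_2)$ is \emph{not} itself $*_{\sKM}$-stable when $K$ has zero divisors (e.g.\ $k_2\notimes m$ and $1_{\sK}\notimes m$ have the same $\phi$-image, yet $k_1*_{\sKM}(-)$ separates them whenever $k_1k_2=0_{\sK}$ with $k_1,k_2\neq 0_{\sK}$), so this is a genuine point. Your fix---that each $\mu_k$ is a $+_{\sKM}$-endomorphism carrying axiom instances to axiom instances, so the $+_{\sKM}$-congruence generated by the axioms is already $*_{\sKM}$-stable and hence equals $\sim$---is exactly right and closes the gap cleanly. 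Your side remark that the alternative universal-property route would fail at semimodule axiom~(5) in the presence of zero divisors is likewise correct.
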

\begin{proof} [sketch]
We define $h: K \notimes M \rightarrow M$ as \\* $h(\sum_{i \in I} k_{i} \notimes m_i) = \sum_{j \in J} m_j$ where
$J=\{j \in I | k_{j} \neq 0\}$. We can show that $h$
is well-defined (details deferred to the Appendix); since
$\forall m \in M~~ h \circ \iota(m) = m$, $\iota$ is injective and thus $K$ and $M$ are compatible.
\end{proof}

For general (and in particular non-idempotent) monoids (e.g. \SumAggWithSpace) we identify a sufficient condition on $K$ (which in particular holds for $\NX$), that allows for compatibility:

\begin{theorem}
\label{compatibilityHomomorphism}
Let $K$ be a commutative semiring.
If there exists a semiring homomorphism from $K$ to $\bbN$ then
$K$ is compatible with all commutative monoids.
\end{theorem}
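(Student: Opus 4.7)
The plan is to use the homomorphism $h : K \to \bbN$ to reduce compatibility with $K$ to compatibility with $\bbN$, which in turn follows because $\bbN \otimes M$ is naturally isomorphic to $M$ itself.

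First I would establish the auxiliary fact that $\iota_\bbN : M \to \bbN \otimes M$ is an isomorphism for every commutative monoid $M$. Recall that any commutative monoid $M$ has a canonical $\bbN$-semimodule structure given by $nm = m +_{\sM} \cdots +_{\sM} m$ ($n$ times). The universal property of the tensor product (mentioned in Section~\ref{Tensor} and proved in the appendix) applied to the identity homomorphism $\mathrm{id}_M : M \to M$ yields a unique $\bbN$-semimodule homomorphism $\phi : \bbN \otimes M \to M$ with $\phi \circ \iota_\bbN = \mathrm{id}_M$. Explicitly, $\phi(\sum_i n_i \notimes m_i) = \sum_i n_i m_i$, and well-definedness with respect to the congruence $\sim$ amounts to a direct check of the four defining relations, each of which reduces to the definition of $nm$ together with the distributivity/identity laws of $M$. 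Since $\phi \circ \iota_\bbN$ is the identity, $\iota_\bbN$ is injective (and in fact an isomorphism, since every simple tensor $n \otimes m$ equals $\iota_\bbN(nm)$ modulo $\sim$).

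Next I would use the ``lifting'' construction from the end of Section~\ref{Tensor} applied to $h : K \to \bbN$, yielding a monoid homomorphism
\[
h^M : K \otimes M \longrightarrow \bbN \otimes M, \qquad h^M\Bigl(\ssum k_i \notimes m_i\Bigr) = \ssum h(k_i) \notimes m_i.
\]
The key observation is then that $h^M \circ \iota_K = \iota_\bbN$ as maps $M \to \bbN \otimes M$: indeed, for every $m \in M$,
\[
h^M(\iota_K(m)) = h^M(1_{\sK} \notimes m) = h(1_{\sK}) \notimes m = 1_\bbN \notimes m = \iota_\bbN(m),
\]
using that $h$ is a semiring homomorphism and hence preserves $1$. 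Composing further with the isomorphism $\phi$ gives $\phi \circ h^M \circ \iota_K = \mathrm{id}_M$, so $\iota_K$ has a left inverse and is therefore injective. By definition this says that $K$ and $M$ are compatible.

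The step most worth pausing on is the identification $\bbN \otimes M \cong M$, because the elements of $\bbN \otimes M$ are formally bags of simple tensors modulo $\sim$, and one must verify that the natural evaluation map $\phi$ respects this congruence; once that is in place, everything else is a clean one-line computation using the two semiring-homomorphism axioms $h(1_{\sK}) = 1_\bbN$ and preservation of the tensor structure by $h^M$.
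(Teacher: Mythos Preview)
Your proof is correct and follows essentially the same approach as the paper: both construct a left inverse to $\iota_K$ by sending $\sum_i k_i\notimes m_i$ to $\sum_i h(k_i)\,m_i\in M$ via the canonical $\bbN$-action on $M$. The only difference is packaging: the paper defines this map in one step and checks the four congruence laws directly, whereas you factor it as $\phi\circ h^M$ through $\bbN\otimes M$, thereby reusing the already-established well-definedness of the lifted homomorphism $h^M$ and the universal property of the tensor; this is a mildly cleaner decomposition but the underlying argument is identical.
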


\begin{proof} [sketch]
Let $h'$ be a homomorphism from $K$ to $\bbN$, and $M$ be an arbitrary commutative monoid. We define a mapping $h:K\otimes M \rightarrow M$ by
$h\left(\Sigma k_i\otimes m_i\right) = \Sigma h'(k_i)m_i$.
We show in the Appendix that $h$ is well-defined and that
$h \circ \iota$ is the identity function hence $\iota$ is injective.
\end{proof}

\begin{corollary}
The semiring of provenance polynomials $\NX$ is
compatible with all commutative monoids.
\end{corollary}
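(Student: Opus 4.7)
The plan is to derive this corollary as a direct application of Theorem \ref{compatibilityHomomorphism}. According to that theorem, to show that $\NX$ is compatible with all commutative monoids it suffices to exhibit a single semiring homomorphism $h' : \NX \to \bbN$.

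Such a homomorphism is immediate from the universal property of $\NX$ recalled in Section~\ref{semiringsubsec}: the semiring $\NX$ is freely generated by $X$, so every function $v : X \to \bbN$ extends uniquely to a semiring homomorphism $\NX \to \bbN$, obtained by evaluating each polynomial at the chosen values of its indeterminates. A concrete and convenient choice is the constant valuation $v(x) = 1$ for every $x \in X$; the induced map sends a polynomial $p \in \NX$ to the natural number $p(1,1,\ldots)$, i.e., the sum of its coefficients. One easily checks that this respects $+$, $\cdot$, $0$ and $1$, so it is indeed a semiring homomorphism into $\bbN$.

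Having produced $h'$, Theorem \ref{compatibilityHomomorphism} immediately yields compatibility of $\NX$ with every commutative monoid $M$, completing the proof. There is no real obstacle here; the work has already been done in the theorem, and the only content of the corollary is the observation that the free semiring $\NX$ trivially admits a homomorphism into the initial ``counting'' semiring $\bbN$.
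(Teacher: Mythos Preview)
Your proof is correct and follows exactly the intended route: the paper states this as an immediate corollary of Theorem~\ref{compatibilityHomomorphism}, and the only thing to supply is a semiring homomorphism $\NX\to\bbN$, which you obtain from the universal property of $\NX$ (e.g., via the valuation $x\mapsto 1$). There is nothing more to add.
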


\eat{
Note that there is no homomorphism from $\bbB$ ($\bbS$) to $\bbN$. The following lemma holds:

\begin{lemma}
The semirings $\bbB$ and $\bbS$ are compatible with a commutative monoid $M$ if and only if $+_{\sM}$ is idempotent.
\end{lemma}

The above properties are not surprising, in the sense that $\bbB$ and
$\bbS$-annotated databases are based on set semantics, while
$\bbN$-annotated databases capture bag semantics;
, since it is known that
some aggregate queries such as max and min are ``natural" for
set-semantics while other aggregates such as sum and product require
the use of bag semantics.  Aggregates of the first class are indeed
idempotent, while those of the second class are not.
}

Now consider the security semiring $\bbS$. It is idempotent, and
therefore not compatible with non-idempotent monoids such as
$\SumAgg$.  Still, we want to be able to use $\bbS$ and other
non-idempotent semirings, while allowing the evaluation of aggregation
queries with non-idempotent aggregates. This would work if
we could construct annotations that would allow us to
use Theorem~\ref{compatibilityHomomorphism}, in other words,
if we could {\em combine} annotations from $\bbS$, with multiplicity
annotations (i.e. annotations from $\bbN$).  We explain next the
construction of such a semiring $\bbS\bbN$ (for security-bag), and its
compatibility with any commutative monoid $M$ will follow from the
existence of a homomorphism $h:$\bbS\bbN$ \rightarrow \bbN$.

\paragraph*{Constructing a compatible semiring} We start with the semiring of polynomials $\bbN[\bbS]$, i.e. polynomials where instead of indeterminates(variables) we have members of $\bbS$,
and the coefficients are natural numbers. Already $\bbN[\bbS]$ is compatible with any commutative monoid $M$, as there exists a homomorphism $h:\bbN[\bbS] \rightarrow \bbN$; but if we work with $\bbN[\bbS]$
we lose the ability to use the identities that hold in $\bbS$
and to thus reduce the size of annotations in query results.
We can do better by taking the
quotient of $\bbN[\bbS]$ by the smallest congruence containing
the following identities:
 \begin{itemize}
\item $\forall s_1,s_2 \in \bbS~~ s_1\geq s_2 \Longrightarrow s_1\cdot_{\bbN[\bbS]} s_2=s_1$.
\item $\forall c \in \bbN,s \in \bbS ~~ 0\cdot_{\bbN[\bbS]}s = c\cdot_{\bbN[\bbS]}0_{\bbS} = 0$.
\item $\forall c \in \bbN~~ c\cdot_{\bbN[\bbS]}1_{\bbS} = c$.
\end{itemize}
We will denote the resulting quotient
semiring by $\bbS\bbN$.
It is easy to check that the faithfulness of the embeddings of
$\bbN$ and $\bbS$ in $\bbN[\bbS]$ is preserved
by taking the quotient. Most importantly,
$\bbS\bbN$ is still homomorphic to $\Nat$. Thus,
\begin{corollary}
$\bbS\bbN$ is compatible with any commutative monoid $M$.
\end{corollary}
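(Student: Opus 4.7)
The plan is to reduce to Theorem~\ref{compatibilityHomomorphism} by exhibiting a semiring homomorphism $h:\bbS\bbN\rightarrow\bbN$. Once such an $h$ is in hand, the theorem hands us compatibility of $\bbS\bbN$ with every commutative monoid $M$ at no extra cost.

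First I would define a candidate homomorphism on the polynomial semiring $\bbN[\bbS]$. The natural choice is to ``collapse security levels to mere availability'': define $\bar h:\bbS\rightarrow\bbN$ by $\bar h(0_{\bbS})=0$ and $\bar h(s)=1$ for every $s\neq 0_{\bbS}$ (i.e., for $1_{\sSS},\mathsf{C},\mathsf{S},\mathsf{T}$). By the universal property of $\bbN[\bbS]$ (polynomials with indeterminates from $\bbS$ and natural coefficients), $\bar h$ extends uniquely to a semiring homomorphism $\tilde h:\bbN[\bbS]\rightarrow\bbN$ that sends a polynomial $\sum c_i\cdot s_{i,1}\cdots s_{i,n_i}$ to $\sum c_i\cdot\bar h(s_{i,1})\cdots\bar h(s_{i,n_i})$.

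Next I would verify that $\tilde h$ factors through the congruence defining $\bbS\bbN$, thereby inducing the desired $h:\bbS\bbN\rightarrow\bbN$. There are three defining identities to check. For $c\cdot_{\bbN[\bbS]}1_{\bbS}=c$: $\tilde h(c\cdot 1_{\bbS})=c\cdot\bar h(1_{\bbS})=c\cdot 1=c$. For $0\cdot_{\bbN[\bbS]}s=c\cdot_{\bbN[\bbS]}0_{\bbS}=0$: both sides map to $0$ since either the coefficient or $\bar h(0_{\bbS})$ is zero. The interesting identity is $s_1\geq s_2\Rightarrow s_1\cdot_{\bbN[\bbS]}s_2=s_1$. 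If $s_2\neq 0_{\bbS}$ and $s_1\neq 0_{\bbS}$, then $\tilde h(s_1\cdot s_2)=1\cdot 1=1=\bar h(s_1)$. Since $0_{\bbS}$ is the top of the order $1_{\sSS}<\mathsf{C}<\mathsf{S}<\mathsf{T}<0_{\sSS}$, the only case with $s_2=0_{\bbS}$ forces $s_1=0_{\bbS}$ as well, and both sides map to $0$; the case $s_1=0_{\bbS}$ with $s_2\neq 0_{\bbS}$ is impossible since $0_{\bbS}\geq s_2$ yet we'd need $s_1\cdot s_2=s_1=0_{\bbS}$, which again gives $0$ on both sides under $\tilde h$.

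Having checked that $\tilde h$ respects each generator of the congruence, it respects the whole congruence (as $\tilde h$ is a semiring homomorphism and the kernel of a homomorphism is a congruence closed under the semiring operations), so it descends to a well-defined semiring homomorphism $h:\bbS\bbN\rightarrow\bbN$. Applying Theorem~\ref{compatibilityHomomorphism} with this $h$ yields the corollary. The main obstacle I anticipate is purely a sanity-check one: making sure the order-based identity $s_1\geq s_2\Rightarrow s_1\cdot s_2=s_1$ is correctly handled in the boundary case involving $0_{\bbS}$, which is resolved by noting that $0_{\bbS}$ is maximal in the security order.
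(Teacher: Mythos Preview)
Your proposal is correct and follows the same route as the paper: the paper simply asserts that $\bbS\bbN$ is still homomorphic to $\bbN$ and invokes Theorem~\ref{compatibilityHomomorphism}, while you flesh out the construction of that homomorphism explicitly and verify it respects the defining congruence. One small wording glitch: the case $s_1=0_{\bbS}$, $s_2\neq 0_{\bbS}$ is not ``impossible'' (indeed $0_{\bbS}\geq s_2$ always holds since $0_{\bbS}$ is maximal), but as you then correctly note, both sides map to $0$ under $\tilde h$ anyway, so the verification goes through.
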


\begin{example}

Consider the SUM monoid. Let $R,S$ be the following
$\bbS$-relations which by the embedding of $\bbS$ we
take as $\bbS\bbN$-relations:
$$
\begin{array}{|c||l|}
\hline
\mathit{A} &  \\
\hline
\hline
\mathsf{30} & \mathsf{S}\\
\hline
\multicolumn{2}{c}{} \\
\multicolumn{2}{c}{R}
\end{array}
~~~
\begin{array}{|c||l|}
\hline
\mathit{A} & \\
\hline
\hline
\mathsf{30} & \mathsf{T}\\
\mathsf{10} & 1_{\bbS}\\
\hline
\multicolumn{2}{c}{} \\
\multicolumn{2}{c}{S}
\end{array}
$$
Consider the query: $AGG(R \cup \Pi_{S.A}(S \bowtie R))$. Ignoring the annotations, the expected result (under bag semantics) is 70.
Working within the (compatible) semantics defined by
$\bbS\bbN\otimes\SumAgg$, the query result contains an aggregated value
of $(\mathsf{T} \cdot_{\bbS\bbN} \mathsf{S} +_{\bbS\bbN} \mathsf{S}) \otimes 30$
$+ \mathsf{S} \otimes 10$. We can further simplify this to $\mathsf{T}  \otimes 30 + \mathsf{S} \otimes 30 +
\mathsf{S} \otimes 10 = \mathsf{T}  \otimes 30 + \mathsf{S} \otimes 40$. This means that
e.g. for a user with credentials $\mathsf{T}$ the query result is $1_{\bbS\bbN} \otimes 70$, and we can use
the inverse of $\iota$ to map it to $\Nat$ and obtain $70$.  Similarly, for a user with
credentials $\mathsf{S}$, the query result is mapped to $40$. These are indeed the
expected results.

\end{example}

Note that if we would have used in the above example
$\bbS$ instead of $\bbS\bbN$ we would have
$(\mathsf{T} +_{\bbS} \mathsf{S}) = \mathsf{S}$ so
$(\mathsf{T} +_{\bbS} \mathsf{S}) \otimes 30$ would be the same as
$\mathsf{S} \otimes 30$. For a user with credentials $\mathsf{T}$ we could either use this, leading to the result of
$1_{\bbS} \otimes 40$, or use the same computation done in the example, to obtain $1_{\bbS} \otimes 70$. Indeed, in
$\mathsf{\bbS} \otimes SUM$, we have $1_{\bbS} \otimes 40 = 1_{\bbS} \otimes 70$. This is the same phenomenon demonstrated
in the beginning of this subsection for $\bbB$, where $\iota$ is not injective, preventing us from stripping it away.

Note also that if we would have used $\bbN[\bbS]$
instead of $\bbS\bbN$ then we could not
have done the illustrated simplifications.

\eat{

\bigskip

\val{I put here below the text of the first subsubsection in case we still need
any of it}

We next study bag and set compatibility of the obtained semantics. This turns out to be subtle for queries with aggregation.

The algebra is indeed bag compatible (the proof is straightforward by induction on the query structure):

\begin{theorem}
The suggested algebra \daniel{Give a name?} is bag-compatible.
\end{theorem}

But for set semantics, compatibility holds only under a constraint on the aggregation function:

\begin{theorem}
\label{NoSetCompat}
The algebra $A$ is set-compatible if and only if $+_{\sM}$ is idempotent.
\end{theorem}

\val{Isn't this subsumed by the more general theorem?}

\begin{proof} [sketch]
Assume that $a$ is a value in the domain of $M$, and that for some $GB$ query on a ${\cal B}$-relation, $1\otimes a$ is the value of an aggregation result (it is easy to construct a specific example for such case).
Since in ${\cal B}$ it holds that $1=1\wedge 1$, then by the ${\cal B}\otimes M$ axioms we get that $1\otimes a = (1\wedge 1)\otimes a = 1\otimes a +_{\sM} 1\otimes a = a +_{\sM} a$. Thus $+_{\sM}$ is idempotent.

For the other direction, assume that $+_{\sM}$ is idempotent. Let $S=\{t_1,...,t_n\}$ be a set tuples from a relation
$R$ which are grouped together (and not grouped with any other tuple), $U'\subseteq U$ the group-by attribute subset and
$U''\subseteq U$ the aggregated attributes ($U'\cap U'' = \emptyset$). In set semantics, the group tuple $t$ is in $supp(GB_{U',U''}R)$
iff there exists $t_i\in S$ s.t. $t_i\in supp(R)$. According to the defined algebra, the provenance of the group tuple is $\delta(\Sigma_{i=1}^{n} R(t_i))$, which is 1 iff for some $i$, $R(t_i)=1$ or $t_i\in supp(R)$. As for the value of the aggregation result of some $u\in U''$, it is $\Sigma_{i=1}^{n} R(t_i)\otimes t_i(u)$. This is equivalent (by the idempotence of M) to $\Sigma_{a\in UNIQ}a$,
where $UNIQ=\{a\in M| \exists 1\leq i\leq n, u(t_i)=a\wedge t_i\in supp(R)\}$. The result is what we would expect from set semantics.
%
%
%
%
%
%

\end{proof}

Theorem \ref{NoSetCompat} is not surprising, in the sense that it is well known that some aggregation functions such as MIN and MAX
(that are idempotent) work well with set semantics, while others such as SUM and PROD (that are not idempotent)
work well with bag semantics. The solution in such cases would be to work with bag semantics: but simply working with
$K = \bbN$ is not a general solution, since we still want to be able to work with arbitrary semirings such as $\bbS$.

We next address this problem by generalizing the notion of ``compatibility" between arbitrary $K$ and $M$, and showing how to
 combine the security semiring with $\bbN$ to obtain a semiring compatible to any commutative monoid.}

\section{Nested Aggregation Queries}
\label{GeneralAggSec}
So far we have studied only queries where the aggregation operator is the last one performed. In this section we extend the discussion to queries that involve comparisons on aggregate values. We first demonstrate the difficulties that arise in designing an algebra for such queries, then explain how to extend the construction to overcome these difficulties.

%
%

\paragraph*{Note} For simplicity, all results and examples are presented for queries in which the comparison operator is equality ($=$). However the results can easily be extended to arbitrary comparison predicates, that can be decided for elements of $M$.

\subsection{Difficulties}

We start by exemplifying where the algebra proposed for restricted aggregation queries, fails here:
\begin{example}
\label{ImpossibleExample}

Reconsider the relation (denoted $R'$) which is the result of aggregation query, depicted in Example \ref{AlgebraExample}. Further consider a query $Q_{select}$ that selects from $R'$ all tuples for which the aggregated salary equals $20$. The crux is that deciding the truth value of the selection condition involves interpreting the comparison operator on symbolic representation of values in $R'$; so far, we have no way of interpreting the obtained comparison expression, for instance $r_1 \otimes 20 + r_2 \otimes 10$ ``equals" $20$, and thus we cannot decide the existence of tuples in the selection result.
\end{example}

%
%
%
%
%
%
%
%
%
%
%
%
%
%
%


Note that in the above example, the truth value of the comparison (and consequently the set of tuples in the query result) depends in a {\em non-monotonic way} on the existence of tuples in the (original) input relation $R$: note that if we map $r_1$ to $1$ and $r_2$ to $0$ then the tuple with dept. $d_1$ appears in the query result, but if we map both to $1$, it does not. The challenge that this non-monotonicity poses is fundamental, and is encountered by any algebra on
$(M,K)$-relations. The following proposition, which is the counterpart of proposition \ref{noMK}, holds (proof deferred to the Appendix):

\begin{proposition}
\label{noMKGenAgg}

There is no $(M,K)$-relation semantics for nested aggregation queries with \MaxAgg-(or \MinAgg-)-aggregation
that is both set-compatible and commutes with homomorphisms. Similarly for \SumAgg-aggregation and bag-compatibility.
\end{proposition}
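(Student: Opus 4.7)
The plan is to adapt the monotonicity argument of Proposition~\ref{noMK} to a nested aggregation query. Consider the $\NX$-relation $R$ on schema $(\mathit{Dept},\mathit{Sal})$ with the two tuples $(\mathsf{d}_1,20)\mapsto r_1$ and $(\mathsf{d}_1,10)\mapsto r_2$, and the query $Q=\sigma_{\mathit{Sal}=10}(GB_{\{\mathit{Dept}\},\{\mathit{Sal}\}}(R))$ where the grouping aggregate is $\MaxAgg$. Under set semantics, whether $(\mathsf{d}_1,10)$ appears in $Q(R)$ is non-monotone in the input annotations: the homomorphism $h$ sending $r_1\mapsto\bot$ and $r_2\mapsto\top$ produces the singleton output $\{(\mathsf{d}_1,10):\top\}$, whereas $h'$ with $r_1\mapsto\top$, $r_2\mapsto\top$ makes the aggregated salary equal $20$ and thus yields the empty output.

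Next I would suppose a semantics with both properties exists and apply it to $R$, producing an $(M,\NX)$-relation $Q(R)$ with tuples $t_i=((\mathsf{d}_1,w_i):q_i)$ where $w_i\in\NX\otimes\MaxAgg$ and $q_i\in\NX$. Commutation with homomorphisms gives $\hRel{h}(Q(R))=Q(\hRel{h}(R))$ and $\hRel{h'}(Q(R))=Q(\hRel{h'}(R))$. Set-compatibility together with the first equality forces some index $i$ with $h^M(w_i)=\iota(10)$ and $h(q_i)=\top$. The second equality forces $\hRel{h'}(Q(R))$ to be the empty $\bbB$-relation, and since $\bbB$ is positive (only $\bot\vee\bot=\bot$), this forces $h'(q_j)=\bot$ for \emph{every} $j$; in particular $h'(q_i)=\bot$.

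The contradiction is then exactly the one used in Proposition~\ref{noMK}. By Proposition~\ref{uniform}, algebraic uniformity makes each $q_i$ a $\{+,\cdot,0,1,\delta\}$-expression in $r_1,r_2$. Evaluated on $\bbB$, $+$ and $\cdot$ become OR and AND, while the $\delta$-laws $\delta(0)=0$ and $\delta(n\cdot 1)=1$ make $\delta$ the identity on $\bbB$; hence $q_i$ defines a function that is monotone in each variable, incompatible with the componentwise order $h\leq h'$ and $h(q_i)=\top>\bot=h'(q_i)$. The $\SumAgg$/bag half of the statement is handled identically by taking $R'=\{(20):r_1,(10):r_2\}$ and $Q'=\sigma_{A=20}(\Agg_{\SumAgg}(R'))$, and contrasting $h(r_1)=1,h(r_2)=0$ (bag-compatible output $\{(20):1\}$) with $h'(r_1)=1,h'(r_2)=2$ (aggregate $40$, output empty); $\{+,\cdot,0,1,\delta\}$-expressions over $\bbN$ are still monotone in each argument, so no $q_i$ can be $1$ at $(1,0)$ and $0$ at $(1,2)$.

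The main subtlety I anticipate is closing off the loophole that the value component $w_i$ might itself vary non-monotonically under homomorphisms, ``rerouting'' tuples to different target values across $h$ and $h'$ and so evading a monotonicity argument phrased in terms of a fixed output tuple. The plan sidesteps this by pinning down a single witness tuple $t_i$ for the appearance of $(\mathsf{d}_1,10)$ under $h$ and then observing that the emptiness of the $h'$-output forces every $q_j$ (not merely those whose $h'$-value lands on $(\mathsf{d}_1,10)$) to vanish; the monotonicity contradiction therefore applies directly to $q_i$ in isolation and is independent of how $w_i$ evolves.
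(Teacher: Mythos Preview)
Your proposal is correct and follows essentially the same route as the paper's proof: build a nested query whose output is non-monotone in the input annotations, pick two comparable homomorphisms that witness this, and invoke the monotonicity of polynomial (here $\{+,\cdot,0,1,\delta\}$-) annotations to derive a contradiction. The paper uses the three-tuple $\SumAgg$ instance of Example~\ref{ImpossibleExample} with a selection for value $20$ and the pair $h(r_2)=0$ versus $h'(r_2)=1$, whereas you use a two-tuple instance and treat the $\MaxAgg$/set and $\SumAgg$/bag cases separately; your explicit handling of $\delta$ (identity on $\bbB$, monotone on $\bbN$) and of the ``value-rerouting'' loophole are details the paper leaves implicit, but the argument is the same.
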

\vspace{-2mm}

Consequently, a more intricate construction is required for nested aggregation queries. 

\subsection{An Extended Structure}

We start with an example of our treatment of nested aggregation queries, then give the formal construction.

\begin{example}
\label{ConstructionExample}
Reconsider example \ref{ImpossibleExample}, and recall that the challenge
in query evaluation lies in comparing elements of $K \notimes M$ with elements of $M$ (or $K \notimes M$, e.g. in case of joins). Our solution is to introduce to the semiring $K$ new elements, of the form $[x=y]$ where $x,y \in K \otimes M$ (if we need to compare with $m \in M$, we use $\iota(m)$ instead). The result of evaluating the query in example \ref{ImpossibleExample} (using $M= \SumAgg$) will then be captured by: 

$$
\begin{array}{|cl|l|}
\hline
\mathit{Dept} & \mathit{Sal} & \\ 
\hline
\hline
\mathsf{d}_1 & r_1 \otimes 20  &  \delta(r_1+_{\sK} r_2)\!\cdot_{\sK} \\
& +_{\sKM} r_2 \otimes 10 & \left[r_1 \otimes 20 +_{\sKM} r_2 \otimes 10 = 1_{\sK}\otimes 20\right] \\[1mm]
\mathsf{d}_2 & r_3 \otimes 10 & \delta(r_3)\!\cdot_{\sK}\!\left[r_3 \otimes 10=1_{\sK}\otimes 20\right]\\
\hline
\end{array}
$$

Intuitively, since we do not know which tuples will satisfy the selection criterion, we keep both tuples
and multiply the provenance annotation of each of them by a symbolic equality expression. These equality expressions are kept as symbols until
we can embed the values in $M=\SumAgg$ and decide the equality (e.g. if $K=\Nat$), in which case we ``replace" it by $1_{\sK}$ if it holds or $0_{\sK}$ otherwise. For example, given a homomorphism $h: \NX \rightarrow \bbN$, $h(r_1)=h(r_2)=1$, then $h^M(r_1 \notimes 20 +\sKM r_2 \notimes 10) = h(r_1)\otimes 20 +\sKM h(r_2)\otimes 10 = 1\otimes 30 \neq 1\otimes 20 $, thus the equality expression is replaced  with (i.e. mapped by the homomorphism to) $0_{\sK}$.
\end{example}

We next define the construction formally; the idea underlying the construction is to define a semiring whose elements are polynomials, in which equation elements are additional indeterminates. To achieve that, we introduce for any semiring $K$ and any commutative monoid $M$, the ``domain'' equation
$\widehat{K} = \Nat[K \cup
\{[c_1=c_2] \mid c_1,c_2\in \widehat{K}\notimes M\}]$.
The right-hand-side is a monotone, in fact continuous w.r.t. the usual
set inclusion operator, hence this equation has a set-theoretic least
solution (no need for order-theoretic domain theory).
The solution also has an obvious commutative semiring structure
induced by that of polynomials. The solution
semiring is $\widehat{K}=(X,+_{\widehat{K}},\cdot_{\widehat{K}},0_{\widehat{K}},1_{\widehat{K}})$, and we continue by taking the quotient on $\widehat{K}$
defined by the following axioms.

For all $k_1,k_2 \in K, c_1,c_2,c_3,c_4 \in \widehat{K}\otimes M$:
\begin{eqnarray*}
                   0\!_{\widehat{K}} & \sim & 0_{\sK} \\
                   1\!_{\widehat{K}} & \sim & 1_{\sK} \\
                                 k_1+\!_{\widehat{K}}k_2 & \sim & k_1+_{\sK} k_2  \\
                           k_1\!\cdot\!_{\widehat{K}}k_2 & \sim & k_1+_{\sK} k_2  \\
  \left[c_1 =c_3\right] & \sim &  \left[c_2=c_4\right] \text{(if $c_1 =\!_{\widehat{K} \otimes M}c_2$, $c_3 =\!_{\widehat{K} \otimes M}c_4$)}\\
\end{eqnarray*}
%
and if $K$ and $M$ are such that $\iota$ defined by $\iota(m)=1_{\sK} \notimes m$ is an isomorphism (and let $h$ be its inverse), we further take the quotient defined by: for all $a,b \in K \otimes M$,
\begin{eqnarray*}
\hspace{-30mm}
\text{(*)} \left[a=b\right] & \sim & 1_{\sK}  \text{(if $h(a)=\sM h(b)$)} \\
           \left[a=b\right] & \sim & 0_{\sK}  \text{(if $h(a)\neq\sM h(b)$)} \\
\end{eqnarray*}
\vspace{-5mm}

%
%
%
%
%
%
%
%

We use $K^M$ to denote the semiring obtained by applying the above construction on a semiring $K$ and a commutative monoid $M$. A key property is that, when we are able to interpret the equalities in $M$, $K^M$ collapses to $K$. Formally,

\begin{proposition}
\label{SolveMK}
If $K$ and $M$ are such that $K \otimes M$ and $M$ are isomorphic via $\iota$ then $K^{M}=K$.
\end{proposition}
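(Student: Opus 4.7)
The plan is to establish that the canonical map $K \to K^M$ (sending $k \in K$ to the class of its constant polynomial in $\widehat K$) is a bijection, which upgrades to a semiring isomorphism since it preserves operations on the nose. This amounts to showing two things: surjectivity, that every $\sim$-class has a representative in $K$; and injectivity, that no two distinct $K$-elements get identified.

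For surjectivity, I would exploit the inductive structure of $\widehat K$. Because $\widehat K$ is defined as the least fixed point of a monotone set equation, it is the union of a chain $\widehat K_0 \subseteq \widehat K_1 \subseteq \cdots$ where $\widehat K_0 = \Nat[K]$ and each $\widehat K_{n+1}$ adjoins equation indeterminates whose arguments lie in $\widehat K_n \otimes M$. Every element of $\widehat K$ thus has a finite rank. I would prove by induction on rank that every element of $\widehat K$ is $\sim$-equivalent to some element of $K$. At rank zero we have natural-number polynomials in $K$-elements, which the first four axioms collapse to a single $K$-element, since they state precisely that the $\widehat K$-operations on $K$-elements coincide with the $K$-operations. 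For the inductive step, take $p \in \widehat K_{n+1}$, whose equation indeterminates $[c_1 = c_2]$ have $c_1, c_2 \in \widehat K_n \otimes M$. Applying the induction hypothesis to the scalars of $c_1, c_2$ and using the bilinearity axioms built into the tensor construction, each $c_i$ can be rewritten as some $c_i' \in K \otimes M$, and the congruence axiom $[c_1 = c_3] \sim [c_2 = c_4]$ then yields $[c_1 = c_2] \sim [c_1' = c_2']$. Here the hypothesis of the proposition enters in an essential way: since $\iota : M \to K \otimes M$ is an isomorphism with inverse $h$, the axioms $(*)$ are available, collapsing $[c_1' = c_2']$ to $1_{\sK}$ or $0_{\sK}$ according to whether $h(c_1') =_{\sM} h(c_2')$. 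Substituting these values into $p$ produces a polynomial in $K$-elements, and the base case finishes.

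For injectivity, I would observe that the axioms of the quotient, when restricted to pure $K$-expressions, are tautologies: the first four axioms assert that $\widehat K$-operations on $K$-elements agree with $K$-operations (so among $K$-expressions they identify only elements already equal in $K$), while the congruence axiom and axioms $(*)$ mention equation indeterminates that do not occur in pure $K$-expressions. Hence no rewriting chain can identify two distinct elements of $K$. The main obstacle, I expect, is the inductive step for surjectivity, in particular the bookkeeping that shows how the rank-induction on scalars lifts cleanly through the tensor product to equivalence of the bracketed arguments. This compatibility is exactly what the tensor-product bilinearity axioms and the congruence axiom on brackets are designed to deliver, but formalizing it requires care about how the equivalence relations on $\widehat K$ and on $\widehat K \otimes M$ interact.
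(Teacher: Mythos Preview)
Your proposal is correct and takes essentially the same route as the paper: an induction on nesting depth (the paper calls it ``nesting level,'' you call it ``rank'') that collapses the innermost equation brackets via axiom $(*)$ and works outward until only $K$-elements remain. You are more thorough---the paper's proof addresses only surjectivity and leaves injectivity implicit---but the core argument is the same.
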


The proof (deferred to the Appendix) is by induction on the structure of elements in $K^{M}$, showing that at each step we can ``solve" an equality sub-expression, and replace it with $0_{\sK}$ or $1_{\sK}$.

\paragraph*{Lifting homomorphisms} To conclude the description of the construction we explain how to lift a semiring homomorphism from $h:K \rightarrow K'$ to $h^{M}:K^{M} \rightarrow K'^{M}$,
for any commutative monoid $M$ and semirings $K,K'$. $h^{M}$ is defined
 recursively on the structure of $a \in K^{M}$: if $a \in K$ we define  $h^{M}(a) = h(a)$, otherwise $a=\left[b \otimes m_1=c \otimes m_2\right]$ for some $b,c \in K^{M}$ and $m_1,m_2 \in M$ and we define $h^{M}(a) =
\left[h^{M}(b) \otimes m_1= h^{M}(c) \otimes m_2\right]$. Note that the application of a homomorphism $h^{M}$ maps equality expressions to equality expressions (in which elements in $K'$ appear instead of elements of $K$ appeared before). If $K'$ and $M$ are such that their corresponding $\iota: M \rightarrow K \otimes M$ defined by $\iota(m)=1_{\sK} \otimes M$  is injective, then we may ``resolve the equalities", otherwise the (new) equality expression remains.

\subsection{The Extended Semantics}

The extended semiring construction allows us to design a semantics for general aggregation queries. Intuitively, when the existence of a tuple in the result relies on the result of a comparison involving aggregate values (as in the result of applying selection or joins), we multiply the tuple annotation by the corresponding equation annotation.

 In the sequel we assume, to simplify the definition, that the query aggregates and compares only values of $K^M\otimes M$ (a value $m \in M$ is first replaced by $\iota(m)=1_{\sK}\otimes m$). In what follows, let $R (R_{1},R_{2})$ be $(M,K^{M})$-relations on an attributes set $U$. Recall that for a tuple $t$, $t(u)$ (where $u\in U$) is the value of the attribute $u$ in $t$; also for $U'\subseteq U$, recall that we use $t\mid_{U'}$ to denote the restriction of $t$ to the attributes in $U'$. Last, we use $(K^M\otimes M)^{U}$ to denote the set of all tuples on attributes set $U$, with values from $K^M \otimes M$. The semantics follows:



\eat
{
}


\vspace{-2mm}
\begin{enumerate}
\item {\em empty relation:} $\forall t~~ \phi(t) = 0$.
\item {\em union:}
$\left(R_1 \cup R_2\right)(t) =$
\begin{equation*}
\hspace{-15mm}
\begin{cases}
\text{$\sum_{t' \in supp(R_{1})}R_{1}(t')\cdot \prod_{u \in U}[t'(u)=t(u)]$} & \text{if $t \in supp(R_{1})$} \\
\text{$+ \sum_{t' \in supp(R_{2})}R_{2}(t') \cdot \prod_{u \in U}[t'(u)=t(u)]$}  & \text{$\cup supp(R_{2})$} \\[3mm]
\text{0} &\text{Otherwise.}
\end{cases}
\end{equation*}

\item {\em projection:} Let $U' \subseteq U$, and let $\mathrm{T} = \{t|_{U'} \mid t \in supp(R)\}$. Then 
$\Pi_{U'}(t) =$
\begin{equation*}
\hspace{-15mm}
\begin{cases}
\text{$\sum_{t' \in Supp(R)} R(t')\!\cdot\! \prod_{u\in U'}[t(u)=t'(u)]$} & \text{if $t \in \mathrm{T}$} \\[2mm]
\text{0} & \text{Otherwise.}
\end{cases}
\end{equation*}

\item {\em selection:} If 
$P$ is an equality predicate involving the equation of some attribute $u\in U$ and a value $m\in M$
then $\left(\sigma_{P}(R)\right)(t)= R(t) \!\cdot\! \left[t(u) = \iota(m)\right]$.

\item {\em value based join:} We assume for simplicity that $R_1$ and $R_2$ have disjoint sets of attributes, $U_1$ and $U_2$ resp., and that the join is based on comparing a single attribute of each relation.
Let $u'_1\in U_1$ and $u'_2\in U_2$ be the attributes to join on. For every $t\in(K^M\otimes M)^{U_1\cup U_2}$:\\
\vspace{1mm}
    $\left(R_1\bowtie_{R_1.u_1=R_2.u_2}R_2\right)(t) = $\\ $R_1(t|_{U_1})\!\cdot\!R_2(t|_{U_2})\!\cdot_{\sK}\!\left[t(u_1)=t(u_2)\right]$.
\vspace{1mm}
\paragraph*{Simple Variants} Natural join (when $U_1$ and $U_2$ are not necessarily disjoint) is captured by a similar expression, with the equality sub-expression on the attributes common to $U_1$ and $U_2$; join on multiple values is captured by multiplication by the corresponding multiple equality expressions; in the representation of cartesian product (denoted by $\times$) no equality expressions appear (only $R_1(t|_{U_1})\!\cdot\!R_2(t|_{U_2})$).
\item {\em Aggregation:}
$AGG_{\sM}(R)(t) =$
\begin{equation*}
\hspace{-15mm}
\begin{cases}
\text{ 1} & \text {$t(u) = \sum_{t' \in supp(R)} R(t')\!*_{\sKM}\! t'(u)$} \\[3mm]
\text{0} & \text{otherwise}
\end{cases}
\end{equation*}

\item {\em Group By:} Let $U' \subseteq U$ be a subset of attributes that will be grouped
and $u \in U \backslash U'$ be the aggregated attribute. Then for every $t\in(K^M\otimes M)^{U'\cup \{u\}}$:

$GB_{U',u}R(t) =$
\begin{equation*}
\hspace{-15mm}
\begin{cases}
\text{ $\delta(\left(\Pi_{U'}R\right)(t|_{U'}))$} & \text {$t(u) = \sum_{t' \in supp(R)} (R(t')\cdot_{\sK} $} \\
\text{} & \text{$\prod_{u\in U'}[t'(u) = t(u)])\!*_{\sKM}\! t'(u)$} \\[3mm]
\text{0} & \text{otherwise}
\end{cases}
\end{equation*}
\end{enumerate}

%
%

%

It is straightforward to show that the algebra satisfies set/bag compatibility and poly-size overhead; commutation with homomorphism is proved in the Appendix.

\begin{example}
Reconsider the relation in Example \ref{ConstructionExample}, and let us perform another sum aggregation on $\mathit{Sal}$. The value in the result now contains equation expressions:
\vspace{-2mm}
$$
\begin{array}{l}
\delta(r_1+\!_{\sK} r_2)\!\cdot\!_{\sK}\!\left[r_1 \otimes 20  +\!_{\sKM} r_2 \otimes 10 = 1_{\sK}\otimes 20\right] \\
\!*_{\sKM}\!\left(r_1 \otimes 20 +_{\sKM} r_2 \otimes 10\right)\\
+_{\sKM}\delta(r_3)\!\cdot_{\sK}\!\left[r_3 \otimes 10=1_{\sK}\otimes 20\right]\!*_{\sKM}\!r_3 \otimes 10\\
\end{array}
$$

Given a homomorphism $h:\NX\to \Nat$ we can ``solve" the equations, e.g. if $h(r_1)=1$, $h(r_2)=0$ and $h(r_3)=2$, we obtain an aggregated value of $1\otimes 40$. Note that the aggregation value is not monotone in $r_1,r_2,r_3$: map $r_2$ to $1$ (and keep $r_1$,$r_3$ as before), to obtain $1\otimes 20$.

\eat{
let us calculate the result of the aggregation in this case.

\begin{eqnarray*}
\lefteqn{\delta(1+ 0)\!\cdot\!\left[1 \otimes 20\!+\!_{\bbN\otimes\!M}\! 0 \!\otimes\! 10 = 1\otimes 20\right]\!\cdot\!\left(1 \otimes 20\!+\!_{\bbN\otimes\!M}\! 0 \otimes 10\right)} \\*
\lefteqn{+\!_{\bbN\otimes\!M}\!\delta(2)\!\cdot\!\left[2 \otimes 10= 1\otimes 20\right]\!\cdot\!2 \otimes 10} \\
 & = & 1\!\cdot\!\left[1 \!\otimes\! 20 = 1\otimes 20\right]\!\cdot\!\left(1 \otimes 20\right)\!+\!_{\bbN\otimes\!M}\!1\!\cdot\!\left[2 \otimes 10= 1\otimes 20\right]\!\cdot\!2 \otimes 10 \\
 & = & 1\otimes 20 \!+\!_{\bbN\otimes\!M}2\otimes 10 \\
 & = &1\otimes 40
\end{eqnarray*}
}
\end{example}

\section{Difference}
\label{DifferenceSection}
We next show that via our semantics for aggregation, we can obtain for the first time a semantics for arbitrary queries with difference on $K$-relations. We describe the obtained semantics and study some of its properties.

\subsection{Semantics for Difference}
We first note that difference queries may be encoded as queries with aggregation, using the monoid $\widehat{\bbB}=(\{\bot,\top\},\vee,\bot)$ (the following encoding was inspired by \cite{K-PODS10,BNTW95}):

\begin{tabbing}
$R - S = \Pi_{a_1...a_n}\{\left(GB_{\{a_1,...a_n\},b} (R \times \bot_b \cup S \times \top_b)\right)$\\*~~~~~~~~~~~~ $\bowtie_{a_1,...a_n} (R \times \bot_b)\}$.
\end{tabbing}

$\bot_b$ and $\top_b$ are relations on a single attribute $b$, containing a single tuple $(\bot)$ and $(\top)$ respectively, with provenance $1_{\sK}$. Using the semantics of Section \ref{GeneralAggSec}, we obtain a semantics for the difference operation.

Interestingly, we next show that the obtained semantics can be captured by a simple and intuitive expression. First, we note that since $\widehat{\bbB}$ is idempotent, every semiring $K$ positive with respect to $+_{\sK}$ is compatible with $\widehat{\bbB}$ (see Theorem \ref{compatibilityIdempotence}). The following proposition then holds for every $K,K'$ and every two $(\widehat{\bbB},K)$-relations $R,S$ (proof deferred to the Appendix):

\begin{proposition}
\label{Neg}
For every tuple $t$, semirings $K, K'$ such that $K'^{\widehat{\bbB}}\otimes \widehat{\bbB}$ is isomorphic to $\widehat{\bbB}$ via $\iota(m)=1_{\sK} \otimes m$, if $h:K\rightarrow K'$ is a semiring homomorphism then: \\*
$h^{\widehat{\bbB}}([(R-S)(t)]) = h^{\widehat{\bbB}}\left([S(t) \otimes \top = 0] \!\cdot_{\sK}\!R(t)\right)$.
\end{proposition}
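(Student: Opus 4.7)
The plan is to evaluate the aggregation-based encoding
$R - S = \Pi_{a_1\ldots a_n}\{(GB_{\{a_1,\ldots,a_n\},b}(R \times \bot_b \cup S \times \top_b)) \bowtie_{a_1,\ldots,a_n} (R \times \bot_b)\}$
step by step using the semantics from Section~\ref{GeneralAggSec}, simplify using the monoid structure of $\widehat{\bbB}$ and the tensor product laws, and finally apply $h^{\widehat{\bbB}}$, invoking the stated compatibility hypothesis on $K'$.

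First I would compute the union $R \times \bot_b \cup S \times \top_b$: since $\bot_b$ and $\top_b$ contain a single tuple each with annotation $1_{\sK}$ and carry distinct $b$-values, no equation tokens are generated, and the result associates annotation $R(t)$ to $(t|_{a_1\ldots a_n},\bot)$ and $S(t)$ to $(t|_{a_1\ldots a_n},\top)$. Applying $GB_{\{a_1,\ldots,a_n\},b}$ then produces, for the group indexed by $t|_{a_1\ldots a_n}$, the annotation $\delta_{\sK}(R(t) +_{\sK} S(t))$ together with the aggregated $b$-value $R(t) \otimes \bot +_{K\otimes\widehat{\bbB}} S(t) \otimes \top$. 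Since $\bot = 0_{\widehat{\bbB}}$, the tensor axiom $k \otimes 0_{\sM} = 0_{\sKM}$ forces $R(t) \otimes \bot = 0$, so the aggregated $b$-value collapses to $S(t) \otimes \top$.

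Next I would handle the join and projection. The join with $R \times \bot_b$ on $a_1,\ldots,a_n$ contributes, by the join semantics in Section~\ref{GeneralAggSec}, the factor $R(t) \cdot_{\sK} [S(t) \otimes \top = \iota(\bot)]$. Using $\iota(\bot) = 1_{\sK}\otimes\bot = 0$, this equation token is precisely $[S(t) \otimes \top = 0]$. The outer projection $\Pi_{a_1\ldots a_n}$ merely strips the $b$-coordinate (there is a unique $b$-value per group, so no nontrivial sum is introduced). Combining,
$(R-S)(t) = \delta_{\sK}(R(t) +_{\sK} S(t)) \cdot_{\sK} R(t) \cdot_{\sK} [S(t) \otimes \top = 0]$
as an element of $K^{\widehat{\bbB}}$.

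Finally I would apply $h^{\widehat{\bbB}}$ to both sides and invoke the hypothesis. Under the assumption $K'^{\widehat{\bbB}} \otimes \widehat{\bbB} \cong \widehat{\bbB}$ via $\iota$, any equation token living inside $K'^{\widehat{\bbB}}$ resolves to $0_{K'}$ or $1_{K'}$ as in Proposition~\ref{SolveMK}, so the $[\cdot = 0]$ factor is handled uniformly on both sides. The remaining obligation is the identity $h^{\widehat{\bbB}}(\delta_{\sK}(R(t)+S(t))) \cdot h(R(t)) = h(R(t))$ in $K'$. This is the main obstacle and I would dispatch it by a case split: if $h(R(t)) = 0$ the identity holds trivially, and otherwise the $\delta$-law $\delta(n1_{\sK})=1_{\sK}$ together with positivity of $K'$ (implied by the compatibility with $\widehat{\bbB}$) forces the $\delta$ factor to collapse to $1_{K'}$. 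Putting these pieces together yields the stated equality in $K'^{\widehat{\bbB}}$.
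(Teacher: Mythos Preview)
Your proposal follows essentially the same route as the paper's proof: trace the aggregation encoding step by step, arrive at the expression $\delta(R(t)+S(t)) \cdot R(t) \cdot [S(t)\otimes\top = 0]$, and then argue that the $\delta$-factor is redundant after applying $h^{\widehat{\bbB}}$. The one organizational difference is that the paper first pushes $h^{\widehat{\bbB}}$ through (citing commutation with homomorphisms) and carries out the entire trace in $K'$, where the isomorphism hypothesis lets equality tokens be resolved immediately; you instead trace in $K^{\widehat{\bbB}}$ and apply $h^{\widehat{\bbB}}$ only at the end. The paper also breaks the group-by computation into four explicit tuple cases ($t$ in $\supp(R)\cap\supp(S)$, in $\supp(R)\setminus\supp(S)$, in $\supp(S)\setminus\supp(R)$, in neither), whereas you fold these into the single aggregated value $S(t)\otimes\top$; this is presentational rather than substantive.

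Your final case split to eliminate the $\delta$-factor is slightly compressed compared to the paper's. You split only on whether $h(R(t))=0$, but the paper additionally splits on $h(S(t))$: when $h(S(t))\neq 0$ the equality token $[h(S(t))\otimes\top = 0]$ resolves to $0_{K'}$ and both sides vanish regardless of $\delta$, so the only case where $\delta$ must actually collapse to $1_{K'}$ is $h(S(t))=0$ and $h(R(t))\neq 0$, giving $\delta(h(R(t))+0)=\delta(h(R(t)))$. You should make this explicit, since your stated obligation $\delta(\ldots)\cdot h(R(t)) = h(R(t))$ drops the equality-token factor too early and is not what you actually need to verify.
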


The obtained provenance expression is thus ``equivalent" (in the precise sense of Proposition \ref{Neg}) to $[S(t) \otimes \top = 0] \cdot R(t)$. The following lemma helps us to understand the meaning of the obtained equality expression:

\begin{lemma}
For every semiring $K$ which is positive w.r.t. $+_{\sK}$ and $h:K\rightarrow \bbB$, $h^M\left(\left[S(t) \otimes \top = 0\right]\right) = \top$ iff $h(S(t)) = \bot$. 
\end{lemma}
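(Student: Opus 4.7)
The plan is to unpack the construction of $K^{\widehat{\bbB}}$ and reduce the question to evaluating the formal equality tag $[h(S(t))\otimes\top = 0]$ inside $\bbB^{\widehat{\bbB}}$, using the resolution axiom (*). First I would apply the lifted homomorphism: by the definition of $h^M$ on equality tokens,
$$h^M\bigl([S(t)\otimes\top = 0]\bigr) \;=\; \bigl[h(S(t))\otimes\top \;=\; 0_{\bbB\otimes\widehat{\bbB}}\bigr],$$
so the problem moves entirely into $\bbB^{\widehat{\bbB}}$.

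Next I would verify that the resolution axiom (*) actually fires for $\bbB$ and $\widehat{\bbB}$, i.e.\ that $\iota:\widehat{\bbB}\to\bbB\otimes\widehat{\bbB}$ is an isomorphism. Both $\bbB$ and $\widehat{\bbB}$ are idempotent, so by Theorem \ref{compatibilityIdempotence} the map $\iota$ is injective. Surjectivity follows by a direct enumeration: every simple tensor $b\otimes m$ reduces to $0_{\sKM}$ (whenever $b=\bot$ or $m=\bot$) or to $\top\otimes\top$, and the latter is idempotent under $+_{\sKM}$, so $\bbB\otimes\widehat{\bbB}$ has exactly the two elements $\{0_{\sKM},\,\top\otimes\top\}=\{\iota(\bot),\iota(\top)\}$. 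Hence Proposition \ref{SolveMK} applies and the equality token in $\bbB^{\widehat{\bbB}}$ can be collapsed via (*) to either $1_{\bbB}=\top$ or $0_{\bbB}=\bot$ according to whether its two sides are $\iota$-equal.

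Finally I would finish by a case split on $h(S(t))$. If $h(S(t))=\bot$, then $h(S(t))\otimes\top = \bot\otimes\top = 0_{\bbB\otimes\widehat{\bbB}}$ (both by $0_{\sK}\otimes m\sim 0_{\sKM}$ and by $k\otimes 0_{\widehat{\bbB}}\sim 0_{\sKM}$), so both sides of the bracketed equality coincide and (*) collapses the token to $\top$. If $h(S(t))=\top$, then $h(S(t))\otimes\top=\top\otimes\top\neq 0_{\bbB\otimes\widehat{\bbB}}$, the two sides differ under the isomorphism, and (*) collapses the token to $\bot$. This yields the ``iff'' exactly as claimed.

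The only subtle point—and the one that requires real justification rather than a routine calculation—is step two: establishing that $\iota$ is an isomorphism (not merely injective) for the particular pair $\bbB,\widehat{\bbB}$, since the whole proof rests on being allowed to invoke the resolution axiom (*). Everything else is bookkeeping on the lifted-homomorphism definition and the semimodule laws in $\bbB\otimes\widehat{\bbB}$.
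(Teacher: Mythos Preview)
Your proposal is correct and follows essentially the same approach as the paper: lift $h$ to push the equality token into $\bbB^{\widehat{\bbB}}$, then case-split on $h(S(t))\in\{\bot,\top\}$ and resolve via axiom~(*). You are in fact slightly more careful than the paper on the point you flag as subtle: the paper's proof only invokes compatibility (injectivity of $\iota$) to separate $\iota(\top)$ from $\iota(\bot)$, whereas you explicitly verify surjectivity by enumeration so that axiom~(*) is formally licensed to fire.
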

\begin{proof}
It is clear that if $h(S(t)) = \bot$, $h^M\left(\left[S(t) \otimes \top = 0\right]\right)= \left[h\left(S(t)\right) \otimes \top = 0\right] = \left[\bot \otimes \top = 0\right] = \top$. For the other direction, assume that $h(S(t)) = \top$. Thus $\left[h\left(S(t)\right) \otimes \top = 0\right] = \left[\top \otimes \top = 0\right]$. Since $\bbB$ and $\widehat{\bbB}$ are compatible, $\iota:\bbB\rightarrow \bbB\otimes \widehat{\bbB}$ is injective; thus $\iota(\bot) \neq \iota(\top)$; consequently $h^M\left(\left[S(t) \otimes \top = 0\right]\right) = \left[\top \otimes \top = 0\right] = \bot$. 
\end{proof}

Consequently, the semantics can be interpreted as follows: a tuple $t$ appears in the result of $R-S$ if it appears in $R$, but does not appear in $S$. When the tuple appears in the result of $R-S$, it carries its original annotation from $R$. I.e. the existence of $t$ in $S$ is used as a boolean condition. 

\begin{example}
\label{DiffEx}
Let $R,S$ be the following relations, where $R$ contains employees and their departments and $S$ containing departments that are designated to be closed:

$$
\begin{array}{|cc||l|}
\hline
\mathit{ID} & \mathit{Dep} & \\
\hline
\hline
1 & \mathsf{d}_1 & t_1\\
2 & \mathsf{d}_1 & t_2 \\
2 & \mathsf{d}_2 & t_3\\
\hline
\multicolumn{3}{c}{} \\
\multicolumn{3}{c}{R}
\end{array}
~~~
\begin{array}{|c||l|}
\hline
\mathit{Dep} & \\
\hline
\hline
\mathsf{d}_1 & t_4\\
\hline
\multicolumn{2}{c}{} \\
\multicolumn{2}{c}{S}
\end{array}
$$

To obtain a relation with all departments that remains active, we can use the query  $(\Pi_{Dep}R)-S$, resulting in:

$$
\begin{array}{|c||l|}
\hline
\mathit{Dep} & \\
\hline
\hline
\mathsf{d}_1 & [t_4 \otimes \top = 0]\!\cdot\!(t_1+t_2) \\
\mathsf{d}_2 & [0=0]\!\cdot\!t_3\quad (=t_3)\\
\hline
\end{array}
$$

Now consider some homomorphism $h:\NX \rightarrow \bbN$ (multiplicity e.g. stands for number of employees in the department).
Note that if $h(t_4)>0$ then the department $\mathsf{d}_1$ is closed and indeed $\mathsf{d}_1$ is omitted from the support of the difference query result, otherwise it retains each original annotation that it had in $R$. Assume now that we decide to revoke the decision of closing the department $\mathsf{d}_1$. This corresponds to mapping $t_4$ to $0$; we can easily propagate this deletion to the query results; the equality appearing in the annotation of the first tuple is now $[0=0] = 1_{\sK}$  and we obtain as expected:

$$
\begin{array}{|c||l|}
\hline
\mathit{Dep} & \\
\hline
\hline
\mathsf{d}_1 & t_1+t_2 \\
\mathsf{d}_2 & t_3\\
\hline
\end{array}
$$

\end{example}


In particular, we obtain a semantics for the entire Relational Algebra, including difference. It is interesting to study the specialization of the obtained semantics for particular semirings: $\bbB,\bbN,\bbZ$, and to compare it to previously studied semantics for difference.

\subsection{Comparison with other semantics}

For a semiring $K$ and a commutative monoid $M$  we say that two queries $Q,Q'$ are equivalent if for every input $(M,K)$-database $D$,
the results (including annotations) $Q(D)$ and $Q'(D)$ are congruent (namely the corresponding values and annotations are congruent) according to the axioms of $K^{M} \otimes M$ and $K^{M}$. In the sequel we fix $M = \widehat{B}$ and consider different instances of $K$, exemplifying different equivalence axioms for queries with difference while comparing them with previously suggested semantics. We use $Q \equiv_{\sK} Q'$ to denote the equivalence of $Q,Q'$ with respect to $K$ and $\widehat{B}$.

\paragraph*{$\bbB$-relations} For $K=\bbB$, our semantics is the same as set-semantics, thus the following proposition holds: 

\begin{proposition}
For $Q,Q' \in RA$ it holds that $Q \equiv_{\sbbB} Q'$ if and only if $Q \equiv Q'$ under set semantics.
\end{proposition}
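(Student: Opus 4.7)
The plan is to reduce the statement to a single structural fact: when $K = \bbB$, the semantics for all of Relational Algebra (including the encoded difference) on $\bbB$-relations coincides, tuple by tuple, with the standard set semantics. Once this is established, $Q \equiv_{\sbbB} Q'$ amounts to saying $Q$ and $Q'$ produce the same $\bbB$-relation on every $\bbB$-database, which is exactly set-semantics equivalence.

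The first step is to verify that $\bbB^{\widehat{\bbB}} = \bbB$, so no unresolved equality tokens can appear in annotations. For this I would show that $\iota : \widehat{\bbB} \to \bbB \otimes \widehat{\bbB}$ is a monoid isomorphism by direct computation: using the tensor axioms, any simple tensor $k \otimes m$ with $k = \bot$ or $m = \bot$ collapses to $0_{\bbB \otimes \widehat{\bbB}}$, so the only non-zero simple tensor is $\top \otimes \top$; and since $\top \otimes \top +_{\sKM} \top \otimes \top = (\top \vee \top) \otimes \top = \top \otimes \top$, bags collapse as well, giving a two-element monoid with $\iota(\bot) = 0$ and $\iota(\top) = \top \otimes \top$. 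Invoking Proposition~\ref{SolveMK} then yields $\bbB^{\widehat{\bbB}} = \bbB$, so every equality expression $[c_1 = c_2]$ generated during query evaluation can be resolved to either $0_{\sbbB}$ or $1_{\sbbB}$.

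Next I would handle each operator. For SPJU, the $\bbB$-relational algebra of~\cite{GKT-PODS07} is well-known to agree with standard set semantics, and the extended semantics of Section~\ref{GeneralAggSec} introduces only extra equality factors that, by the previous paragraph, evaluate in $\bbB$; since the equality predicates on tuple values reduce to the usual Boolean comparisons used in set-semantics selection/join/projection, nothing changes. For difference, I invoke Proposition~\ref{Neg} to rewrite $(R-S)(t)$ as $[S(t) \otimes \top = 0] \cdot_{\sbbB} R(t)$ and do a two-case analysis on $S(t) \in \bbB$: if $S(t) = \top$ then $S(t) \otimes \top = \top \otimes \top = \iota(\top) \neq 0$, so the equality resolves to $\bot$ and $(R-S)(t) = \bot$; if $S(t) = \bot$ then $S(t) \otimes \top = 0$, the equality resolves to $\top$, and $(R-S)(t) = R(t)$. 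This is exactly set-theoretic difference.

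Putting it together, on every $\bbB$-database $D$ the annotated output $Q(D)$ as a Boolean-valued function on tuples coincides with the standard set-theoretic evaluation of $Q$ on the set $\{t : D(t) = \top\}$, and likewise for $Q'$; hence $Q \equiv_{\sbbB} Q'$ is equivalent to $Q \equiv Q'$ under set semantics in both directions. I expect no genuine obstacle beyond carefully checking the two cases for difference; the heart of the argument is the collapse $\bbB^{\widehat{\bbB}} = \bbB$, which makes the extended construction degenerate to the classical one.
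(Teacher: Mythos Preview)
Your proof is correct and, in fact, supplies precisely the details the paper omits: the paper's entire justification is the one-line remark ``For $K=\bbB$, our semantics is the same as set-semantics,'' and your argument (showing $\bbB\otimes\widehat{\bbB}\cong\widehat{\bbB}$, invoking Proposition~\ref{SolveMK} to collapse $\bbB^{\widehat{\bbB}}$ to $\bbB$, and then doing the two-case analysis on $S(t)$ via Proposition~\ref{Neg}) is exactly what is needed to substantiate that remark. So you are taking the same route as the paper, only making explicit what the paper treats as evident.
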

\vspace{-4mm}
\paragraph*{$\bbN$-relations} For $K=\bbN$, we compare our semantics to bag equivalence and observe that they are different (for queries with difference, even without aggregation). Intuitively this is because in our semantics, the righthand side of the difference is treated as a boolean condition, rather than having the effect of decreasing the multiplicity.  Formally,

\begin{proposition}
$Q \equiv_{\sbbN} Q'$ does not imply that $Q \equiv Q'$ under bag semantics, and vice versa.
\end{proposition}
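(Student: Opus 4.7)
The plan is to supply one counterexample for each direction of non-implication. Both rely on the asymmetry captured by Proposition~\ref{Neg}: under our $\bbN$-semantics, the right-hand side of a difference contributes only through its support (a boolean condition), whereas bag semantics subtracts multiplicities. Since $\widehat{\bbB}$ is idempotent and $\bbN$ is positive with respect to $+$, Theorem~\ref{compatibilityIdempotence} gives compatibility, and a direct check shows that $\iota : \widehat{\bbB} \to \bbN \otimes \widehat{\bbB}$ is an isomorphism (using $n \otimes \top = \top$ for every $n \geq 1$, by idempotence of $\widehat{\bbB}$). Proposition~\ref{SolveMK} then yields $\bbN^{\widehat{\bbB}} = \bbN$, so every equality expression $[\cdot = \cdot]$ collapses to $0_{\bbN}$ or $1_{\bbN}$ on concrete bag inputs.

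For the first direction ($\equiv_{\sbbN}$ but not bag-equivalent), I will take $Q = R - S$ and $Q' = R - (S \cup S)$. Using Proposition~\ref{Neg}, the annotations of a tuple $t$ become $[S(t)\otimes\top = 0]\cdot R(t)$ and $[(S(t)+S(t))\otimes\top = 0]\cdot R(t)$; both equality expressions resolve in $\bbN$ to $1_{\bbN}$ exactly when $S(t)=0$, so $Q \equiv_{\sbbN} Q'$. Bag semantics nevertheless distinguishes them on any instance with, e.g., $R(t)=3$ and $S(t)=1$, where ordinary monus gives multiplicities $2$ and $1$ respectively.

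For the ``vice versa'' direction (bag-equivalent but not $\equiv_{\sbbN}$), my candidate pair is $Q = (R \cup R) - R$ and $Q' = R$. Bag-equivalence follows from the multiplicity identity $2R(t) - R(t) = R(t)$. Applying Proposition~\ref{Neg}, however, the $\bbN$-annotation of $t$ in $Q(D)$ equals $[R(t)\otimes\top = 0]\cdot 2R(t)$, which resolves to $0$ whenever $R(t)>0$. Any instance with $R(t)=1$ therefore witnesses $Q(D)(t)=0 \neq 1 = Q'(D)(t)$, refuting $\bbN$-equivalence.

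I expect no real obstacle here: once Proposition~\ref{Neg} and the collapse $\bbN^{\widehat{\bbB}} = \bbN$ are invoked, both computations are mechanical. The only point requiring care is keeping the syntactic congruence defined for $K^M$ cleanly separated from the ordinary numerical equality on concrete bag instances that we use to read off multiplicities.
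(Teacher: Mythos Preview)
Your proposal is correct and follows essentially the same approach as the paper: the paper uses $A - (B \cup B) \equiv_{\sbbN} A - B$ (failing for bags) and $(A \cup B) - B \equiv A$ under bag semantics (failing for $\equiv_{\sbbN}$), which are exactly your first counterexample and a slight generalization of your second (your choice $(R\cup R)-R$ versus $R$ is the special case $A=B=R$). Your additional justification via Proposition~\ref{Neg}, Theorem~\ref{compatibilityIdempotence}, and Proposition~\ref{SolveMK} makes explicit what the paper leaves implicit, but the argument is the same.
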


\begin{proof}
Observe that $A - (B \cup B) \equiv_{\sbbN} A - B$; but this does not hold for bag semantics. In contrast, under bag semantics $(A \cup B) - B \equiv A$, but not for our semantics.
\end{proof}

\begin{example}
Reconsider Example \ref{DiffEx}, and let $t_1=t_2=t_3=t_4=1$. Under bag semantics, after projecting $R$ on the department attribute, the multiplicity of the department $\mathsf{d}_1$ becomes $2$; after applying the difference the department $\mathsf{d}_1$  is still in the result, but now with multiplicity $1$; in contrast under our semantics the department $\mathsf{d}_1$ does not appear in the support of the result.
\end{example}
\vspace{-3mm}
\paragraph*{$\bbZ$-relations} Finally, in \cite{GIT-ICDT09} the authors have presented $\bbZ$ semantics for difference, and have shown that it leads to equivalence axioms that are different from those that hold for queries with bag difference. It is also
different from the equivalence axioms that we have here for $\bbZ$-relations:

\begin{proposition}
$Q \equiv_{\sbbZ} Q'$ does not imply  $Q \equiv Q'$ under $\bbZ$ semantics \footnote{As defined in \cite{GIT-ICDT09}.}, and vice versa.
\end{proposition}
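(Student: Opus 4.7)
The plan is to establish both non-implications by exhibiting concrete counterexample query pairs, leveraging the axioms of $\bbZ^{\widehat{\bbB}}$ and $\bbZ\otimes \widehat{\bbB}$.

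For the direction ``$\equiv_{\sbbZ}$ does not imply $\equiv$ under $\bbZ$,'' I would use $Q = A - (B \cup B)$ and $Q' = A - B$. Under the $\bbZ$ semantics of~\cite{GIT-ICDT09} these are inequivalent, since $(A-(B\cup B))(t) = A(t) - 2B(t) \ne A(t) - B(t)$ on any database with $B(t)\ne 0$. Under our semantics, by Proposition~\ref{Neg} the two annotations reduce to $[2B(t)\otimes \top = 0]\!\cdot\! A(t)$ and $[B(t)\otimes \top = 0]\!\cdot\! A(t)$. The key step is to observe that idempotence of $\vee$ in $\widehat{\bbB}$ combined with the tensor laws yields
$$
2B(t)\otimes \top \;=\; B(t)\otimes\top +_{\sKM} B(t)\otimes\top \;=\; B(t)\otimes(\top\vee\top)\;=\; B(t)\otimes \top
$$
in $\bbZ\otimes \widehat{\bbB}$. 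Applying the congruence axiom $[c_1 = c_3]\sim [c_2 = c_4]$ whenever $c_1\sim c_2$ in $\widehat{K}\otimes M$ collapses the two equation expressions, giving $Q\equiv_{\sbbZ} Q'$.

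For the converse direction, consider $Q = R - \emptyset$ and $Q' = R$. They are trivially equivalent under $\bbZ$ semantics. In our semantics, $(R-\emptyset)(t) \sim [0\otimes \top = 0]\!\cdot\! R(t) \sim [0=0]\!\cdot\! R(t)$, and the core technical step is to argue that $[0=0]$ does not collapse to $1_{\bbZ}$ inside $\bbZ^{\widehat{\bbB}}$. The only axiom that can eliminate such a symbol is the resolution axiom~(*), which applies only when $\iota:\widehat{\bbB}\to \bbZ\otimes \widehat{\bbB}$ is an isomorphism. But $\iota$ is not injective here: idempotence of $\vee$ forces $1\otimes\top = 2\otimes\top$, and since $\bbZ$ supplies additive inverses on the semimodule we obtain $1\otimes \top = 0 = \iota(\bot)$. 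Hence~(*) does not apply, $[0=0]$ survives as a free indeterminate of $\bbZ^{\widehat{\bbB}}$, and the two annotations are not congruent whenever $R(t)\ne 0$.

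The main obstacle is the second direction: one must argue that no hidden chain of axioms identifies $[0=0]\!\cdot\! R(t)$ with $R(t)$. I would close this gap by a normal-form/homomorphism argument — mapping $\bbZ^{\widehat{\bbB}}$ onto the free commutative semiring generated by $\bbZ$ together with the (non-resolvable) equation symbols, and checking that $[0=0]$ is sent to a genuine indeterminate distinct from $1_{\bbZ}$. Given this, both pairs witness the claimed incomparability of $\equiv_{\sbbZ}$ and the $\bbZ$-semantics equivalence.
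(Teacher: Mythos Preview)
Your first direction coincides with the paper's: both use $A-(B\cup B)$ versus $A-B$, and your tensor-law computation $2B(t)\otimes\top = B(t)\otimes\top$ is exactly the mechanism the paper relies on.

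Your second direction is where the approaches diverge, and yours has a genuine gap. The paper exhibits $(A-(B-C))$ versus $(A\cup C)-B$: these are $\bbZ$-equivalent since $A(t)-(B(t)-C(t))=(A(t)+C(t))-B(t)$, but under the aggregation-encoded semantics the left side carries (schematically) an equation token times $A(t)$ while the right side carries an equation token times $A(t)+C(t)$. The inequivalence is witnessed already by the \emph{non-equation} factor, so it is robust regardless of how equation tokens simplify.

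Your counterexample $R-\emptyset$ versus $R$ hinges entirely on the claim that $[0=0]\not\sim 1_{\bbZ}$ in $\bbZ^{\widehat{\bbB}}$. Two problems. First, the paper itself writes $[0=0]\cdot t_3 = t_3$ in Example~\ref{DiffEx} (with $K=\NX$, where $\iota$ is also not an isomorphism), so the intended semantics evidently does collapse $[0=0]$ even when the formal resolution axiom~$(*)$ does not literally fire; your normal-form argument would then be proving a statement the paper does not intend. Second, even granting your strict reading of the axioms, the phenomenon you isolate has nothing to do with $\bbZ$ specifically --- the same argument would show $R-\emptyset\not\equiv_{\sK} R$ for $\NX$, $\bbN$, indeed for every $K$ with $\iota$ non-surjective --- so it is an artifact of the encoding rather than a comparison of the two $\bbZ$-level semantics. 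The paper's counterexample avoids both issues by locating the discrepancy in the $A(t)$ versus $A(t)+C(t)$ factor, which is insensitive to how $[0=0]$ is handled.
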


\begin{proof}
Under $\bbZ$ semantics it was shown in \cite{GIT-ICDT09} that $(A - (B - C)) \equiv (A \cup C) -  B$. This does not hold for our semantics. In contrast $A - (B \cup B) \equiv_{\sbbZ} A - B$, but this equivalence does not hold under $\bbZ$ semantics.
\end{proof}

\vspace{-5mm}

\paragraph*{Deciding Query Equivalence} We conclude with a note on the decidability of equivalence of queries using our semantics. It turns out that for semirings such as $\bbB,\bbN$ for which we can interpret the results in $\widehat{\bbB}$ (in the sense of proposition \ref{Neg} above), query equivalence is undecidable.

\begin{proposition}
Let  $K$ be such that $K^{\widehat{\bbB}}\otimes \widehat{\bbB}$ is isomorphic to $\widehat{\bbB}$. Equivalence of Relational Algebra queries on $K$-relations is undecidable.
\end{proposition}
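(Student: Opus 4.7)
The plan is to reduce from the classical undecidability of relational algebra query equivalence under set semantics, a result going back to Shmueli (1988) and ultimately relying on Trakhtenbrot's theorem on first-order validity over finite structures: deciding whether two RA expressions (with difference) produce the same set of tuples on every set-theoretic database is $\Pi^0_1$-complete.

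First I would dispose of the base case $K = \bbB$. The hypothesis is trivially satisfied since $\bbB^{\widehat{\bbB}} \otimes \widehat{\bbB}$ collapses to $\widehat{\bbB}$ by Proposition~\ref{SolveMK}, and the paper has already shown that $\bbB$-equivalence under the new semantics coincides with classical set-equivalence of RA queries. Thus the classical undecidability yields the proposition immediately for $K = \bbB$. For a general $K$ satisfying the hypothesis, I would reduce set-equivalence of RA queries to $K$-equivalence by \emph{syntactically normalizing} the queries so that their outputs carry essentially trivial annotations. Concretely, given two RA queries $Q_1, Q_2$, fix a ``universe'' RA expression $U$ (constructible as a suitable Cartesian product over the active domain of the input relations) whose support dominates that of $Q_1$ and $Q_2$ on every input, and form the wrapped queries $Q_i' = U - (U - Q_i)$. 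Using Proposition~\ref{Neg}, which under the hypothesis reduces $(R-S)(t)$ to the support-indicator expression $[S(t) \otimes \top = 0] \cdot_{\sK} R(t)$, a direct calculation shows that $Q_i'$ has support $\operatorname{supp}(U) \cap \operatorname{supp}(Q_i) = \operatorname{supp}(Q_i)$ and annotation $U(t)$ at each such $t$. Consequently, on every $K$-database $D$ the value of $Q_i'(D)$ depends only on $\operatorname{supp}(Q_i(D))$ and on the fixed expression $U$, so $Q_1' \equiv_K Q_2'$ iff $\operatorname{supp}(Q_1(D)) = \operatorname{supp}(Q_2(D))$ on every $K$-database $D$. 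The latter, in turn, is equivalent to set-equivalence of $Q_1$ and $Q_2$: one direction is immediate by considering set databases as uniformly $1_K$-annotated $K$-databases, and the other follows by commuting with the homomorphism $K \to \bbB$ whose image records supports (which makes sense for the queries in the range of the reduction since the wrapping trivializes annotations).

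The main obstacle will be the careful construction of the universe $U$ so that (i) its support dominates that of any query output on any database in the input class, and (ii) the double-difference idempotization truly collapses to a support indicator. Step~(i) is a schema-level construction in RA, achievable by taking enough Cartesian products of the active domain projected from each input relation. Step~(ii) relies essentially on the hypothesis on $K$, via Proposition~\ref{Neg}, to guarantee that the equality expressions introduced in $K^{\widehat{\bbB}}$ can be resolved against $\widehat{\bbB}$ and hence that the wrapping behaves uniformly across all qualifying semirings; without this hypothesis, the wrapped queries $Q_i'$ might retain unresolved equality terms that obstruct the collapse of $K$-equivalence to set-equivalence. Since each step of the reduction is recursive, a decision procedure for $K$-equivalence would yield one for set-equivalence of RA queries, contradicting the classical undecidability.
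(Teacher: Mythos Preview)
Your reduction from classical set-semantics RA equivalence matches the paper's strategy, but the paper executes it with far less machinery. Instead of building an active-domain universe $U$ and normalizing each query via the double difference $U-(U-Q_i)$, the paper simply asserts that $Q$ and $Q'$ are set-equivalent if and only if both $Q-Q'\equiv_{\sK}\phi$ and $Q'-Q\equiv_{\sK}\phi$, where $\phi$ is the empty query. This exploits exactly the support-collapsing feature you isolate---under the hypothesis, Proposition~\ref{Neg} makes $R-S$ empty precisely when $\supp(R)\subseteq\supp(S)$---but two containment checks against $\phi$ replace your wrapping entirely and eliminate the overhead of constructing $U$ in RA and verifying that it dominates both outputs. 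Your construction works, it is just unnecessary.

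One step of your argument is also shaky on its own terms. To pass from ``support agreement on all $K$-databases'' to set-equivalence, you invoke ``the homomorphism $K\to\bbB$ whose image records supports.'' The map $k\mapsto[k\neq 0_{\sK}]$ is a semiring homomorphism only when $K$ is positive with respect to $+_{\sK}$ and has no zero divisors, and you have not argued that the hypothesis on $K^{\widehat{\bbB}}\otimes\widehat{\bbB}$ guarantees either. The paper's one-line proof does not appeal to any such map; whatever it takes for granted, your route makes the dependency explicit without discharging it.
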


\begin{proof}
The proof is by reduction from equivalence under set semantics:  let $\phi$ be the empty query, i.e. a query whose answer always the empty relation. Given two $RA$ queries $Q,Q'$ (note that $Q$ and $Q'$ can include difference), their equivalence under set semantics holds if and only if $Q - Q' \equiv_{\sK} \phi$ and $Q' - Q\equiv_{\sK} \phi$.
\end{proof}

\section{Related Work}
\label{related}

Provenance information has been extensively studied in the database literature.
Different provenance management techniques are introduced
in~\cite{CuiWidomWiener00,ProvenanceBuneman,Why, Trio}, etc.,
and it was shown in \cite{GKT-PODS07,G-ICDT09} that
these approaches can be compared in the semiring framework.
To our knowledge, this work is the first to study aggregate queries in the context of provenance semirings. Provenance information has a variety of
applications (see introduction) and we believe that our novel framework for aggregate queries will benefit all of these. Specifically, queries with aggregation play a key role in modeling the operational logic of scientific {\em workflows} (see e.g. \cite{Susan,Workflows2}) and our framework is likely to facilitate a more fine-grained approach to workflow provenance.
\eat{
Workflow provenance has been so far managed mainly in a coarse-grained manner (i.e. at a flow level rather than  individual data items level).
Since aggregate queries play a key role in workflows, we believe that the framework presented here can be an important step towards fine-grained provenance management for workflows.
}

Aggregate queries have been extensively studied in
e.g. \cite{CohenSigRec,NuttCohenSagiv} for bag and set semantics. As explained in \cite{CohenSigRec}, such
queries are fundamental in many applications: OLAP queries, mobile
computing, the analysis of streaming data, etc.
We note that Monoids are used to capture general aggregation operators
in~\cite{NuttCohenSagiv}, but our paper seems to be the first to
study their interaction with provenance.

Several semantics of {\em difference} on relations with annotations
have been proposed, starting with the $c$-tables of~\cite{IL84}.
The semirings with monus of~\cite{Userssemiring1} generalize this
as well as bag-semantics. Difference on relations with annotations
from $\bbZ$ are considered in~\cite{GIT-ICDT09} and from $\bbZ[X]$
in~\cite{G-Thesis}. As explained in Section 5, the semantics for difference defined
in this paper is different from all of these.

There are interesting connections between provenance management and query evaluation on uncertain (and probabilistic) databases (e.g. \cite{KO,suciu,Trio,KO2}), as observed in \cite{GKT-PODS07}. Evaluation of aggregate queries on probabilistic databases has been studied in e.g. \cite{ReHaving,DeshpandeVLDB09}. Trying to optimize the performance of aggregate query evaluation on probabilistic databases via provenance management is an intriguing future research challenge.

\vspace{-1mm}
\section{conclusion}
\label{ConcSection}

We have studied in this paper provenance information for queries with
{\em aggregation} in the semiring framework. We have identified three
desiderata for the assessment of candidate approaches: compatibility
with the usual set/bag semantics, commutation
with semiring homomorphisms and poly-size overhead. After showing
that approaches using provenance only to annotate the database tuples do
not satisfy all desiderata simultaneously, we considered a different
framework in which the computation of aggregate {\em values} is itself
annotated with provenance.  This has led us to the algebraic structure
of {\em semimodules} over commutative semirings of annotations and to
a tensor product construction for the semantics of annotated
aggregation.  The first product of this approach is a ``good" (i.e. satisfying the desiderata) semantics
for SPJU queries followed by an aggregation or a group-by with
aggregation. We have further studied the challenges that arise in evaluation of
queries that apply {\em comparisons} on aggregation results, e.g.,
joins over aggregate values, and shown that by careful adaptation of
the semimodule framework these challenges can be overcome with a semantics
that satisfies the desiderata. Finally,
we noted that {\em difference queries} may be encoded as queries with
aggregation, and studied the algebra induced for such queries.

We have exemplified in the paper the application of our approach for
deletion propagation and security annotations. As mentioned in the Introduction
and Related Work sections, there are various other areas in which
provenance is useful. Future research will focus on applying our
framework to the research tasks tackled in these areas.


\bibliographystyle{plain}
{\small
\bibliography{bib}

\begin{thebibliography}{10}

\bibitem{AHV}
S.~Abiteboul, R.~Hull, and V.~Vianu.
\newblock {\em Foundations of Databases}.
\newblock Addison-Wesley, 1995.

\bibitem{FA}
Foto~N. Afrati and Angelos Vasilakopoulos.
\newblock Managing lineage and uncertainty under a data exchange setting.
\newblock In {\em SUM}, 2010.

\bibitem{KO2}
L.~Antova, C.~Koch, and D.~Olteanu.
\newblock 10$^{\mbox{(10$^{\mbox{6}}$)}}$ worlds and beyond: efficient
  representation and processing of incomplete information.
\newblock {\em VLDB J.}, 18(5), 2009.

\bibitem{Archer}
David~W. Archer, Lois M.~L. Delcambre, and D.~Maier.
\newblock A framework for fine-grained data integration and curation, with
  provenance, in a dataspace.
\newblock In {\em First workshop on on Theory and practice of provenance},
  2009.

\bibitem{Susan}
Z.~Bao, S.~B. Davidson, S.~Khanna, and S.~Roy.
\newblock An optimal labeling scheme for workflow provenance using skeleton
  labels.
\newblock In {\em SIGMOD Conference}, 2010.

\bibitem{Trio}
O.~Benjelloun, A.D. Sarma, A.Y. Halevy, M.~Theobald, and J.~Widom.
\newblock Databases with uncertainty and lineage.
\newblock {\em VLDB J.}, 17, 2008.

\bibitem{ProvenanceBuneman}
P.~Buneman, J.~Cheney, and S.~Vansummeren.
\newblock On the expressiveness of implicit provenance in query and update
  languages.
\newblock {\em ACM Trans. Database Syst.}, 33(4), 2008.

\bibitem{Why}
P.~Buneman, S.~Khanna, and W.C. Tan.
\newblock Why and where: A characterization of data provenance.
\newblock In {\em ICDT}, 2001.

\bibitem{BNTW95}
P.~Buneman, S.~Naqvi, V.~Tannen, and L.~Wong.
\newblock Principles of programming with complex objects and collection types.
\newblock {\em TCS}, 149(1), 1995.

\bibitem{CheneyTan}
J.~Cheney, L.~Chiticariu, and W.~C. Tan.
\newblock Provenance in databases: Why, how, and where.
\newblock {\em Foundations and Trends in Databases}, 1(4), 2009.

\bibitem{Userssemiring3}
J.~Cheney, S.~Chong, N.~Foster, M.~I. Seltzer, and S.~Vansummeren.
\newblock Provenance: a future history.
\newblock In {\em OOPSLA Companion}, 2009.

\bibitem{CohenSigRec}
S.~Cohen.
\newblock Containment of aggregate queries.
\newblock {\em SIGMOD Record}, 34(1), 2005.

\bibitem{NuttCohenSagiv}
S.~Cohen, W.~Nutt, and Y.~Sagiv.
\newblock Rewriting queries with arbitrary aggregation functions using views.
\newblock {\em ACM Trans. Database Syst.}, 31(2), 2006.

\bibitem{CuiWidomWiener00}
Y.~Cui, J.~Widom, and J.L. Wiener.
\newblock Tracing the lineage of view data in a warehousing environment.
\newblock {\em ACM Transactions on Database Systems}, 25(2), 2000.

\bibitem{suciu}
Nilesh~N. Dalvi, Christopher R{\'e}, and Dan Suciu.
\newblock Probabilistic databases: diamonds in the dirt.
\newblock {\em Commun. ACM}, 52(7), 2009.

\bibitem{Workflows2}
E.~Deelman, G.~Singh, M.~Su, and et. Al.
\newblock Pegasus: A framework for mapping complex scientific workflows onto
  distributed systems.
\newblock {\em Scientific Programming}, 13(3), 2005.

\bibitem{FGT-PODS08}
J.N. Foster, T.J. Green, and V.~Tannen.
\newblock Annotated {XML}: queries and provenance.
\newblock In {\em PODS}, 2008.

\bibitem{FR97}
N.~Fuhr and T.~R{\"o}lleke.
\newblock A probabilistic relational algebra for the integration of information
  retrieval and database systems.
\newblock {\em ACM Trans. Inf. Syst.}, 15(1), 1997.

\bibitem{Userssemiring1}
F.~Geerts and A.~Poggi.
\newblock On database query languages for k-relations.
\newblock {\em J. Applied Logic}, 8(2), 2010.

\bibitem{G-Thesis}
T.J. Green.
\newblock {\em Collaborative data sharing with mappings and provenance}.
\newblock PhD thesis, University of Pennsylvania, 2009.

\bibitem{G-ICDT09}
T.J. Green.
\newblock Containment of conjunctive queries on annotated relations.
\newblock In {\em ICDT}, 2009.

\bibitem{GIT-ICDT09}
T.J. Green, Z.~Ives, and V.~Tannen.
\newblock Reconcilable differences.
\newblock In {\em ICDT}, 2009.

\bibitem{GKIT-VLDB07}
T.J. Green, G.~Karvounarakis, Z.~Ives, and V.~Tannen.
\newblock Update exchange with mappings and provenance.
\newblock In {\em VLDB}, 2007.

\bibitem{GKT-PODS07}
T.J. Green, G.~Karvounarakis, and V.~Tannen.
\newblock Provenance semirings.
\newblock In {\em PODS}, 2007.

\bibitem{Userssemiring5}
T.J. Green, G.~Karvounarakis, N.~E. Taylor, O.~Biton, Z.~Ives, and V.~Tannen.
\newblock Orchestra: facilitating collaborative data sharing.
\newblock In {\em Proc. of SIGMOD}, 2007.

\bibitem{GMS-SIGMOD93}
A.~Gupta, I.S. Mumick, and V.S. Subrahmanian.
\newblock Maintaining views incrementally.
\newblock In {\em SIGMOD}, 1993.

\bibitem{Userssemiring6}
J.~Huang, T.~Chen, A.~Doan, and J.~F. Naughton.
\newblock On the provenance of non-answers to queries over extracted data.
\newblock {\em Proc. VLDB}, 1, 2008.

\bibitem{IL84}
T.~Imielinski and W.~Lipski.
\newblock Incomplete information in relational databases.
\newblock {\em J. ACM}, 31(4), 1984.

\bibitem{K-PODS10}
C.~Koch.
\newblock Incremental query evaluation in a ring of databases.
\newblock In {\em PODS}, 2010.

\bibitem{KO}
C.~Koch and D.~Olteanu.
\newblock Conditioning probabilistic databases.
\newblock {\em PVLDB}, 1(1), 2008.

\bibitem{DBLP:journals/jiis/LechtenborgerSV02}
J.~Lechtenb{\"o}rger, H.~Shu, and G.~Vossen.
\newblock Aggregate queries over conditional tables.
\newblock {\em J. Intell. Inf. Syst.}, 19(3), 2002.

\bibitem{DBLP:conf/ctcs/LellahiT97}
S.K. Lellahi and V.~Tannen.
\newblock A calculus for collections and aggregates.
\newblock In {\em Cat. Theory and Comp. Science}, 1997.

\bibitem{DeshpandeVLDB09}
J.~Li, B.~Saha, and A.~Deshpande.
\newblock A unified approach to ranking in probabilistic databases.
\newblock {\em Proc. VLDB}, 2, 2009.

\bibitem{Userssemiring4}
B.~Liu, L.~Chiticariu, V.~Chu, H.V. Jagadish, and F.~Reiss.
\newblock Refining information extraction rules using data provenance.
\newblock {\em IEEE Data Eng. Bull.}, 33(3), 2010.

\bibitem{whynot}
A.~Meliou, W.~Gatterbauer, K.~F. Moore, and D.~Suciu.
\newblock The complexity of causality and responsibility for query answers and
  non-answers.
\newblock {\em PVLDB}, 4(1), 2010.

\bibitem{ReHaving}
C.~R\'{e} and D.~Suciu.
\newblock Efficient evaluation of having queries on a probabilistic database.
\newblock In {\em DBPL}, 2007.

\bibitem{Soul}
Issam S., Adrian F., and Vladimiro S.
\newblock A formal model of provenance in distributed systems.
\newblock In {\em First workshop on on Theory and practice of provenance},
  2009.

\bibitem{CheneyProvenance}
S.~Vansummeren and J.~Cheney.
\newblock Recording provenance for sql queries and updates.
\newblock {\em IEEE Data Eng. Bull.}, 30(4), 2007.

\bibitem{Sigmod10Boon}
W.~Zhou, M.~Sherr, T.~Tao, X.~Li, B.~T. Loo, and Y.~Mao.
\newblock Efficient querying and maintenance of network provenance at
  internet-scale.
\newblock In {\em SIGMOD}, 2010.

\bibitem{Z97}
E.~Zim{\'a}nyi.
\newblock Query evaluation in probabilistic relational databases.
\newblock {\em Theor. Comput. Sci.}, 171(1-2), 1997.

\end{thebibliography}
}
\newpage

\appendix

\section{SPJU algebra for K-relations}
\label{ProvenanceSemirings}

We recall the full definition of the SPJU algebra for K-relations from \cite{GKT-PODS07}:
\begin{description}
\item[Empty Relation] $\forall t~~ \phi(t) = 0$.
\item[Selection]  If $R:\Tuples{U}{\bbD} \rightarrow K$ and the selection predicate
$P$ maps each U-tuple to either 0 or 1 then $\sigma_{P}R :
\Tuples{U}{\bbD} \rightarrow K$ is defined by
$(\sigma_{P}R)(t) = R(t) \cdot P(t)$.
\item[Projection] If $R:\Tuples{U}{\bbD} \rightarrow K$
and $U' \subseteq U$ then
$\Pi_{U'}R:\Tuples{U'}{\bbD} \rightarrow K$ is defined
by  $(\Pi_{U'}R)(t) = \sum\sK  R(t')$ where the $+_{\sK}$ sum is over all
$t'\in\supp(R)$ such that $t'|_{U'}=t$.
\item[Natural Join] If $R_{i}:\Tuples{U_i}{\bbD} \rightarrow K,~i=1,2$
then $R_1\bowtie R2:\Tuples{U_1\cup U_2}{\bbD}\rightarrow K$ is defined by
$(R_1 \bowtie R_2)(t) = R_1(t_1) \cdot_{\sK}  R_2(t_2)$ where $t_i=t|_{U_i},~i=1,2$.
\item[Union] If $R_{i}:\Tuples{U}{\bbD} \rightarrow K,~i=1,2$
then $R_1\cup_{\sK} R_2:\Tuples{U}{\bbD}\rightarrow K$
is defined by $(R_1\cup_{\sK}  R_2)(t) = R_1(t)+_{\sK}R_2(t)$.
\end{description}

\section{Properties of $K\otimes M$}
\label{AppendixTensor}
We show here that the $K\otimes M$ constructed in Section \ref{Tensor} forms a $K$-semimodule, and highlight some of its
basic properties.

\begin{proposition}
$K\otimes M$ is a $K$-semimodule.
\end{proposition}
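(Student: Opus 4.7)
The plan is to verify, in this order: (i) that $(K\otimes M, +_{\sKM}, 0_{\sKM})$ is a commutative monoid; (ii) that the scalar multiplication $*_{\sKM}$ is well-defined on equivalence classes; and (iii) that the six semimodule axioms of Definition \ref{SemiModuleDef} hold.

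For (i), recall that bags of simple tensors, with bag union as $+_{\sKM}$ and the empty bag as $0_{\sKM}$, already form a commutative monoid at the pre-quotient level. Since $\sim$ is declared to be a congruence with respect to $+_{\sKM}$, these laws (associativity, commutativity, identity) descend to the quotient $K\otimes M$ without further work.

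For (ii), I need to check that if $w\sim w'$ and $k\in K$, then $k*_{\sKM}w \sim k*_{\sKM}w'$. This is immediate because $\sim$ is defined as the smallest congruence with respect to both $+_{\sKM}$ and $*_{\sKM}$, so the quotient operation $k*_{\sKM}[w] = [k*_{\sKM}w]$ is representative-independent.

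For (iii), every element of $K\otimes M$ has a representative of the form $\ssum k_i\notimes m_i$, so each axiom reduces to a computation on bags of simple tensors, modulo the four congruence axioms. Specifically: axiom (1) ($k*(w_1+w_2) = k*w_1+k*w_2$) and axiom (2) ($k*0_{\sKM}=0_{\sKM}$) are immediate from the definition of $*_{\sKM}$ since it distributes over bag union by construction. Axiom (3) ($(k_1+_{\sK}k_2)*w = k_1*w +_{\sKM} k_2*w$) expands to $\ssum (k_1+_{\sK}k_2)\cdot_{\sK}k_i\notimes m_i$, which by distributivity in $K$ equals $\ssum (k_1\cdot_{\sK}k_i+_{\sK}k_2\cdot_{\sK}k_i)\notimes m_i$, and the first congruence law applied termwise yields $\ssum k_1\cdot_{\sK}k_i\notimes m_i +_{\sKM} \ssum k_2\cdot_{\sK}k_i\notimes m_i$. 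Axiom (4) ($0_{\sK}*w = 0_{\sKM}$) uses the second congruence law termwise. Axiom (5) ($(k_1\cdot_{\sK}k_2)*w = k_1*(k_2*w)$) follows from associativity of $\cdot_{\sK}$. Axiom (6) ($1_{\sK}*w = w$) follows from $1_{\sK}$ being a unit in $K$.

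The main obstacle, such as it is, is bookkeeping rather than genuine difficulty: one must be careful to distinguish simple tensors $k\notimes m$, bags of simple tensors, and equivalence classes thereof, and to observe that the four congruence laws stated in the construction are formulated for simple tensors but, because $\sim$ is closed under $+_{\sKM}$ and $*_{\sKM}$, automatically propagate to arbitrary sums. The two congruence axioms involving $0_{\sK}$ and $0_{\sM}$ are exactly what makes axioms (2) and (4) work, while the two distributive congruence axioms underlie axioms (1) and (3); axioms (5) and (6) come directly from the semiring structure of $K$ without invoking $\sim$ at all.
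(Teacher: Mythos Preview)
Your proof is correct and follows essentially the same approach as the paper: verify that four of the six semimodule axioms (namely (1), (2), (5), (6)) already hold at the level of bags of simple tensors, and that the remaining two ((3) and (4)) follow from the first two congruence-defining identities after passing to the quotient.

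One small slip in your closing commentary: you write that the congruence axioms involving $0_{\sK}$ and $0_{\sM}$ make axioms (2) and (4) work, and that the two distributive congruence axioms underlie axioms (1) and (3). But, as you yourself correctly argue earlier, axioms (1) and (2) hold already at the bag level and need no congruence at all (the empty bag scaled is still empty; scaling distributes over bag union termwise). Only the first two congruence identities, $(k+_{\sK}k')\notimes m \sim k\notimes m +_{\sKM} k'\notimes m$ and $0_{\sK}\notimes m \sim 0_{\sKM}$, are used, and they serve axioms (3) and (4) respectively. The remaining two congruence identities, $k\notimes(m+_{\sM}m')\sim k\notimes m +_{\sKM} k\notimes m'$ and $k\notimes 0_{\sM}\sim 0_{\sKM}$, play no role in establishing the semimodule structure; they are there so that $\iota:M\to K\otimes M$ is a monoid homomorphism.
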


\begin{proof}

We show that the six semimodule axioms (definition \ref{SemiModuleDef}) hold. Four of them hold already for bags of simple tensors. For example
\eat{
$$
\begin{array}{l}
k *_{\sKM}(\ssum k_i\notimes m_i ~+_{\sKM}~\ssum k_j\notimes m_j)  =\\
\ssum (k\cdot_{\sK}k_i)\notimes m_i ~+_{\sKM}~\ssum (k\cdot_{\sK}k_j)\notimes m_j \\
\end{array}
$$
and
$$
\begin{array}{l}
k *_{\sKM} 0_{\sKM} = 0_{\sKM}
\end{array}
$$
are already instances of the definition of $*_{\sKM}$.
Moreover, the definition implies
}
$$
(k\cdot_{\sK}k') *_{\sKM}\ssum k_i\notimes m_i =
\ssum (k\cdot_{\sK}k'\cdot_{\sK}k_i)\notimes m_i =
$$
$$
= k *_{\sKM} \ssum (k'\cdot_{\sK}k_i)\notimes m_i =
k *_{\sKM} (k' *_{\sKM}\ssum k_i\notimes m_i)
$$
\eat{
and
\begin{tabbing}
$1_{\sK} *_{\sKM} \ssum k_i\notimes m_i =$ \\
$\ssum (1_{\sK}\cdot_{\sK}k_i)\notimes m_i=$ \\
$\ssum k_i\notimes m_i$
\end{tabbing}
}

By taking the quotient by congruence defined in Section~\ref{Tensor} we also get the remaining two axioms:
$$
(k+_{\sK}k') *_{\sKM} \ssum k_i\notimes m_i =
\ssum (k\cdot_{\sK} k_i +_{\sK} k'\cdot_{\sK} k_i)\notimes m_i \sim
$$
$$
\sim
\ssum (k\cdot_{\sK} k_i)\notimes m_i  +_{\sKM} (k'\cdot_{\sK} k_i)\notimes m_i =
$$
$$
= \ssum (k\cdot_{\sK} k_i)\notimes m_i
            +_{\sKM}\ssum (k'\cdot_{\sK} k_i)\notimes m_i=
$$
$$
 = k *_{\sKM}\ssum k_i\notimes m_i  +_{\sKM} k' *_{\sKM}\ssum k_i\notimes m_i
$$
and
$$
0_{\sK} *_{\sKM} \ssum k_i\notimes m_i =
\ssum 0_{\sK}\notimes m_i \sim
\ssum 0_{\sKM} = 0_{\sKM}
$$
This concludes the proof.
\end{proof}

Furthermore, $K\notimes M$ is the ``most economical" K-semimodule in the following sense. Define $\iota:M\rightarrow K\notimes M$ such that
$\iota(m)=1_{\sK} \notimes m$. Every tensor is a linear (with respect to $K$)
combination of simple tensors from $\iota(M)$. More precisely,
\begin{proposition}
\label{universalityTensor}
$K\otimes M$ satisfies the
  ``universality'' property, i.e.
for any $K$-semimodule $W$ and any homomorphism of monoids
$f:M\rightarrow W$ there exists a unique homomorphism of
$K$-semimodules $f^*:K\notimes M \rightarrow W$ such that $f^*\circ\iota = f$.
\end{proposition}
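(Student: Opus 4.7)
The plan is to define $f^*$ directly on bags of simple tensors and then check it descends to the quotient $K\otimes M$. Specifically, for a bag $\sum_i k_i\notimes m_i$ I would set
$$
f^*\!\left(\ssum k_i\notimes m_i\right) ~=~ \ssum k_i *_{\sW} f(m_i),
$$
where the sum on the right is taken in the commutative monoid $(W,+_{\sW},0_{\sW})$, and in particular $f^*(0_{\sKM}) = 0_{\sW}$. This is well-defined on bags because $(W,+_{\sW},0_{\sW})$ is commutative and the bag is only specified up to rearrangement.

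The main technical step is to verify that $f^*$ respects the congruence $\sim$ generated by the four relations listed in Section~\ref{Tensor}, so that it induces a map on $K\otimes M$. For each generator I would use one axiom of the $K$-semimodule $W$ (or of the monoid homomorphism $f$): distributivity of $*_{\sW}$ over $+_{\sK}$ handles $(k+_{\sK}k')\notimes m \sim k\notimes m +_{\sKM} k'\notimes m$; the law $0_{\sK}*_{\sW}w = 0_{\sW}$ handles $0_{\sK}\notimes m \sim 0_{\sKM}$; distributivity of $*_{\sW}$ over $+_{\sW}$ together with $f$ being a monoid homomorphism handles $k\notimes(m+_{\sM}m') \sim k\notimes m +_{\sKM} k\notimes m'$; and $k*_{\sW}0_{\sW}=0_{\sW}$ together with $f(0_{\sM})=0_{\sW}$ handles $k\notimes 0_{\sM} \sim 0_{\sKM}$. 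Since $\sim$ is the smallest congruence containing these four relations, a routine induction on the derivation of $x\sim y$ then shows $f^*(x)=f^*(y)$.

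Having descended $f^*$ to $K\otimes M$, I would next check it is a $K$-semimodule homomorphism: preservation of $+$ and $0$ is immediate from the definition, while $f^*(k*_{\sKM}\sum k_i\notimes m_i) = \sum (k\cdot_{\sK}k_i)*_{\sW} f(m_i) = k*_{\sW}\sum k_i *_{\sW} f(m_i) = k *_{\sW} f^*(\sum k_i\notimes m_i)$ using the associativity axiom $(k_1\cdot_{\sK}k_2)*_{\sW}w = k_1*_{\sW}(k_2*_{\sW}w)$ and distributivity. The compatibility $f^*\circ\iota = f$ is a one-line computation: $f^*(\iota(m)) = f^*(1_{\sK}\notimes m) = 1_{\sK}*_{\sW}f(m) = f(m)$.

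For uniqueness, I would observe that every element of $K\otimes M$ can be written as $\sum k_i *_{\sKM} \iota(m_i)$, since $k\notimes m = k*_{\sKM}(1_{\sK}\notimes m)$. Any $K$-semimodule homomorphism $g:K\otimes M\to W$ satisfying $g\circ\iota = f$ must therefore map $\sum k_i *_{\sKM}\iota(m_i)$ to $\sum k_i *_{\sW} f(m_i)$, i.e., agree with $f^*$. The main obstacle in this proof is the verification of well-definedness on the quotient, and in particular keeping track of how the monoid homomorphism hypothesis on $f$ and the semimodule axioms on $W$ are used together; once that is in place, the rest is bookkeeping.
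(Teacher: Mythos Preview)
Your proposal is correct and follows essentially the same approach as the paper: define $f^*$ on bags of simple tensors by $f^*(\sum k_i\notimes m_i)=\sum k_i*_{\sW}f(m_i)$, verify it respects the four generating relations of $\sim$ (using the semimodule axioms of $W$ and that $f$ is a monoid homomorphism), conclude it descends to $K\otimes M$ as a $K$-semimodule homomorphism, and derive uniqueness from the fact that every tensor is a $K$-linear combination of elements of $\iota(M)$. The only cosmetic difference is ordering: the paper first notes that $f^*$ preserves $+_{\sKM}$ and $*_{\sKM}$ on bags and uses this to reduce well-definedness to the four generators, whereas you check the generators first and the homomorphism properties afterward; both are fine, though your ``routine induction on the derivation of $x\sim y$'' implicitly relies on those preservation facts to pass through congruence closure.
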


\begin{proof}
Define $f^*$ first on bags of simple tensors as follows
$$
f^*(\ssum k_i\notimes m) = \ssum k_i*_{\sW}f(m)
$$
where the second sum is taken with $+_{\sW}$ (in particular, the empty
such sum is by convention $0_{\sW}$).
Thus
$f^*(\iota(m)) = f^*(1_{\sK}\notimes m) = 1_{\sK}*_{\sW}f(m) = f(m)$.
Then, one can check that $f^*$ is a homomorphism with respect to
$+_{\sKM},0_{\sKM}$ and $*_{\sKM}$. This implies that for $f^*$ to
preserve $\sim$ it suffices
to preserve the four laws of the congruence given in Section \ref{Semimodules}, which is readily checked, for example
$$
f^*(k\notimes m +_{\sKM} k\notimes m') = k *_{\sW} f(m) +_{\sW} k *_{\sW} f(m') =
$$
$$
= k *_{\sW} (f(m)+_{\sW}f(m')) = k *_{\sW} f(m+_{\sM}m') =
f^*(k\notimes(m+_{\sM}m'))
$$
Since $f^*$ preserves $\sim$ it can be defined as above by picking a
representative from each equivalence class.
Now let $g:K\notimes M\rightarrow W$ be another linear function
such that $g\circ\iota = f$. Then
$$
g(\ssum k_i\notimes m) = g(\ssum k_i*_{\sKM}(1_{\sK}\notimes m)) =
\ssum k_i*_{\sW}g(1_{\sK}\notimes m) =
$$
$$
= \ssum k_i*_{\sW}f(m) = f^*(\ssum k_i\notimes m)
$$
hence $g=f^*$, thus verifying the uniqueness of $f^*$.
In particular, any linear function on $K\otimes M$ is completely
determined by its behavior on the tensors in $\iota(M)$.
\end{proof}

We also say that $K\otimes M$ is the $K$-semimodule {\em freely generated}
by $M$ (thus the ``most economical'' appellation).

Recall that we have defined in Section \ref{Tensor} the ``lifting'' of a homomorphism of semirings
$h: K\rightarrow K'$ to a homomorphism of monoids
$h^M: K\otimes M \rightarrow K'\otimes M$. Its definition is an  immediate consequence of Proposition~\ref{universalityTensor}:
indeed $K'\otimes M$ becomes a $K$-semimodule via $h$
so we can define $h^M$ as the the unique homomorphism
of $K$-semimodules that by the proposition
extends $\iota':M\rightarrow K'\notimes M$. Note that this yields the definition in Section \ref{Tensor}:
$$
h^M(\ssum k_i\notimes m_i) = \ssum h(k_i)\notimes m_i
$$

\section{Additional Proofs}

\begin{proof} (Proposition \ref{uniform})

The ``if'' direction
follows from the fact that homomorphisms, by definition, preserve
$\{+,\cdot,0,1\}$-expressions. For the ``only if'' direction we
use {\em abstractly tagged}~\cite{GKT-PODS07} databases.
These are $\NX$-databases in which each tuple is annotated by just
an indeterminate in $X$, and a different one at that. It is as if
tuples are annotated by their distinct id. Clearly, there is a canonical
way of choosing a large enough $X$ and producing a canonical
abstractly tagged database that is completely determined by its
support. Now, fix an operation $\Omega$, and consider its semantics on
$K$-databases, for various $K$. For any $K$ and any $K$-database $D$,
let $D^a$ be the abstractly tagged database determined by
$\supp(D)$. Let $h:\NX\rightarrow K$ be the homomorphism uniquely determined
by mapping the abstract tags in $X$ to the actual annotations of $D$.
Fix a tuple $t$ in $\supp(\Omega(D^a))$ and let
$p_t=\Omega(D^a)(t)\in\NX$ be the polynomial that
annotates it in $\Omega(D^a)$. By commutation with homomorphisms
and the definition of $\hRel{h}$ we have
$\Omega(D)(t) = \Omega(\hRel{h}(D^a))(t) = \hRel{h}(\Omega(D^a))(t)
= h(p_t)$. Using the homomorphism properties we move $h$ inside
$p_t$ until it applies to just indeterminates. For an indeterminate, say $x$,
$h(x)$ is the $K$-annotation of the unique tuple in $\supp(D)$ that is
annotated with $x$ in $D^a$. It follows that $\Omega(D)(t)$ is given by
the $\{+,\cdot,0,1\}$-expression $p_t$ in terms of the annotations of $D$.
But $p_t$ only depends on $\supp(D)$ and $\Omega$ while it is the same
for all $K$. This shows the algebraic uniformity of the semantics.
\end{proof}

\paragraph*{Commutation with homomorphism for $\SPJUlastAGB$ queries} We next prove that the semantics proposed for restricted aggregation queries (in Section 3) satisfies commutation with homomorphisms:

\begin{proof}
The proof is by induction on the query structure, but since commutation with homomorphisms was already shown for SPJU queries, we need only to prove that GB commutes with homomorphisms as well.
Let $R$ be a $\Rel{K}$ on the set of attributes $U$, where $U',U''\subseteq U$ and $U'\cap U'' = \emptyset$. $R$ may be the result of applying any sequence of $SPJU$ operations that appear in the query $Q$, followed by $GB_{U',U''}(R)$.

Consider the result when first applying the $GB$ operation followed by $\hRel{h}$. According to definition \ref{DefAgg}, the result of applying GB will be a relation $R'$, whose support contains every tuple $t$ such that:
\begin{enumerate}
\item $t$ is defined on the attributes $U'\cup U''$;
\item For some non-empty subset $T=\{t'_1,...,t'_n\}$ of $\supp(R)$, the restriction of $t$ to attributes of $U'$ is equal to the restriction of every tuple $t'_i\in T$ to $U'$, and not equal to the restriction to $U'$ of any other tuple in $\supp(R) - T$;
\item For each $u\in U''$, $t(u) = \Sigma_{t'_{i}\in T} R(t'_i)\otimes t'_i(u)$; and
\item $R'(t) = \delta\left(\Sigma_{t'_{i}\in T} R(t'_i)\right)$.
\end{enumerate}
The effect of applying $\hRel{h}$ on such $t$ would then be:
\begin{enumerate}
\item $\left(\hRel{h}(R')\right)(t)$ is equal by definition to $h(R'(t)) =$\\* $h\left(\delta\left(\Sigma_{t'_{i}\in T} R(t'_i)\right)\right) = \delta\left(\Sigma_{t'_{i}\in T} h\left(R(t'_i)\right)\right)$.
\item For every $u\in U''$, $t(u)$ will be replaced by $h^M(t(u)) = h^M\left(\Sigma_{t'_{i}\in T} R(t'_i)\otimes t'_i(u)\right) = \Sigma_{t'_{i}\in T} h(R(t'_i))\otimes t'_i(u)$.
\item The rest of the values in $t$ remain intact.
\end{enumerate}

Then, let us check the result of applying $\hRel{h}$ before the GB operation, and compare it to the above result.
Applying $\hRel{h}$ on $R$ will only affect the tuple provenance annotations; for every tuple $t'$, $\left(\hRel{h}(R)\right)(t')= h(R(t))$.
Now let us apply $GB_{U',U''}(\hRel{h}(R))$. Again, according to our semantics, the result will be a relation $R''$, whose support contains every tuple $t$ such that:
\begin{enumerate}
\item $t$ is defined on the attributes $U'\cup U''$ (as before);
\item For some non-empty subset $T=\{t'_1,...,t'_n\}$ of \\*$\supp(\hRel{h}(R))$, the restriction of $t$ to attributes of $U'$ is equal to the restriction of every tuple $t'_i\in T$ to $U'$, and not equal to the restriction to $U'$ of any other tuple in $\supp(\hRel{h}(R)) - T$;
\item for each $u\in U''$, $t(u) = \Sigma_{t'_{i}\in T} \left(\hRel{h}(R)(t'_i)\right)\otimes t'_i(u) =  \Sigma_{t'_{i}\in T} h(R(t'_i))\otimes t'_i(u)$; and
\item $R'(t) = \delta\left(\Sigma_{t'_{i}\in T} h(R(t'_i))\right)$.
\end{enumerate}

We now need to verify that these results are indeed equivalent. Note first that applying $\hRel{h}$ on $R$ before applying the GB operation, only affects the tuple annotations, and not their values. We then employ a ``by-case" analysis to verify equivalence. For any tuple $t$ that is both in the support of $R$ and in the support of $\hRel{h}(R)$, it is easy to observe from the above equations that the ``contribution" of $t$ to both the aggregation value and its annotation is the same in both cases. Consequently we can focus on tuples in $\supp(R)$ for which $\hRel{h}$ sets their annotations to $0$, thus deleting them from their groups or even deleting a whole group by deleting all its members. Those tuples and groups contribute to the result in $R'$, when applying GB first (before applying $\hRel{h}$). However, this means that the summands corresponding to the annotations of those tuples in the $\delta$ annotation of the groups in $R'$ will be later set to 0 by $\hRel{h}$; as for the aggregation results, for every tuple $t'$ that is deleted by $h$, its summand $h(R(t'))\otimes t'(u)$ in each aggregation result will be set to $0\otimes t'_i(u)=0_{\sKM}$ and thus it will have no affect on the aggregation results. We thus conclude that if no group has been deleted altogether, the results are equivalent.  The last case to consider is that where all the annotations of tuples in group $T$ are set to zero. In this case, its $\delta$ expression will be equal to zero as well, so the group is effectively deleted also by $\hRel{h}$ after the GB.

This concludes the proof.
\end{proof}

\begin{proof} (Thm. \ref{compatibilityIdempotence})

Let $K$ be a commutative semiring which is positive with respect to $+_{\sK}$ and define $h: K \notimes M \rightarrow M$ as $h(\sum_{i \in I} k_{i} \notimes m_i) = \sum_{j \in J} m_i$ where $J=\{j \in I \mid k_{j} \neq 0\}$. We can show that $h$
is well-defined (see below); since $\forall m \in M~~ h \circ \iota(m) = m$, $\iota$ is injective and thus $K$ and $M$ are compatible.

We need to verify that $h$ is a well-defined mapping, and for that we check that it is well-defined on
$K \notimes M$ after taking the quotient (as defined in Section \ref{Semimodules}):

\begin{itemize}
\item (For $k,k'\neq 0_{\sK}$) $h((k+_{\sK} k') \otimes m) = m= m+_{\sM}m = h(k \otimes m +_{\sKM} k' \otimes m)$.
\item $h(0_{\sK} \otimes m)=0_{\sM}$, and also the empty bag is mapped to the ``empty sum" i.e. $h(0_{\sKM}) = 0_{\sM}$.
\item (For $k\neq 0_{\sK}$) $h(k \otimes (m+_{\sM}m'))= m+_{\sM}m' = h(k \otimes m +_{\sKM} k \otimes m')$.
\item $h(k \otimes 0_{\sM}) = 0_{\sM}$, and again $h(0_{\sKM}) = 0_{\sM}$.
\end{itemize}

Note that we assumed that $k$ and $k'$ are non-zero in the first and third axioms. Since $K$ is positive with respect to $+_{\sK}$, no such $k,k'$ can satisfy $k+_{\sK}k'=0_{\sK}$, thus the case of $0_{\sK} \otimes m$ is \emph{uniquely} defined to be mapped to $0_{\sM}$, by the second axiom.

This concludes the proof.
\end{proof}

\begin{proof} (Thm. \ref{compatibilityHomomorphism})

Let $h'$ be a homomorphism from $K$ to $\bbN$, and $M$ be an arbitrary commutative monoid. We define a mapping $h:K\otimes M \rightarrow M$, as follows.\\
$h\left(\Sigma k_i\otimes m_i\right) = \Sigma h'(k_i)m_i$. We show that $h$ is well-defined and that $h\circ \iota$ is the identity function.

We first show that this mapping is well-defined, i.e. that every pair of elements from $K\otimes M$ which are equated by the axioms of the tensor construction
(as defined in Section \ref{Semimodules}), are mapped by $h$ to the same values.
\begin{description}
\item[1st axiom.] Left side: $h\left((k+_{\sK} k')\otimes m\right)$ is equal by the definition of $h$ to $h'(k+_{\sK} k')m$. Since $h'$ is a homomorphism, this is equal to $\left(h'(k)+h'(k')\right)m$. Right side: $h\left(k\otimes m+_{\sKM} k'\otimes m\right)= h'(k)m +_{\sM} h'(k')m$ by $h$ definition. Since $h'(k), h'(k')$ are natural numbers, this is equal to the result of the left hand side.
\item[2nd axiom.] Left side: $h\left(0_{\sK}\otimes m\right) = h'(0_{\sK})m$. Since $h'$ is a homomorphism, $h'(0_{\sK})=0$ and thus the expression is equal to $0m = 0_{\sM}$. Right side: by definition of $h$, the ``empty'' sum in $K\otimes M$ must be mapped to the ``empty'' sum in $M$, which is $0_{\sM}$.
\item[3rd axiom.] Left side: $h\left(k\otimes (m+_{\sM} m')\right)=h'(k)(m+_{\sM} m') = h'(k)m +_{\sM} h'(k)m'$. Right side: \\* $h\left(k\otimes m+_{\sKM} k\otimes m'\right)=$ $h'(k)m +_{\sM} h'(k)m'$.
\item[4th axiom.] Left side: $h\left(k\otimes 0_{\sM}\right)=h'(k)0_{\sM} = 0_{\sM}$. Right side: same as the 2nd axiom.
\end{description}

Since $h$ is well-defined such that $h(a+b)=h(a)+h(b)$, it is a homomorphism from $K\otimes M$ to $M$.
Now we need to show that $h \circ \iota$ is the identity function, implying that $\iota$ is injective and thus that $M$ and $K$ are compatible. This is easy: since $h'$ is a homomorphism it must map $1_{\sK}$ to $1$; then for every $m\in M$, $h(\iota(m)) =h(1_{\sK} \otimes m) = h'(1_{\sK})m =1m = m$.
\end{proof}

\begin{proof} (Proposition \ref{noMKGenAgg})

We show the proof for $\SumAgg$~and the proof for $\MaxAgg$ ($\MinAgg$) is similar. Reconsider the relation $R'$ and the query $Q_{select}$ in Example \ref{ImpossibleExample}, and assume that $R''=Q_{select}(R')$ is a $(M,\NX)$-relation capturing the query result according to some algebra. Assume by contradiction that the algebra commutes with homomorphism. Let $h, h'$ be homomorphisms from $\NX$ to $\bbN$ corresponding to those in Example \ref{noMK}, I.e. $h(r_1)=h(r_3)=h'(r_1)=h'(r_2)=h'(r_3)=1$ and $h(r_2)=0$. We saw in Example \ref{ImpossibleExample} that the aggregation result when $h$ is applied is 20; thus, in order to be bag-compatible, $h^M(R'')$ must include a tuple $t''$ representing this tuple which matched the selection condition. Let $p_{t''}\in \NX$ be its provenance annotation, then $h(p_{t''})=1$. However, $h'^M(R'')$ is empty, since no aggregation result is equal to 20 in that case. Thus $h'(p_{t''})=0$. Similarly to the proof of prop. \ref{noMK}, observe that there exists no such polynomial $p_{t''}\in \NX$.
\end{proof}


%
%
%
%

\begin{proof} (Theorem \ref{SolveMK})

The proof is by induction on the structure of elements in $K^{M}$. We say that an expression $exp \in K^{M}$ has a nesting level of $0$ if for every expression $[c_1\otimes m_1=c_2\otimes m_2]$ appearing in $exp$, $c_1,c_2 \in K$ (and $m_1,m_2 \in M$); $exp$ has a nesting level $n$ if each such $c_1,c_2$ are of nesting level $n-1$ or less. For nesting level of $0$, axiom (*) above allows us to replace $[c_1=c_2]$ with  $1_{K}$ or $0_{\sK}$.  Now, assume that the theorem holds for expressions with nesting level $n-1$ or less, and let $exp$ be of nesting level $n$. Then for each sub-expression $[c_1\otimes m_1=c_2\otimes m_2]$, we can replace, by the induction hypothesis (and using the axioms above), $c_1,c_2$ with elements of $K$ and then apply axiom (*) to replace the equality expression with an element of $K$. We can repeat for every equality sub-expression of $exp$, obtaining an element of $K$.
\end{proof}

\paragraph*{Commutation with homomorphism for the extended semantics} We next prove that the semantics proposed for nested aggregation queries (in Section 4) satisfies commutation with homomorphisms:

\begin{proof}
The proof is by induction on the query structure. For each operation we consider two cases: (I) applying the homomorphism $\hRel{h}$ \emph{after} applying the operation; (II) applying it \emph{before} the operation. Both cases must yield equal results.

In what follows we use the same notations, $R,R_1,R_2$ and so on, as used in the definition of the extended semantics.
\paragraph*{Union}
First, consider case (I), where the union is applied first. According to the defined semantics, the result of $R_1\cup R_2$ is a $(M,K^M)$-relation such that for every tuple $t$ in its support it holds that:
\begin{enumerate}
\item $t$ is defined (only) on the attributes in $U$.
\item $t\in \supp(R_1)\cup\supp(R_2)$
\item $(R_1\cup R_2)(t) = \sum_{t' \in supp(R_{1})}R_{1}(t')\cdot\!_{\sK} \prod_{u \in U}[t'(u)=t(u)]+_{\sK} \sum_{t' \in supp(R_{2})}R_{2}(t') \cdot\!_{\sK}  \prod_{u \in U}[t'(u)=t(u)]$.
\end{enumerate}
Then, applying $\hRel{h}$ on $R_1\cup R_2$ has the following effect on that $t$.
\begin{enumerate}
\item For every value in $t$ from $K^M\otimes M$, its value changes from $\Sigma k_i\otimes m_i$ to $h^M\left(\Sigma k_i\otimes m_i\right)=\Sigma h(k_i)\otimes m_i$.
\item The provenance annotation of $t$ is changed according to the axioms of the homomorphism lifting, to \\* $\hRel{h}\left((R_1\cup R_2)\right)(t) = $\\*$\sum_{t' \in supp(R_{1})}h^M(R_{1}(t'))\cdot \prod_{u \in U}[h^M(t'(u))=h^M(t(u))]+ \sum_{t' \in supp(R_{2})}h^M(R_{2}(t')) \cdot \prod_{u \in U}[h^M(t'(u))=h^M(t(u))]$
\end{enumerate}

Now, for case (II), let us apply $\hRel{h}$ first, on both $R_1$ and $R_2$. This would affect only the provenance annotations of tuples within this relations, causing perhaps to the deletion of some tuples, and the values from $K^M\otimes M$, which change in the same manner as described in item (1) above. Let us compute the result of $\hRel{h}(R_1)\cup \hRel{h}(R_2)$. Every tuple $t$ in the support of the obtained relation is such that:
\begin{enumerate}
\item $t$ is defined (only) on the attributes in $U$.
\item $t\in \supp(R_1)\cup\supp(R_2)$ and it holds that either \\* $h(R_1(t))\neq 0_{\!_{K'}}$ or $h(R_2(t))\neq 0_{\!_{K'}}$
\item $\left(\hRel{h}(R_1)\cup \hRel{h}(R_2)\right)(t) = \sum_{t' \in supp(R_{1})}h^M(R_{1}(t'))\cdot \prod_{u \in U}[h^M(t'(u))=h^M(t(u))]+_{\sK} \sum_{t' \in supp(R_{2})}h^M(R_{2}(t')) \cdot \prod_{u \in U}[h^M(t'(u))=h^M(t(u))]$.
\end{enumerate}

We next verify that the results are indeed equal. for every tuple $t$ there are several options: it can be in the support of $R_1$, $R_2$, neither or both; and $h$ can set its provenance to 0 in none of them, one of them or both. Any tuple $t$ which is not in $\supp(R_1)$ or $\supp(R_2)$ clearly does not affect the result. Any tuple who is at least in one of them, will be annotated in case (I) with a sum of each annotation of each tuple $t'$ in $\supp(R_1)\cup\supp(R_2)$, multiplied by tokens that equate each value from $k^M\otimes M$ in $t$ to the value of the same attribute in $t'$. In the worst case, where all the values are from $K^M\otimes M$, we do not know which tuples are equal and thus we compare each pair on each attribute. Then applying $h$ might cause some of the original tuple annotation, hence some of the summands in the provenance of $t$ to become $0$. In case (II) those tuples for which $h$ sets the annotations to 0 are removed from $R_1$, $R_2$ or both, and thus they do not appear in the sum to begin with. For the tuples that remain it is clear that the obtained annotations and $K^M\otimes M$ are the same in both case (I) and case (II).

One thing to note here is that different tuples in a relation, $R_1$ for instance, may be equated after applying $\hRel{h}$. This is true, for instance, when we have two tuples which differ only by some aggregation result, but after applying $h$ those results turn out to be the same. This works well with the homomorphism commutation as well, because it is easy to see that in both cases the tuples will be equal, in case (I) after applying $h$ on the union result and in case (II) before the union is applied.

The proof for projection is very similar to the one for union, thus it is not repeated here.

\paragraph*{Selection}
According to the algebra definition, to get $\left(\sigma_{P}(R)\right)(t)$ we simply multiply the annotation of each tuple $t\in\supp(R)$ by an expression equating the value of the relevant attribute $u$ in $t$ to some value $m$ (embedded into $K^M\otimes M$ using $\iota$).

In case (I) the provenance of some tuple $t$ in the support of $\sigma_{P}(R)$ might be $R(t)\!\cdot_{\sK}\!\left[\left(\Sigma k_i\otimes m_i\right) = 1_{\sK}\!\otimes \!m\right]$, which would become, after applying $h$, \\* $h(R(t))\!\cdot_{\sK}\!\left[\left(\Sigma h(k_i)\otimes m_i\right) = 1_{\!_{K'}}\!\otimes\! m\right]$.

In case (II) the annotation of tuple $t$ will become $h(R(t))$ after applying $\hRel{h}$, and $t(u)$ would become $\Sigma h(k_i)\otimes m_i$. Thus after applying the selection, we would get the same result, $h(R(t))\!\cdot_{\sK}\!\left[\left(\Sigma h(k_i)\otimes m_i\right) = 1_{\!_{K'}}\!\otimes\! m\right]$.

\paragraph*{Aggregation}
In case (I), we first apply the GB operation $GB_{U',u}R$ and obtain a relation such that for every tuple $t$ in its support it holds that:
\begin{enumerate}
\item $t$ is defined (only) on the attributes in $U\cup\{u\}$.
\item There exists some tuple $t'\in\supp(R)$ such that for every attribute $u'\in U'$, $t(u')=t'(u')$.
\item $t(u) = \Sigma_{t'\in\supp(R)} \left(R(t')\!\cdot_{\sK}\!\prod_{u'\in U'}\left[t(u') = t'(u')\right]\right) \!*_{\sKM} t'(u)$
\item $GB_{U',u}R(t) = \delta(\Sigma_{t'\in\supp(R)} R(t')$\\* $\cdot_{\sK}\!\prod_{u'\in U'}\left[t(u') = t'(u')\right])$.
\end{enumerate}
Now, after applying $\hRel{h}$ on the result, the effect on such tuple $t$ would be:
\begin{enumerate}
\item $t(u) = \Sigma_{t'\in\supp(R)} (h^M(R(t'))$ \\* $\cdot_{\sK}\!\prod_{u'\in U'}\left[h^M(t(u')) = h^M((t'(u'))\right]) \!*_{\sKM} h^M(t'(u))$
\item $\hRel{h}(GB_{U',u}R)(t) = \delta(\Sigma_{t'\in\supp(R)} h^M(R(t'))$ \\* $\cdot_{\sK}\!\prod_{u'\in U'}\left[h^M(t(u')) = h^M(t'(u'))\right])$.
\end{enumerate}

In case (II), we first apply $\hRel{h}$, which affects the tuple provenances and the values from $K^M\otimes M$. Then aggregation is applied on the result. Each tuple $t$ in \\*$\supp(GB_{U',u}\hRel{h}(R))$ is such that:
\begin{enumerate}
\item $t$ is defined (only) on the attributes in $U\cup\{u\}$.
\item There exists some tuple $t'\in\supp(\hRel{h}(R))$ such that for every attribute $u'\in U'$, $t(u')=t'(u')$.
\item $t(u) = \Sigma_{t'\in\supp(R)} (h^M(R(t'))$ \\* $\cdot_{\sK}\!\prod_{u'\in U'}\left[t(u') = h^M(t'(u'))\right]) *_{\sKM} h^M(t'(u))$
\item $GB_{U',u}\hRel{h}(R)(t) = \delta(\Sigma_{t'\in\supp(R)} h^M(R(t'))$\\*$\cdot_{\sK}\!\prod_{u'\in U'}\left[t(u') = h^M(t'(u'))\right])$.
\end{enumerate}

We now verify that the results in both cases are indeed equal. In the first case, according to the definition, every tuple $t$ in $\supp(R)$ is forming the basis of a group, which conditionally may contain every tuple in $R$ (using equation expressions to verify that each tuple is indeed in that group only if its restriction to $U'$ is equal to the restriction of $t$ to $U'$).
When we apply $h$, some tuple annotations may be set to $0$, and thus their corresponding summands (in the group provenances and aggregation results) are set to $0$, and do not affect the result. In case (II) some tuples may be removed by $h$ from the relation even before the aggregation is performed. This has a similar effect to setting their corresponding summands to 0 as in case (I). There is a slight difference here: If $\supp(R)$ was of size $n$, so will be the size of the support of $GB_{U',u}R$, maybe even after applying $\hRel{h}$ on it; However, $\supp(\hRel{h}(R))$ may be of size $m < n$, and thus so will be $GB_{U',u}\hRel{h}(R)$. The reason is that as long as we cannot evaluate equation values, we have to allow for $n$ different groups (as many as there are in the support), thus if there are ``actually" less, when we move to a $K'^M\otimes M$ where equations may be evaluated, we may get the same tuple representing the group duplicated the number of times as the number of its group members. This is acceptable, since duplicates are ignored. However if we apply $h$ first, we may know that there cannot be $n$ groups to begin with, since some of the tuples are deleted. The effect of this will only be less group duplicates. The commutation of the AGG operation with homomorphism follows from the above proof as well.

This concludes the proof.
\end{proof}

\begin{proof} (Proposition \ref{Neg})

First, note that by the homomorphism commutation, we can check the equivalence of using both expressions on $h^{\widehat{B}}(R)$ and $h^{\widehat{B}}(S)$, i.e. verify that \\* $((\Pi_{a_1...a_n}\{\left(GB_{\{a_1,...a_n\},b} (h^{\widehat{B}}(R) \times \bot_b \cup h^{\widehat{B}}(S) \times \top_b)\right)$\\*$\bowtie (h^{\widehat{B}}(R) \times \bot_b)\})(t)) = [h^{\widehat{B}}(S)(t) \otimes \top = 0] \!\cdot\!_{\!_{K'}}\!h^{\widehat{B}}(R)(t)$. Note that since $K' \otimes \widehat{B}$ is isomorphic via an isomorphism $I$  to $\widehat{B}$, instead of checking equality in $K' \otimes \widehat{B}$ we can check for equality after application of the isomorphism (in $\widehat{B}$); in particular this allows us to interpret the equalities and replace them with $1_{\sK}$ or $0_{\sK}$.

Now let us follow the operation of difference encoded by aggregation step-by-step. Let $R,S \in \bbD^U\rightarrow K'^{\widehat{B}}$; let \\* $\supp(h^{\widehat{B}}(R)) = \{r_1,...,r_n\}$ and $\supp(h^{\widehat{B}}(S))=\{s_1,...,s_m\}$. Then $\supp(h^{\widehat{B}}(R)\times \bot_b) = \{r'_i \mid r'_i(b)=\bot\wedge\exists r_i\in \supp(h^{\widehat{B}}(R))$\\*$\forall u\in U~~ r'_i(u)=r_i(u)\}$; since $\times$ is equivalent to join with no attribute equations, the provenance $\left(h^{\widehat{B}}(R)\times \bot_b\right)(t)$ is $h^{\widehat{B}}(R)(t)\!\cdot_{\!_{K'}}\bot_b(t)=h^{\widehat{B}}(R)(t)$ for any $t\in \supp(h^{\widehat{B}}(R)\times \bot_b)$, and $0_{\!_{K'}}$ otherwise; similarly for $h^{\widehat{B}}(S)\times \top_b$. Now, note that the support sets of both relations are mutually exclusive, and that all the values in the tuples in the support of both $h^{\widehat{B}}(R)$ and $h^{\widehat{B}}(S)\times \top_b$ are from $\bbD$. Thus, it is easy to see that $\left(h^{\widehat{B}}(R)\times \bot_b\right) \cup \left(h^{\widehat{B}}(S)\times \top_b \right)(t)$ is $h^{\widehat{B}}(R)(t|_{U})$ for $t\in \supp(h^{\widehat{B}}(R)\times \bot_b)$, $h^{\widehat{B}}(S)(t|_{U})$ for $t\in\supp(h^{\widehat{B}}(S)\times \top_b)$, and $0_{\!_{K'}}$ otherwise. Now we apply group-by on the $b$ attribute. There are four possible classes of tuples:
\begin{enumerate}
\item For every $r'_i$ such that $r_i=s_j\in h^{\widehat{B}}(R)\cap h^{\widehat{B}}(S)$, we would get a tuple $r''_i$ in the support of the result, such that $r''_i|_{U}=r'_i|_{U}=r_i=s_j=s'_j|_{U}$, and $r''_i(b) = \left(h^{\widehat{B}}(R)\times \bot_b\right) (r'_i)\otimes \bot +_{K'\otimes\widehat{B}} h^{\widehat{B}}(S)\times \top_b (s'_i)\otimes \top = h^{\widehat{B}}(S)(r_i)\otimes \top$. The provenance of this tuple would be $\delta(h^{\widehat{B}}(R)(r_i)+_{\!_{K'^{\widehat{B}}}}h^{\widehat{B}}(S)(r_i))$.
\item For every $r'_i$ such that $r_i\in h^{\widehat{B}}(R) \backslash h^{\widehat{B}}(S)$, there is a tuple $r''_i$ in the group-by result such that $r''_i(b)= \left(h^{\widehat{B}}(R)\times \bot_b (r'_i)\right)\otimes \bot = 0_{\!_{K'\otimes\widehat{B}}}$, and where the provenance of $r''_i$ is $\delta(h^{\widehat{B}}(R)(r_i))$.
\item For every $s'_i$ such that $s_i\in h^{\widehat{B}}(S)\backslash h^{\widehat{B}}(R)$, there is a tuple $s''_i$ in the group-by result such that $s''_i(b)= \left(h^{\widehat{B}}(S)\times \top_b (s'_i)\right)\otimes \top = h^{\widehat{B}}(S)(s_i)\otimes \top$, and where the provenance of $s''_i$ is $\delta(h^{\widehat{B}}(S)(s_i))$.
\item Every other tuple has provenance $0_{\!_{K'}}$, i.e. it is not in the support of the aggregation result.
\end{enumerate}
Now we need to perform a join of the group-by result and $h^{\widehat{B}}(R)\times \bot_b$. For that we will first rename the attributes of the aggregation result to $U_2\cup \{b'\}$. For each tuple $t$ in the aggregation result such that $t|_{U}\in h^{\widehat{B}}(R)$, i.e. tuples of cases (1) and (2) above, there will be a unique corresponding tuple in the join result (since all these tuples have unique $U$ values, they will join with a unique tuple in $h^{\widehat{B}}(R)\times \bot_b$). The obtained provenance of the join of tuples from case (1), would be $\delta(h^{\widehat{B}}(R)(r_i)+h^{\widehat{B}}(S)(r_i))\!\cdot\!h^{\widehat{B}}(R)(r_i)\!\cdot\![h^{\widehat{B}}(S)(r_i)\otimes \top= 0]$. However, $\delta(h^{\widehat{B}}(R)(r_i)+h^{\widehat{B}}(S)(r_i))$ is redundant here: if $h^{\widehat{B}}(S)(r_i)\neq 0_{\!_{K'}}$ or $h^{\widehat{B}}(R)(r_i)=0_{\!_{K'}}$, the provenance is $0_{\!_{K'}}$; otherwise, $\delta(h^{\widehat{B}}(R)(r_i)+h^{\widehat{B}}(S)(r_i))=\delta(h^{\widehat{B}}(R)(r_i))=1_{\!_{K'}}$. As for the join of tuples from case (2), the case is simpler: the provenance is $\delta(h^{\widehat{B}}(R)(r_i))\!\cdot\!h^{\widehat{B}}(R)(r_i)\!\cdot\![0 = 0] = \delta(h^{\widehat{B}}(R)(r_i))\!\cdot\!h^{\widehat{B}}(R)(r_i)$, and again $\delta(h^{\widehat{B}}(R)(r_i))$ has no effect. For cases (3) and (4) the provenance is $0_{\!_{K'}}$. This matches the result of $[(h^{\widehat{B}}(S))(t) \otimes \top = 0] \!\cdot\!_{\!_{K'}}\!(h^{\widehat{B}}(R))(t)$.
\end{proof}

\end{document}